\newtheorem{theorem}{Theorem}
\newtheorem{lemma}{Lemma}
\newtheorem{proposition}{Proposition}
\newtheorem{definition}{Definition}
\newtcolorbox[auto counter]{mybox}[2][]{
	enhanced,
	breakable,
	colback=blue!5!white,
	colframe=blue!75!black,
	fonttitle=\bfseries,
	title=Box \thetcbcounter: #2,#1
}
\newcommand{\lket}[1]{\vert #1 \rangle\!\rangle}
\newcommand{\lbra}[1]{\langle\!\langle #1 \vert}
\newcommand{\lbraket}[2]{\langle\!\langle #1 \vert #2 \rangle\!\rangle}
\begin{document}

\title{Scalable fast benchmarking for individual quantum gates with local twirling}
\author{Yihong Zhang}
\affiliation{Center for Quantum Information, Institute for Interdisciplinary Information Sciences, Tsinghua University, Beijing 100084, China}
\author{Wenjun Yu}
\affiliation{Center for Quantum Information, Institute for Interdisciplinary Information Sciences, Tsinghua University, Beijing 100084, China}
\author{Pei Zeng}
\affiliation{Center for Quantum Information, Institute for Interdisciplinary Information Sciences, Tsinghua University, Beijing 100084, China}
\author{Guoding Liu}
\affiliation{Center for Quantum Information, Institute for Interdisciplinary Information Sciences, Tsinghua University, Beijing 100084, China}
\author{Xiongfeng Ma}
\email{xma@tsinghua.edu.cn}
\affiliation{Center for Quantum Information, Institute for Interdisciplinary Information Sciences, Tsinghua University, Beijing 100084, China}

\begin{abstract}
With the development of controllable quantum systems, fast and practical characterization for multi-qubit gates is essential for building high-fidelity quantum computing devices. The usual way to fulfill this requirement via randomized benchmarking asks for the complicated implementation of numerous multi-qubit twirling gates. How to efficiently and reliably estimate the fidelity of a quantum process remains an open problem. In this work, we propose a character-cycle benchmarking protocol and a character-average benchmarking protocol only using local twirling gates to estimate the process fidelity of an individual multi-qubit operation. Our protocols can characterize a large class of quantum gates including and beyond the Clifford group via the local gauge transformation, which forms a universal gate set for quantum computing. We numerically demonstrate our protocols for a non-Clifford gate --- controlled-$(TX)$ and a Clifford gate --- five-qubit quantum error-correcting encoding circuit. The numerical results show that our protocols can efficiently and reliably characterize the gate process fidelities. Compared with the cross-entropy benchmarking, the simulation results show that the character-average benchmarking achieves three orders of magnitude improvements in terms of sampling complexity.
\end{abstract}


\maketitle

\section{Introduction}
Characterizing a quantum process has great importance in both the fundamental study and practical application of quantum information science. With the recent advent of noisy intermediate-scale quantum computing \cite{Preskill2018NISQ}, benchmarking quantum operations is critical for quantum control \cite{RevModPhys.76.1037, Chu2002} as it provides an indicator to assess the experimental devices. It is essential for the development of high-precision quantum information processing instruments. Accurate benchmarking can reliably characterize the noise levels of the quantum operations and plays a critical role in promoting fault-tolerant universal quantum computing \cite{RevModPhys.87.307, Campbell2017}. In practice, we need to evaluate the performance of a quantum circuit to verify whether a quantum algorithm or an error-correcting code is properly implemented in a quantum system.

Numerous approaches have been proposed to characterize quantum processes. Conventional methods like quantum process tomography \cite{Chuang1997tomo} provide a full description of a channel. However, these methods are impractical for large-scale quantum systems as the required experimental resources increase exponentially with the number of qubits, even with state-of-the-art techniques such as compressed sensing \cite{Gross2010prlCompressedSensing, Flammia2012CompressedSensing}. Direct fidelity estimation \cite{Flammia2011prlDirectFidelity} tackles the scaling problem and characterizes the quantum process in terms of average fidelity. Unfortunately, the result inevitably contains extra errors from the state preparation and measurement (SPAM) and hence often over-estimates the noise levels. In reality, SPAM errors usually grow rapidly with the system size so that it is hard to characterize the quantum process accurately for large-scale quantum systems with direct fidelity estimation.

Randomized benchmarking (RB) and variants there of are proposed to avoid both the scaling problem and SPAM errors at the same time \cite{Emerson2005, Emerson2007science, Knill2008pra, Emerson2011prl, Emerson2012pra,10.1145/3408039,PRXQuantum.2.010322}. Standard RB estimates the average error rate of a specific gate set under the assumption of gate-independent or weakly-dependent noise. The gate set is normally chosen to be the Clifford group and has been widely implemented in experiments \cite{Chow2009prlRB, Gaebler2012prlRB, Laflamme2012prl, Barends2014surface, Lu2015prl, Ballance2016prlRB, Gaebler2016RBion, proctor2021scalable}. Otherwise, in order to characterize a specific Clifford gate, a variant called interleaved RB was proposed and utilizes random Clifford gates interleaved with the target gate \cite{Magesan2012interleavedRB}. The random gates here are considered as the twirling gates for reference, whose fidelity should be measured separately to infer the fidelity of the target gate. The interleaved RB method is efficient and scalable in principle. However, it suffers from two severe problems in practice. The first is the compiling overhead for twirling operations. In reality, any operation needs to be compiled to one- and two-qubit gates native to the quantum system. Note that twirling gates are randomly picked from a gate set, like the Clifford group. In general, the average number of native gates used for compiling a single sample grows dramatically with the system size.
The second is the gate-dependent noises introduced by twirling gates. Note that different twirling operations in a gate group can vary a lot in the depths of compiled circuits. For example, a local operation like a Pauli gate can be implemented by a single layer circuit,
while a complex entangling operation requires a deep circuit with massive native gates. The strong gate-dependent noises caused by the uneven compilations may bring inaccuracy to the fidelity estimation \cite{Wallman2018quantum, merkel2021RB}. As a result, the compiling overhead and gate-dependent noises introduced by twirling gates limit the scalability of the RB method in experiments.

Recently, there are several variants of RB attempting to address the two compiling problems. For example, character benchmarking employs the character theory so that the quality parameters can be extracted from the local twirling operations \cite{Helsen2019characterRB}.
Unfortunately, for the gate groups with exponentially increasing number of quality parameters, this method requires an exponential amount of SPAM settings. 
Besides, character benchmarking is still caught in the aforementioned compiling problems for the final inverse gate and can be hardly applied for a generic multi-qubit quantum operation. 
Another inspiring attempt called cycle benchmarking aims to estimate the fidelity of the target gate by interleaving it with the Pauli gate set. However, it is restricted to the Clifford gates~\cite{Erhard2019cycleRB}. 
Also, cycle benchmarking requires numerous repetitions for the gates with large cyclic numbers, which is common for multi-qubit gates. Hence, this method cannot efficiently benchmark a wide class of gates.
The cross-entropy benchmarking (XEB) characterize the fidelity of a generic quantum gate reflected by linear cross-entropy using local Clifford gate twirling~\cite{XEB2019google}.
However, the Haar measure assumption in XEB may lead to poor fidelity estimation when the size of the target gate is large. How to efficiently and reliably estimate the fidelity of a large-scale quantum process from a universal gate set remains an open problem.

In this work, we propose two scalable and efficient protocols to tackle the compiling problems as well as the SPAM error issues simultaneously, which we call character-cycle benchmarking (CCB) and character-average benchmarking (CAB). The protocols utilize local twirling to reliably characterize the fidelity of an individual multi-qubit quantum operation. We employ the Pauli and the local Clifford gates for twirling and extend the applicable gate set to non-Clifford gates via the local gauge transformation. The efficiency and reliability of the protocols are shown by rigorous mathematical derivations and by numerical simulations under realistic physical assumptions.

\section{Character cycle benchmarking}
Denote the quantum operation of a unitary matrix $U$ acting on an $n$-qubit quantum state, $\rho$, by the calligraphic letter $\mathcal{U}$, i.e., $\mathcal{U}(\rho) = U\rho U^{-1}$, and the noisy implementation by $\tilde{\mathcal{U}}$. One can evaluate the quality of $\tilde{\mathcal{U}}$ by the process fidelity of the noise channel $\Lambda = \mathcal{U}^{-1} \circ \tilde{\mathcal{U}}$,
\begin{equation}
F(\Lambda) = \frac{1}{d^2} \sum_{i=0}^{4^n-1} \lambda_i,
\end{equation}
where $\lambda_i = d^{-1} \Tr(P_i \Lambda(P_i))$ is the \textit{Pauli fidelity} associated with the Pauli operator $P_i\in\textsf{P}_n$ and $d = 2^n$ is the dimension of the quantum system. Here, $\textsf{P}_n$ denotes the $n$-qubit Pauli group, containing the tensor product of the identity operation $I$ and three Pauli matrices $X, Y, Z$.

In practice, it is costly to figure out all the parameters $\lambda_0, \lambda_1, \cdots, \lambda_{4^n-1}$ since their number increases exponentially with $n$. Instead, one can estimate the process fidelity via repeatable sampling of $\lambda_0, \lambda_1, \cdots, \lambda_{4^n-1}$. Concretely, one samples a sufficient number of Pauli operators $\{P_j\}$ and averaging the corresponding $\{\lambda_j\}$,
\begin{equation}\label{eq:F_sample}
F(\Lambda) \approx \frac{1}{M} \sum_{\{P_j\}} \lambda_j,
\end{equation}
where $M$ is the number of samples and the summation takes over the sample set.

Here, we propose a CCB protocol which employs the key techniques of the cycle benchmarking \cite{Erhard2019cycleRB} and character benchmarking \cite{Helsen2019characterRB}. Specifically, we extract different Pauli fidelities through applying specific initial states and measurements and utilize the character theory to fully separate the SPAM errors.
The schematic circuit of the CCB protocol is shown in Fig.~\ref{fig:circuit}(a). Let us start with the Clifford case, where the target gate belongs to the $n$-qubit Clifford group $\textsf{C}_n$. The inner random gate layer consists of the target gate $\mathcal{U}$ and its inverse gate $\mathcal{U}^{-1}$ interleaved with two random Pauli gate layers. The Pauli gates are the reference gates employed to perform local Pauli twirling over the generic quantum noise channel $\Lambda$ and turns it into
\begin{equation}
\begin{split}
\Lambda_\textsf{P} &= \frac{1}{4^n}\sum_{P_j \in \textsf{P}_n} \mathcal{P}_j^{-1} \circ \Lambda \circ \mathcal{P}_j,
\end{split}
\end{equation}
where $\Lambda$ contains the errors of Pauli gates and target gate $\mathcal{U}$; $\Lambda_\textsf{P}$ is a Pauli channel satisfying $\Lambda_\textsf{P}(\rho) = \sum_j p_j P_j\rho P_j$, and $p_j$ is the Pauli error rate related to $P_j$.

Note that the introduction of the inverse target gate $\mathcal{U}^{-1}$ is the major difference between CCB and cycle benchmarking. In cycle benchmarking, we need to repeat $\mathcal{U}$ for multiples of $l$ times, where $l$ is the cyclic number of $U$, i.e., $\mathcal{U}^l = I$. In general, $l$ can be quite large for a wide class of Clifford gates which is prohibitive for the experiments. For example, the five-qubit quantum error-correcting encoding circuit requires $l=124$. The CCB protocol improves the efficiency and application scope via substituting a single $\mathcal{U}^{-1}$ for multiple repetitions of $\mathcal{U}$ in cycle benchmarking. In many quantum platforms, such as superconducting quantum processors, the inverse gates of the native gates are also native. Typical examples include single-qubit gates, CZ, and iSWAP. Thus, the inverse gates of native gates are normally easy to implement. More generally, if $\mathcal{U}$ is composed of several native gates, the difficulty to implement $\mathcal{U}$ and $\mathcal{U}^{-1}$ is often the same. Based on this consideration, the introduction of $\mathcal{U}^{-1}$ does not increase the implementation difficulty of CCB in most cases.

In CCB, the randomization of Pauli gates in the inner gate layers will generate a composite channel $\mathcal{U}^{-1} \circ \Lambda_\textsf{P}^{(-)} \circ \mathcal{U} \circ \Lambda_\textsf{P}$, where $\Lambda_\textsf{P}$ and $\Lambda_\textsf{P}^{(-)}$ are the Pauli-twirled channels corresponding to gates $\mathcal{U}$ and $\mathcal{U}^{-1}$, respectively. Note that this composite channel is a Pauli channel for Clifford gate $\mathcal{U}$. The fidelity we aim to estimate in the CCB protocol is defined as the CCB fidelity,
\begin{equation}\label{eq:Fccb0}
F_{\mathrm{ccb}} = F(\sqrt{\mathcal{U}^{-1} \circ \Lambda_\textsf{P}^{(-)} \circ \mathcal{U} \circ \Lambda_\textsf{P}}),
\end{equation}
which contains the fidelities of $\mathcal{U}$ and $\mathcal{U}^{-1}$. For the case that the noise channel of $\mathcal{U}$ is the same as that of $\mathcal{U}^{-1}$, which is valid for most of the experimental platforms, the CCB fidelity is simplifies as
\begin{equation}\label{eq:Fccb}
F_{\mathrm{ccb}}(\Lambda) = F(\sqrt{\mathcal{U}^{-1} \circ \Lambda_\textsf{P} \circ \mathcal{U} \circ \Lambda_\textsf{P}}).
\end{equation}
Eq.~\eqref{eq:Fccb} is a lower bound of the process fidelity $F(\Lambda)$ in terms of the expectation value, as proved in Appendix \ref{append:CCBfidelity}. In the following context, we will employ Eq.~\eqref{eq:Fccb} as our CCB fidelity metric model and our arguments apply to the general model of Eq.~\eqref{eq:Fccb0} as well. The difference between $F_{\mathrm{ccb}}$ and $F$ is normally small since the physical realizations of the qubits in one experimental platform are similar and the qualities of these qubits will not differ too much. Note that if $\Lambda_\textsf{P}$ is a depolarizing channel, then $F_{\mathrm{ccb}}=F$. Thus, the CCB fidelity can be seen as a reliable metric for the noise channel $\Lambda$.

\begin{figure}[htbp!]
\centering
\includegraphics[width=1\textwidth]{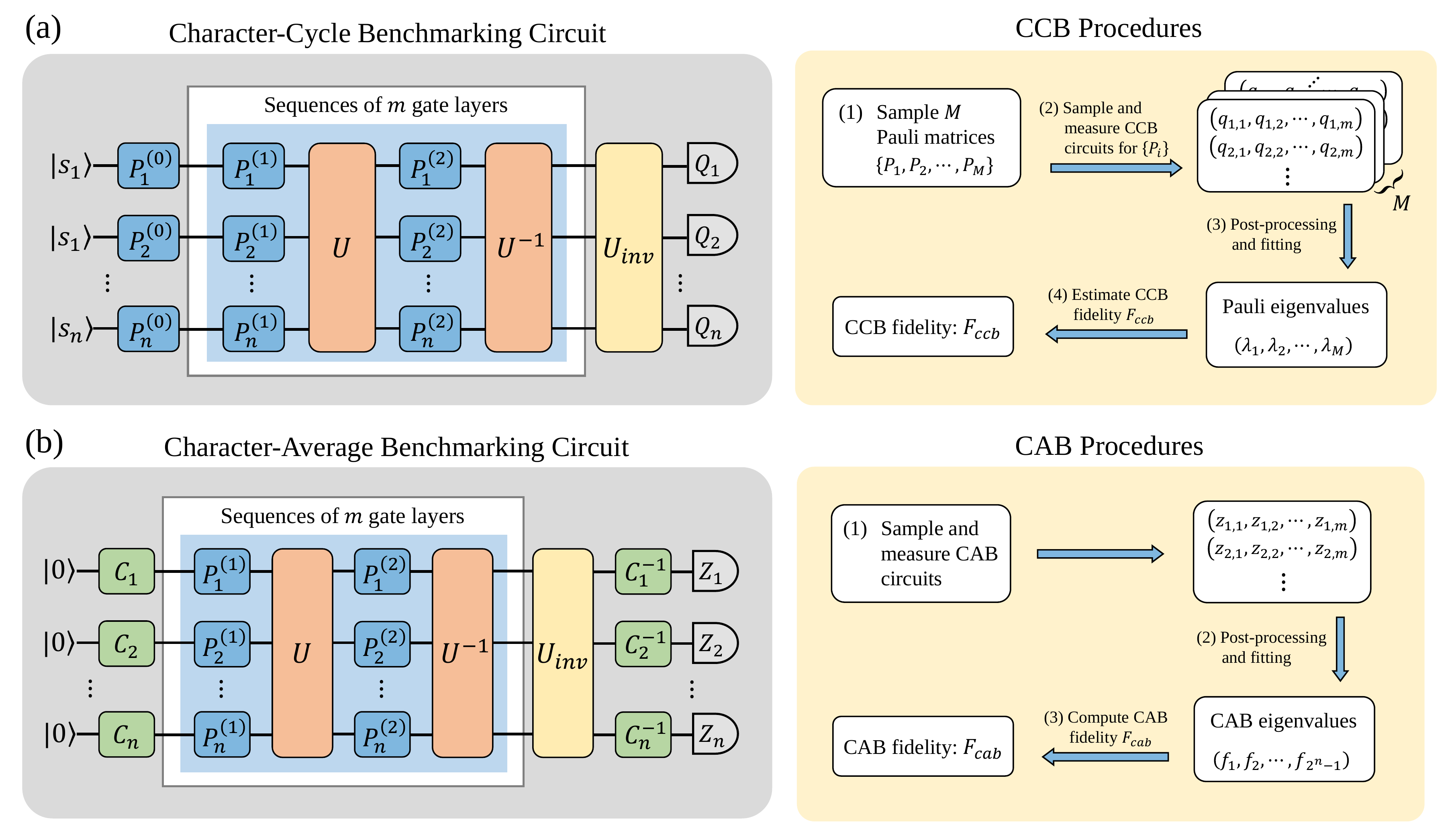}
\caption{Illustrations of circuit and procedures used in (a) CCB and (b) CAB protocols. The orange boxes represent the target gate $\mathcal{U}$ and its inverse gate $\mathcal{U}^{-1}$. The blue and green boxes represent the random Pauli gate and random local Clifford gate. The yellow boxes denote the inverse gate for the $m$ inner gate layers in the light blue box. Here, $P^{(i)}_k$ is a single-qubit Pauli gate on qubit $k$ and $P^{(i)} = P^{(i)}_1 \otimes \cdots P^{(i)}_n$ is a $n$-qubit Pauli gate.}
\label{fig:circuit}
\end{figure}

The procedure of the CCB protocol runs as follows:
\begin{enumerate}
    \item sample a Pauli operator $P_j$ and initialize state $\ket{s}$ such that $P_j\ket{s} = \ket{s}$;
    
    \item apply a gate sequence $\mathcal{S}_\mathrm{ccb}$ composed of a Pauli gate $\mathcal{P}^{(0)}$, $m$ inner gate layers denoted by $\mathcal{S}_m$, and inverse gate $\mathcal{S}_m^{-1}$;
    
    \item perform measurement $P_j$ and then calculate the $P_j$-weighted survival probability $f_j(m, \mathcal{S}_\mathrm{ccb}) = \chi_j(P^{(0)})\Tr(P_j \mathcal{S}_\mathrm{ccb}(\rho_s))$, where $\chi_j(P^{(0)}) = 1$ if $P_j$ commutes with $P^{(0)}$ and $-1$ otherwise;
    
    \item repeat steps (2)-(3) for several times for different $m$ and fit the $P_j$-weighted fidelity to $\hat{f}_j(m) = A_j\lambda_j^{2m}$;
    
    \item repeat steps (1)-(4) for several times and finally estimate the CCB fidelity as $F_{\mathrm{ccb}} = \mathrm{ave}_j \lambda_j$.
\end{enumerate}
Here, the estimated fidelity $F_\mathrm{ccb}$ includes the errors from the local reference gate set $\textsf{P}_n$. In order to remove these extra errors, one can employ the interleaved RB technique, by performing additional CCB with a target gate of identity $I$ to estimate the reference fidelity $F_\mathrm{ccb}^I$. Then, one can infer the fidelity of the target gate as $F_\mathrm{ccb} / F_\mathrm{ccb}^I$. In practice, the errors of local gates are often negligible and hence we focus on $F_\mathrm{ccb}$ in the following discussions.

Note that our inverse gate $\mathcal{S}_m^{-1}$ is a Pauli gate and hence will not introduce extra gate compiling overhead. As a contrast, character benchmarking for a single multi-qubit Clifford gate~\cite{Helsen2019characterRB} requires a global inverse gate and a complicated compiling process. This may cause strong gate-dependent errors and lead to inaccuracy for fidelity estimation, especially for multi-qubit quantum operations.
The CCB protocol maintains the local structure of reference and inverse gates and thus avoids the compiling problems.

In the CCB protocol, one needs to average Pauli fidelities $\lambda_j$ to estimate $F_{\mathrm{ccb}}$. The sampling complexity for the CCB protocol is given by the following theorem.
\begin{theorem}[informal version]\label{thm:CCB_sample}
For an $n$-qubit quantum noise channel, in order to estimate the CCB fidelity within the confidence interval $[\hat{F}_\mathrm{ccb} - \epsilon_M - \epsilon_b, \hat{F}_\mathrm{ccb} + \epsilon_M + \epsilon_b]$ with probability greater than $1-\delta$, one needs to sample $M$
Pauli fidelities where each Pauli fidelity is estimated via $K$ random sequences. 
The confidence probability of the estimation is given by,
\begin{equation}
    \mathrm{Pr}(|\hat{F}_\mathrm{ccb} - \bar{F}_\mathrm{ccb}| \leq \epsilon_M + \epsilon_b) \geq 1 - \delta,
\end{equation}
where $\epsilon_M \leq \mathcal{O}(\frac{-\log\delta}{M})$ and $\epsilon_b \leq \mathcal{O}(K^{-1})+\mathcal{O}((\frac{-\log\delta}{K})^{3/2})$.
\end{theorem}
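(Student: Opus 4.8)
\emph{Proof strategy.} The plan is to separate the total deviation $|\hat{F}_{\mathrm{ccb}} - \bar{F}_{\mathrm{ccb}}|$ into the two error sources that $\epsilon_M$ and $\epsilon_b$ are meant to represent: the error from sampling only $M$ of the $4^n-1$ nontrivial Pauli fidelities $\lambda_j$, and the error from estimating each sampled $\lambda_j$ from only $K$ random sequences. Let $\bar\lambda_j$ denote the true Pauli fidelities, so that $\bar{F}_{\mathrm{ccb}} = \mathrm{ave}_j\,\bar\lambda_j$ is the quantity of Eq.~\eqref{eq:Fccb}, let $\mathcal{J}$ be the random multiset of $M$ sampled labels, and let $\tilde{F}_{\mathrm{ccb}} = \mathrm{ave}_{j\in\mathcal{J}}\,\bar\lambda_j$ be the average of the \emph{exact} fidelities over the sampled set. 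First I would apply the triangle inequality
\begin{equation}
|\hat{F}_{\mathrm{ccb}} - \bar{F}_{\mathrm{ccb}}| \le |\tilde{F}_{\mathrm{ccb}} - \bar{F}_{\mathrm{ccb}}| + |\hat{F}_{\mathrm{ccb}} - \tilde{F}_{\mathrm{ccb}}|,
\end{equation}
and bound the first term by $\epsilon_M$ and the second by $\epsilon_b$.

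For $\epsilon_M$, note that $\tilde{F}_{\mathrm{ccb}}$ is an empirical mean of $M$ i.i.d.\ draws from $\{\bar\lambda_j\}$, each lying in $[-1,1]$ with expectation $\bar{F}_{\mathrm{ccb}}$. Hoeffding's inequality already yields an $\mathcal{O}(\sqrt{-\log\delta/M})$ bound; since for a gate of interest the Pauli fidelities cluster near $1$, their variance is small --- of the order of the infidelity --- so I would instead invoke a variance-sensitive concentration bound (Bernstein's inequality, or a multiplicative Chernoff bound applied to $\sum_j(1-\bar\lambda_j)$), which gives $\epsilon_M \le \mathcal{O}(-\log\delta/M)$ in that regime.

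For $\epsilon_b$, I would first condition on the high-probability event $\mathcal{G}$ that every empirical $P_j$-weighted survival probability $\hat f_j(m)$ --- an average of $K$ i.i.d.\ quantities in $[-1,1]$ --- obeys $|\hat f_j(m) - \bar f_j(m)| \le \eta$ with $\eta = \mathcal{O}(\sqrt{-\log\delta/K})$, simultaneously for all $j\in\mathcal{J}$ and all fit points $m$; a Hoeffding bound plus a union bound over the finitely many $m$ and the $M$ labels gives $\mathrm{Pr}(\mathcal{G}) \ge 1-\delta$, with the $\log M$ absorbed into the constant. On $\mathcal{G}$ I would Taylor-expand the fitted value $\hat\lambda_j$ --- a smooth function of $\{\hat f_j(m)\}_m$ through the single-exponential model $\hat f_j(m) = A_j\lambda_j^{2m}$ guaranteed by the Pauli-twirling argument --- around the true values, writing $\hat\lambda_j - \bar\lambda_j = L_j + Q_j + R_j$, with $L_j$ linear, $Q_j$ quadratic, and $R_j = \mathcal{O}(\eta^3)$ the remainder, in the fluctuations $\xi_m := \hat f_j(m) - \bar f_j(m)$. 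Averaging over $\mathcal{J}$: the $L_j$ are mean-zero and, for distinct $j$, built from independent sequences, so a Bernstein bound shows $|\mathrm{ave}_j L_j|$ is itself dominated by $\epsilon_M + \mathcal{O}(K^{-1})$; the expectation of $Q_j$ is a bounded coefficient times $\mathrm{Var}(\hat f_j) = \mathcal{O}(K^{-1})$, contributing an $\mathcal{O}(K^{-1})$ bias while its centered part is of still smaller order; and $|\mathrm{ave}_j R_j| \le \max_j|R_j| = \mathcal{O}(\eta^3) = \mathcal{O}((-\log\delta/K)^{3/2})$. Collecting the surviving contributions gives $\epsilon_b \le \mathcal{O}(K^{-1}) + \mathcal{O}((-\log\delta/K)^{3/2})$, and a union bound over the complement of $\mathcal{G}$ and the bad event for $\epsilon_M$ yields the stated confidence $1-\delta$.

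I expect the main obstacle to be precisely this propagation of errors through the nonlinear fit in the $\epsilon_b$ step. A naive worst-case bound would push the $\mathcal{O}(\sqrt{-\log\delta/K})$ fluctuations of the $\hat f_j(m)$ straight through the exponential fit and leave an $\mathcal{O}(\sqrt{-\log\delta/K})$ error in $\hat{F}_{\mathrm{ccb}}$, which is weaker than claimed; obtaining the stated form requires recognizing that the leading linear fluctuation has zero mean and averages down over the independently sampled Pauli labels, so that it can be folded into $\epsilon_M$, that the second-order term contributes only an $\mathcal{O}(K^{-1})$ bias, and that only the cubic and higher-order remainder survives --- at its crude worst-case rate $\mathcal{O}((-\log\delta/K)^{3/2})$, which is the origin of the unusual exponent. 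A secondary point to handle with care is that the single-exponential form of $\bar f_j(m)$, and hence the well-posedness of the fit, relies on the noise being effectively gate-independent after Pauli twirling; residual gate dependence would introduce model-mismatch terms beyond those bounded here, so the informal statement should be understood under that standard assumption.
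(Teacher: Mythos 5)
Your proposal is correct in its essentials and follows the same two-part decomposition as the paper: an intermediate quantity (your $\tilde{F}_{\mathrm{ccb}}$, the paper's $\bar{F}'_{\mathrm{ccb}}$) splits the error into a Pauli-sampling term controlled by Hoeffding and a per-fidelity fitting term controlled by a Taylor expansion of the estimator with Bernstein's inequality governing the fluctuations $\hat f_j(m)-\bar f_j(m)$; the quadratic term yields the $\mathcal{O}(K^{-1})$ bias and the cubic remainder at the Bernstein scale yields $\mathcal{O}((-\log\delta/K)^{3/2})$, exactly as in Lemma 4 and Theorem \ref{thm:M_number}. The genuine differences are these. First, the paper does not analyze a general least-squares fit over many values of $m$: it replaces the fit by an explicit two-point ratio estimator $\hat{\lambda}_j = (\hat f_j(m_2)/\hat f_j(m_1))^{1/(m_2-m_1)}$, which makes the Taylor expansion in $\delta_a,\delta_b$ completely explicit and gives concrete constants such as $\epsilon_b \le \tfrac{4}{K_1}\tfrac{1}{m_2-m_1}(\tfrac{1}{m_2-m_1}+1)+\mathcal{O}(\epsilon_1^3,\epsilon_2^3)$ under the assumption $1/2<\bar f_j(m_1)<1$; your multi-$m$ version is more general but would need more work to extract constants. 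Second, your treatment of the first-order term is actually more careful than the paper's: the paper bounds the \emph{bias} $|\mathbb{E}[\hat\lambda_j]-\bar\lambda_j|$, in which the linear term vanishes by $\mathbb{E}[\delta_a]=0$, and then uses that bound as a high-probability deviation bound in Eq.~\eqref{eq:F_confidence2}; pathwise the linear fluctuation is $\mathcal{O}(\sqrt{-\log\delta/K})$ and does not vanish, and your observation that it is mean-zero and averages down over the $M$ independently sampled Pauli labels (so it can be absorbed into $\epsilon_M$) is the missing step needed to make that part rigorous. Third, for $\epsilon_M$ the paper simply applies Hoeffding, which gives $\epsilon_M=\mathcal{O}(\sqrt{-\log\delta/M})$ rather than the $\mathcal{O}(-\log\delta/M)$ quoted in the informal statement; your suggestion to use a variance-sensitive (Bernstein/Chernoff) bound exploiting that the $\bar\lambda_j$ cluster near $1$ is what one would actually need to justify the stated rate, and is not present in the paper.
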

Here, the total number of samples, or sample complexity, depends on $M$ and $K$. If the number of random sequences for each Pauli fidelity is the same, then the sample complexity is simply given by $MK$. Theorem \ref{thm:CCB_sample} shows that the sample complexity only depends on fidelity precision $\epsilon_M, \epsilon_b$ and confidence level $\delta$. The independence on system size $n$ reflects the strong scalability of the CCB protocol. A more detailed description of the result is shown in Theorem \ref{thm:M_number}.

\section{Local gauge transformation}
Now, let us extend the applicable gates for the CCB protocol to non-Clifford gates. One can introduce local gauge transformation $L$ to the twirling gate set, $\textsf{P}_n \rightarrow L \textsf{P}_n L^{-1}$, where $L$ is an arbitrary local unitary operation $L \in \textsf{U}(2)^{\otimes n}$. Note that the transformed twirling gate set $L\textsf{P}_n L^{-1}$ is still local. Then, we can show that the applicable target gate set becomes $L \textsf{C}_n L^{-1}$, where $\textsf{C}_n$ is the $n$-qubit Clifford gate set.

To benchmark a gate $LUL^{-1}$ from gate group $L{\sf{C}}_nL^{-1}$, we insert local gates $L$ and $L^{-1}$ between the twirling gates and the target gates in the original CCB circuit, as shown in Fig.~\ref{fig:gaugefreedom}(a). Here, $L=\bigotimes_{i=1}^n L_i$, where $L_i$ can be an arbitrary single-qubit gate. In practice, the local gates are absorbed into twirling gates and target gates and do not need to be implemented individually as manifested in Fig.~\ref{fig:gaugefreedom}(b). The character gate $LP^{(0)}$ and the twirling gate $LP^{(1)}L^{-1}$ will be merged into a single gate in implementation as well. Details of the derivation are shown in Appendix \ref{app:LocalGauge}.

\begin{figure}[htbp!]
	\centering
	\includegraphics[width=1.0\textwidth]{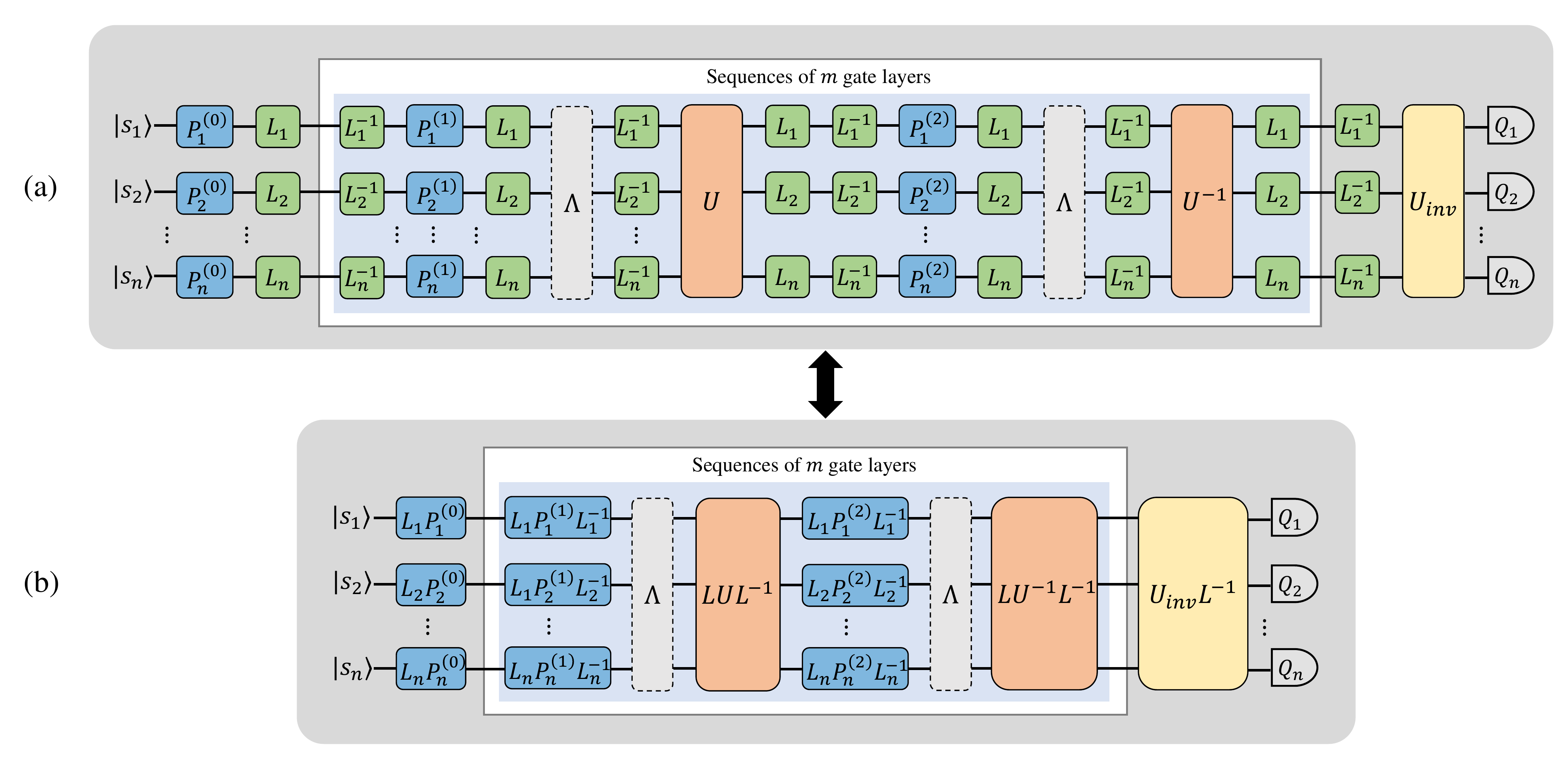}
	\caption{Illustrations of the noisy CCB circuit with local gauge transformation. For simplicity, we show the case that the local gates are noiseless. The grey dashed boxes denote the noise channel $\Lambda$. The orange boxes represent the target gate $\mathcal{U}$ and its inverse gate $\mathcal{U}^{-1}$. The blue boxes represent the random Pauli gates. The green boxes represent the inserted local gates $L$ and $L^{-1}$, where $L=\bigotimes_{i=1}^n L_i$. The yellow box denotes the inverse gate for the $m$ inner gate layers in the light blue boxes. In practice, we implement gates in circuit (b) while absorbing local gates $L$ and $L^{-1}$ into twirling gates and target gates. Here, the target gate after gauge transformation becomes $LUL^{-1}$. Note that Circuit (a) is equivalent to a CCB circuit with target gate $U$ and noise channel $\mathcal{L}^{-1}\Lambda \mathcal{L}$. Thus, it can be implemented to estimate $F_{\mathrm{ccb}}(\mathcal{L}^{-1}\Lambda \mathcal{L})$, which is close to $F(\Lambda)$.}
	\label{fig:gaugefreedom}
\end{figure}



As shown in Fig.~\ref{fig:gaugefreedom}, the CCB circuit with local gauge transformation $L$ and noise channel $\Lambda$ is equivalent to the original CCB circuit with noise channel $\mathcal{L}^{-1}\Lambda \mathcal{L}$. Thus, one can obtain $F_{\mathrm{ccb}}(\mathcal{L}^{-1}\Lambda \mathcal{L})$, which is close to the process fidelity $F(\mathcal{L}^{-1}\Lambda \mathcal{L})$. As process fidelity is gauge-invariant, that is, $F(\mathcal{L}^{-1}\Lambda \mathcal{L}) = F(\Lambda)$, one can estimate the process fidelity of $\Lambda$ as the performance indicator of gate $LUL^{-1}$.

Now, let us check out what kinds of quantum gates belong to the set $\mathcal{S} = \{LUL^{-1}|L\in \textsf{U}(2)^{\otimes n}, U\in \textsf{C}_n\}$.

First, notice that if a unitary $U\in \mathcal{S}$, then for any $L\in \textsf{U}(2)^{\otimes n}$, $LUL^{-1}\in \mathcal{S}$. As any unitary is generated by a Hamiltonian, that is $U = e^{i H}$ where $H$ is hermitian, one can conclude that if $e^{i H}\in \mathcal{S}$, then for any $L\in \textsf{U}(2)^{\otimes n}$, $Le^{i H}L^{-1} = e^{i LHL^{-1}}\in \mathcal{S}$.

Take a step forward, if a controlled-$e^{i H}\in \mathcal{S}$, then through local gauge transformation $I\otimes L$, $(I\otimes L) \text{controlled-}e^{i H} (I\otimes L)^{-1} = \text{controlled-}e^{i LHL^{-1}}\in \mathcal{S}$. The arguments also apply to the case of multi-controlled gates.

The two observations inspire us to first represent Clifford gates in the form of $e^{iH}$ or multiple controlled-$e^{iH}$, then replace $H$ with $LHL^{-1}$ to find other gates in $\mathcal{S}$. Take $CZ$ as an example. $CZ = e^{i\pi \ketbra{11}}$ = controlled-$e^{-i\frac{\pi}{2} Z}$. Through local gauge transformation, one can transform $\ket{11}$ to any product state $\ket{\psi\phi}$ and transform $e^{-i\frac{\pi}{2} Z}$ to any $\pi$-rotation $e^{-i\frac{\pi}{2} \Vec{\sigma}\cdot \Vec{\theta}}$, where $\Vec{\sigma} = (X,Y,Z)$ and $\Vec{\theta}$ is a unit vector. Thus, for any two-qubit product state $\ket{\psi\phi}$, we have $e^{-i\pi \ketbra{\psi\phi}}\in \mathcal{S}$. Also, any controlled-$\pi$ rotation, such as controlled-$H$ and controlled-$TX$, belongs to $\mathcal{S}$.

Reversely, controlled-$S$ = controlled-$e^{-i\frac{\pi}{4} Z}$ is a controlled-$\frac{\pi}{2}$ rotation. As any controlled-$\frac{\pi}{2}$ rotation is not Clifford, one can conclude that controlled-$S$ does not belong to $\mathcal{S}$. Similarly, Tofolli = controlled-controlled-$e^{-i\frac{\pi}{2} X}$ is a controlled-controlled-$\pi$ rotation. As any controlled-controlled-$\pi$ rotation is not Clifford, one can conclude that Tofolli does not belong to $\mathcal{S}$ either. It is an interesting question to decide whether a quantum gate belongs to $\mathcal{S}$ in a more general case and we leave it for future work.

\section{character-average benchmarking}
We can take the CCB protocol one step further. Observe that in CCB, one needs to implement the fitting procedures for each sampled Pauli operator $P_j$ to estimate Pauli fidelity $\lambda_j$. Each estimation requires specific initial state, measurement, and independent randomization procedures. We can further simplify these procedures by introducing the local Clifford group $\textsf{C}_1^{\otimes n}$. Recall that in a qubit system, the Clifford twirling depolarizes a channel via averaging the error rates in $X, Y, Z$ bases \cite{Emerson2011prl}. Then for an $n$-qubit system, the twirling over $\textsf{C}_1^{\otimes n}$ would partially depolarize a channel and average out $4^n$ Pauli fidelities $\lambda_j$ into $2^n$ terms. These $2^n$ values can be obtained from the $Z^{\otimes n}$ basis measurement only with additional data post-processing.

Based on the local Clifford twirling, we propose the CAB protocol as an improvement of the CCB protocol. The schematic circuit of CAB is shown in Fig.~\ref{fig:circuit}(b), with the detailed procedures described in Box \ref{box:CAB}. Like the CCB protocol, we can extend the target gate set beyond the Clifford group by employing local gauge transformation. Here, in order to suppress statistical fluctuations, we remove the character technique. Detailed description and analysis of the CCB and CAB protocols are presented in Appendix \ref{sec:CCBandCAB}.

\begin{mybox}[label={box:CAB}]{{Procedures for character-average benchmarking}}
\begin{enumerate}
\item
Sample a gate sequence $(C, P^{(1)}, P^{(2)}, \cdots, P^{(2m)})$, where $C$ and $P^{(i)} (1 \leq i \leq 2m)$ are sampled uniformly at random from the local groups, $\textsf{C}_1^{\otimes n}, \textsf{P}_n$, respectively.

\item
Initialize the state $\ket{\psi} = \ket{0}^{\otimes n}$ and apply the gate sequence as shown in  Fig.~\ref{fig:circuit}(b),
\begin{equation}\label{eq:sequence1}
\mathcal{S}_\mathrm{cab} = \mathcal{C}^{-1} \circ \mathcal{U}_{\mathrm{inv}} \circ \mathcal{U}^{-1} \circ \mathcal{P}^{(2m)} \circ \cdots \circ \mathcal{U} \circ \mathcal{P}^{(1)} \circ \mathcal{C},
\end{equation}
where the inverse gate $\mathcal{U}_{\mathrm{inv}} = \mathcal{P}^{(1)} \circ \mathcal{U}^{-1} \circ\cdots\circ \mathcal{P}^{(2m)}\circ\mathcal{U}$ is a local gate as well.

\item Measure in $Z^{\otimes n}$ basis and compute the survival probability for each measurement observable $Q_k \in \{I, Z\}^{\otimes n}$,
\begin{equation}
f_k(m, \mathcal{S}_\mathrm{cab}) = \Tr[Q_k \mathcal{S}_\mathrm{cab}(\rho_\psi)],
\end{equation}
where $\rho_\psi$ is the noisy preparation of the initial state $\ket{\psi}$.

\item
Repeat for a sufficient number of sequences and estimate the average value
\begin{equation}\label{eq:CABfm}
f_k(m) = \mathop{\mathbb{E}}_{\mathcal{S}_\mathrm{cab}} f_k(m, \mathcal{S}_{\mathrm{cab}}).
\end{equation}

\item Repeat for different $m$ and fit to the function
\begin{equation}\label{eq:CABfitting_model}
f_k(m) = A_k \mu_k^{2m},
\end{equation}
where $A_k$ and $\mu_k$ are fitting parameters.

\item Estimate the CAB fidelity
\begin{equation}\label{eq:Fcab}
F_{\mathrm{cab}} = \frac{1}{d^2}\sum_k d_k \mu_k,
\end{equation}
where $d_k = 3^{\pi(Q_k)}$, $\pi(Q_k)$ is the number of $Z$ in $Q_k$.
\end{enumerate}
\end{mybox}

Similar to the CCB protocol, the randomization over $m$ gate layers inside the blue box in Fig.~\ref{fig:circuit}(b) will generate a Pauli channel $\Lambda_\textsf{P}(m) = (\mathcal{U}^{-1} \circ \Lambda_\textsf{P} \circ \mathcal{U} \circ \Lambda_\textsf{P})^m$. The local Clifford gates in the beginning and end of the circuit jointly perform local unitary 2-design twirling, which transforms the Pauli channel $\Lambda_\textsf{P}(m)$ into a partially depolarizing channel $\Lambda_\textsf{C}(m)$. Here, the quantum channel $\Lambda_\textsf{C}(m)$ contains less independent parameters than the original $\Lambda_P(m)$. It holds the unique value of fidelity for every disjoint Pauli subset in $R^n=\{\{I\},\{X,Y,Z\}\}^{\otimes n}$.
The Pauli fidelities in $\Lambda_\textsf{C}(m)$ can be seen as the average values of those in $\Lambda_\textsf{P}(m)$, $\mu_k^{2m} = \sum_{P_j\in \sigma_k} \lambda_j^{2m} / |\sigma_k|$, where $\{\lambda_j^{2m}\}$ are the Pauli fidelities of the channel $\Lambda_\textsf{P}(m)$. The local Clifford twirling here averages
multi exponential decays into one exponential decay and captures all the information of the noise channel, 
as shown in Eq.~\eqref{eq:Fcab}. The comparison between $F_{\mathrm{cab}}$ and $F_{\mathrm{ccb}}$ is shown in Lemma 3 in Appendix \ref{append:CABfidelity}. While the CCB protocol employs a sampling method as in Eq.~\eqref{eq:F_sample}, which only contains partial information of the noise channel. Thus, one can intuitively conclude that the CAB protocol is more efficient than the CCB protocol, as demonstrated in later simulations.


\section{Simulation}
In numerical simulations, we characterize a two-qubit controlled-($TX$) gate and a five-qubit quantum error correcting encoding circuit, respectively. We simulate the noise channel for the target gate with a realistic error model that contains: a Pauli channel, an amplitude damping channel, and a correlation channel. In the simulation, the Pauli fidelities of the Pauli channel are randomly sampled from a normal distribution $\mathcal{N}(\mu, \sigma)$, which we call the $\mathcal{N}(\mu, \sigma)$-Pauli channel. Here, the error parameter $\mu$ reflects the quality of the Pauli channel and $\sigma$ implies the discrepancy of the channel, i.e., the differences among Pauli fidelities. The detailed descriptions for the error models and simulations are presented in Appendix \ref{sec:simulation}.

For the controlled-$(TX)$ gate, we take $(I \otimes \sqrt{T}) \textsf{P}_n (I \otimes \sqrt{T}^{-1})$ as the twirling gate set, where $T$ is the $\pi/8$-phase gate, $T=\exp(-i\pi Z/8)$, and $I \otimes \sqrt{T}$ is the local gauge transformation. We simulate the CAB and CCB protocols on the controlled-(TX) gate with 8 different noise channels.
For each noise channel, we take 40 independent simulations for both CAB and CCB protocols. In CCB simulations, we sample $M = 10$ Pauli operators to estimate $F_\mathrm{ccb}$.

Figure \ref{fig:CTXa} shows $F_\mathrm{cab}$ and $F_\mathrm{ccb}$ versus the error rate $r = 1 - F$ for the   controlled-$(TX)$ gate with different noise channels. We observe that when the standard deviation of error parameters $\sigma$ grows, the error bars of $F_\mathrm{cab}$ and $F_\mathrm{ccb}$ become larger. Intuitively, the discrepancy of the Pauli fidelities is one of the key reasons for the fluctuations of $F_\mathrm{cab}$ and $F_\mathrm{ccb}$. The fluctuations for the estimations will reach the  minimum level when the noise channel is completely depolarizing. Besides, the error bar of CAB is smaller than the error bar of CCB. This shows that under the same estimation accuracy, the \textit{sampling complexity}, i.e., the amount of sampling sequences in total, of the CAB protocol is smaller than that of the CCB protocol, especially when the discrepancy of the noise channel is large. In Fig.~\ref{fig:CTXb}, we take one of the 8 noise channels as an example and show the three fitting curves of Eq.~\eqref{eq:CABfitting_model} for the CAB protocol. The resulted CAB fidelity is $F_\mathrm{cab} = 95.99\%$, which is very close to the theoretical value of process fidelity $F = 95.98\%$.

\begin{figure}[htbp!]
\centering

\subfigure[]{
	\centering
	\includegraphics[width=3.4in]{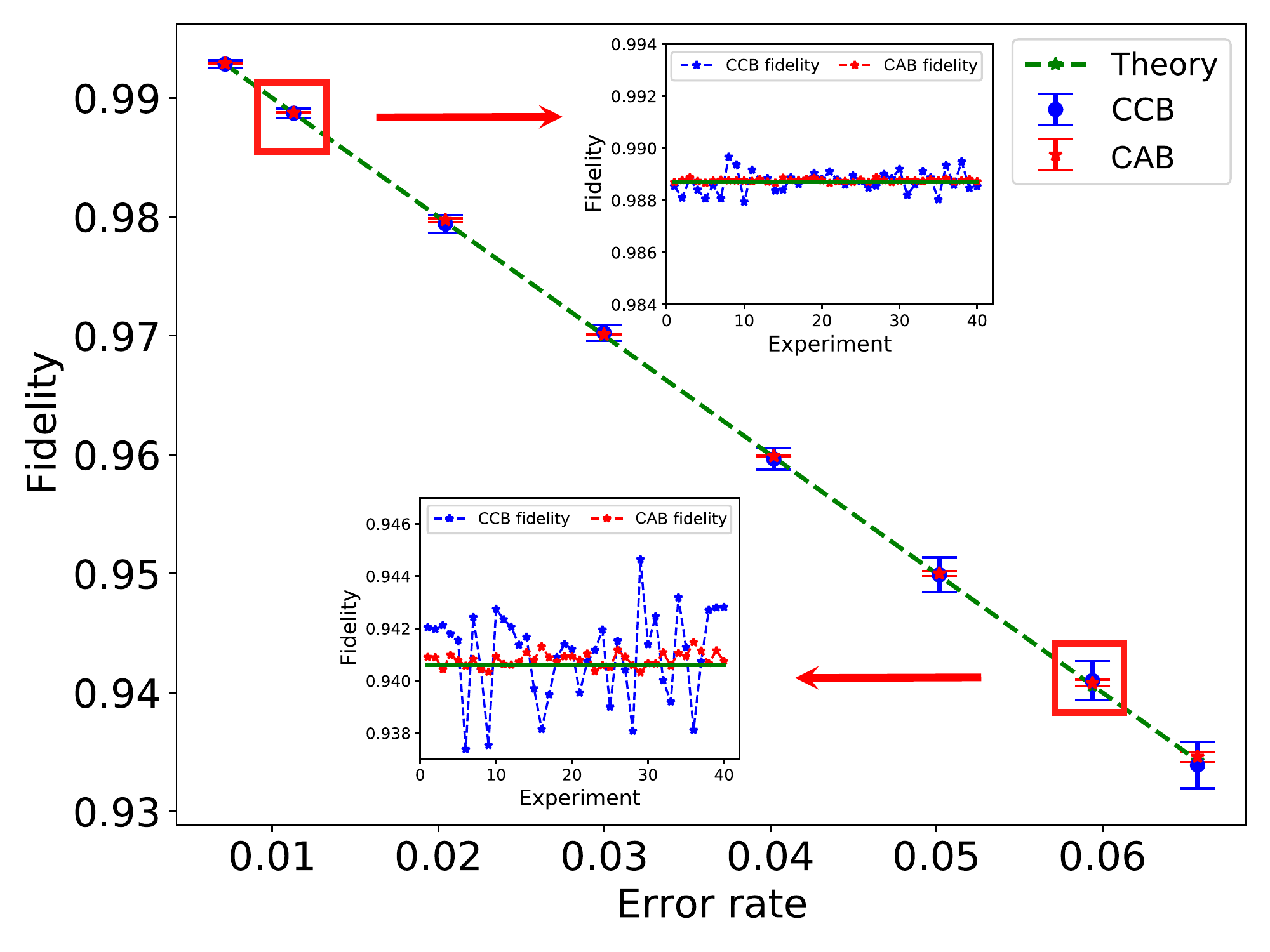}
	\label{fig:CTXa}
}
\subfigure[]{
	\centering
	\includegraphics[width=3.4in]{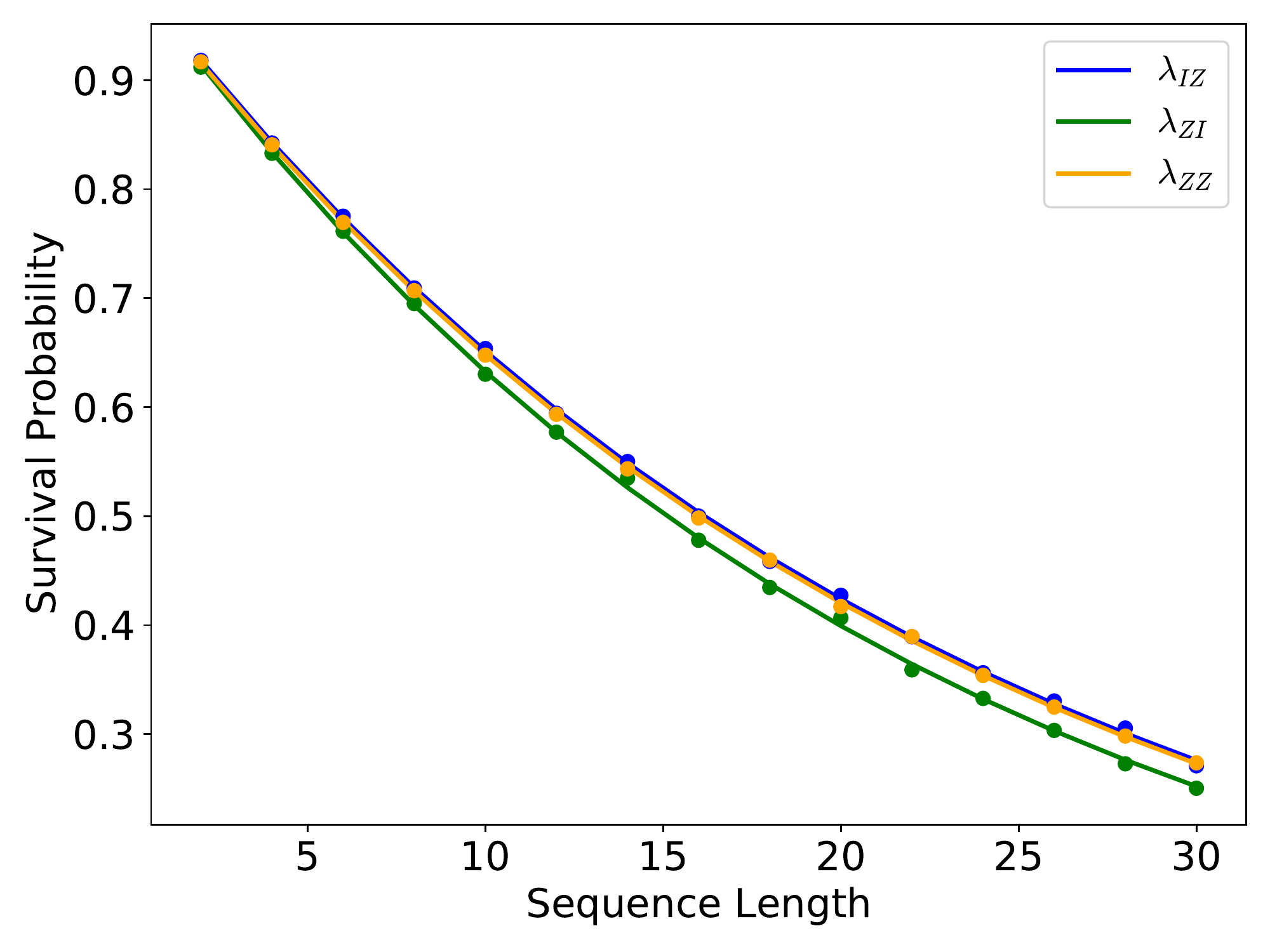}
	\label{fig:CTXb}
}
\caption{Simulation results for the controlled-$(TX)$ gate with 8 different noise channels, each containing a $\mathcal{N}(\mu, \sigma)$-Pauli channel, an amplitude damping channel, and a correlation channel. The parameters for the amplitude damping channels and correlation channels set to be the same for the 8 channels. For the Pauli channel, the error parameters are set to $\{(\mu, \sigma)\} =$ \{(0.995, 0.001), (0.990, 0.002), (0.980, 0.003), (0.970, 0.004), (0.960, 0.005), (0.950, 0.006), (0.940, 0.007), (0.930, 0.008)\}. (a) The fidelity estimations with different error rates in 40 independent simulations. The green dashed line represents the theoretical fidelities. The two insert scatter plots show the fluctuations of estimated fidelities over different simulations. (b) Take the fifth simulation with process fidelity of $95.98\%$ as an example. The fitting curves of Eq.~\eqref{eq:CABfitting_model} for $Q_k=IZ, ZI, ZZ$. The decay parameters derived from the curves are $\lambda_{IZ} = 0.9580, \lambda_{ZI} = 0.9550, \lambda_{ZZ} = 0.9577$.}
\label{fig:controlled-TX}
\end{figure}

For the 5-qubit error correcting encoding circuit, which is a Clifford gate, we take the Pauli group $\textsf{P}_n$ as the twirling gate set. For the simplicity of simulation, we set the Pauli channel to be a depolarizing channel $\Lambda_\mathrm{dep}(\rho) = p\rho + (1-p)I/d$ where $p = 0.98$. The setting of the amplitude damping and correlation channels remain the same. We simulate the CAB and XEB protocols to characterize the noisy 5-qubit encoding circuit. For each protocol, we run 40 independent simulations. In each simulation, we take the sampling number of gate sequences as $K = 50, \cdots, 500$ for each sequence length $m$. The box plot of $F_\mathrm{cab}$ versus $K$ is shown in Fig.~\ref{fig:QEC-a}. We can see that when $K$ grows, the fluctuations of $F_\mathrm{cab}$ become smaller. When $K$ is not too large, like $K = 50$, the fluctuation is already small enough, which implies that the CAB protocol works well with few sampling sequences needed.

\begin{figure}[htbp!]
\centering

\subfigure[]{
\centering
\includegraphics[width=3.4in]{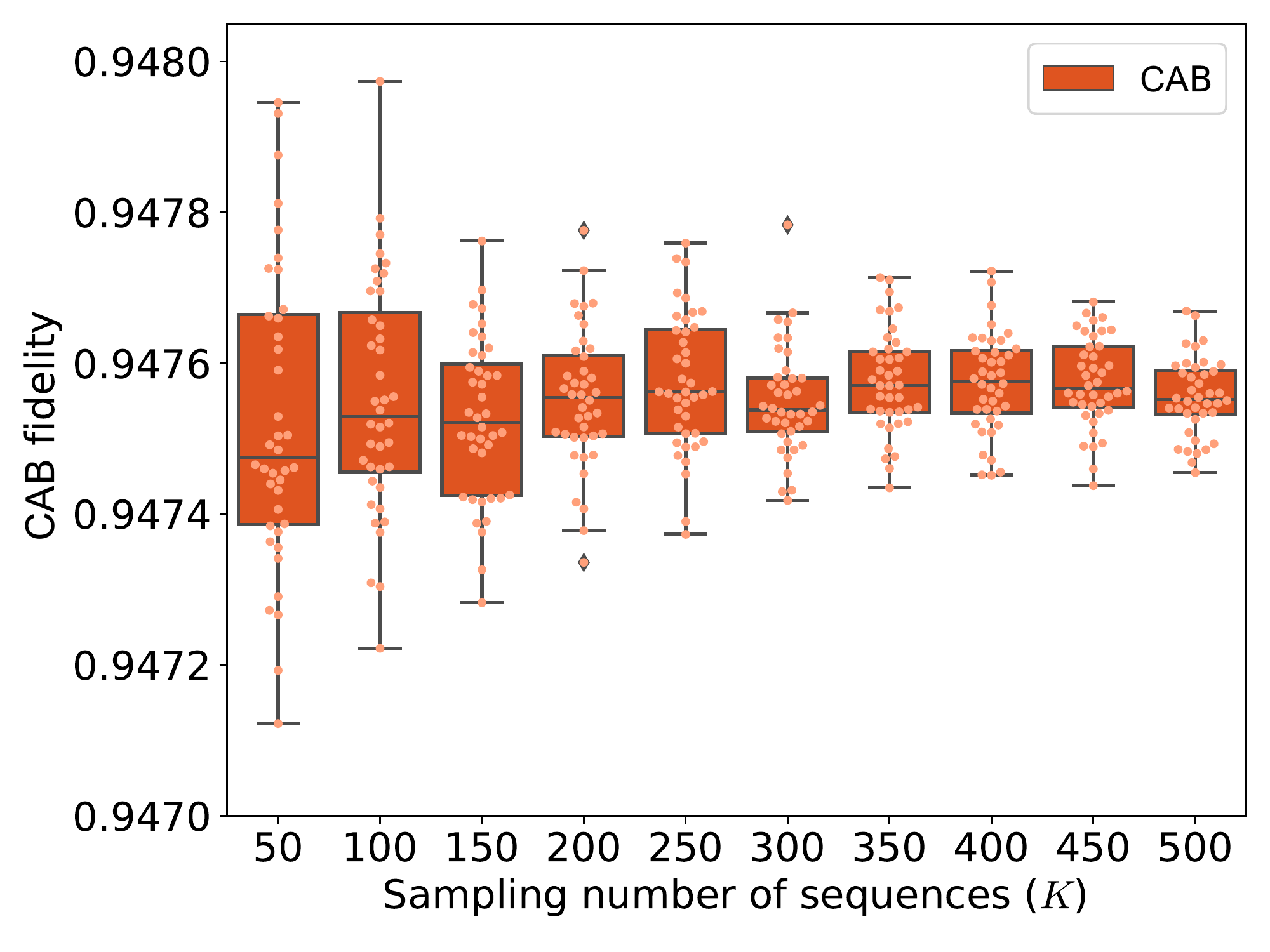}
\label{fig:QEC-a}
}
\subfigure[]{
\centering
\includegraphics[width=3.4in]{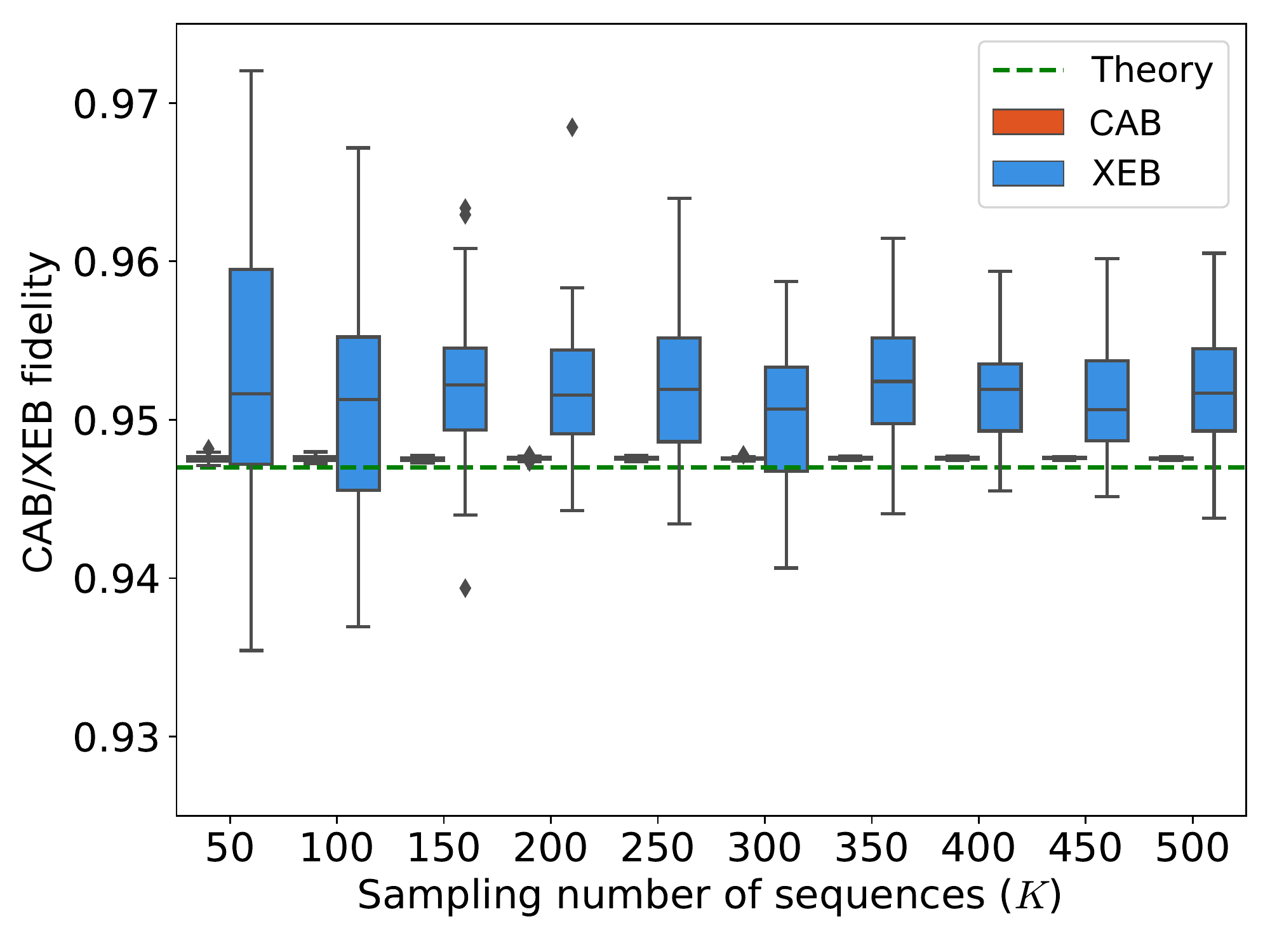}
\label{fig:QEC-b}
}
\caption{Simulation results for the 5-qubit quantum error correcting encoding circuit with a noise channel composed of a depolarizing channel $\Lambda_\mathrm{dep}(\rho) = p\rho + (1-p)I/d$ where $p = 0.98$, an amplitute damping channel, and a correlation channel. The theoretical process fidelity is $F = 94.70\%$. (a) Box plot of the CAB fidelities versus sampling number $K$ with 40 independent simulations. The red boxes represent the distributions of the CAB fidelity estimations with respect to $K$. The orange points represent the distributions of CAB fidelities. (b) Box plots of the CAB and XEB fidelities versus $K$ with 40 independent simulations. The green dashed line represents the theoretical process fidelity. The plots of CAB in (a) and (b) are the same with different scaling.}
\label{fig:QEC}
\end{figure}

In Fig.~\ref{fig:QEC-b}, we show the box plots of $F_\mathrm{cab}$ and XEB fidelities $F_\mathrm{xeb}$ versus the sampling number $K$. It is clear to see that compared with $F_\mathrm{xeb}$, $F_\mathrm{cab}$ is much closer to the theoretical process fidelity $F = 94.7\%$. Meanwhile, the convergence of $F_\mathrm{cab}$ is much better than that of $F_\mathrm{xeb}$. This implies that the required $K$ for CAB is much smaller than that of XEB under the same estimation accuracy.

To give a concrete example, we take 20 CAB simulations and 20 XEB simulations under the same noise channel. From the simulation results, we find that for CAB, when $K_\mathrm{cab} = 20$, the standard deviation over the 20 simulations is $\sigma_\mathrm{cab} = 3.25 \times 10^{-4}$; while for XEB, when $K_\mathrm{xeb} = 20000$, the standard deviation is $\sigma_\mathrm{xeb} = 4.29 \times 10^{-4}$. This shows that to estimate the fidelities with standard deviations around $4\times 10^{-4}$, the required $K_\mathrm{xeb}$ is over 1000 times larger than $K_\mathrm{cab}$. Thus, we can conclude that the performance of CAB protocol is three orders of magnitudes better than that of XEB protocol in terms of the sampling complexity.

The simulation results reveal the strong scalability and reliability of our protocols, especially the CAB protocol. The fluctuation of estimated CAB fidelity is small even when for multi-qubit gates. We believe the CAB protocols can provide fast feedback in experimental designs and promote the development of universal fault-tolerant quantum computing. 

\section{Conclusion and discussion}
Characterization of large-scale individual quantum processes is crucial to the development of near-term quantum devices. However, there does not exist scalable and practical methods that can benchmark multi-qubit universal gate-set currently. In this work, we propose and demonstrated efficient and scalable randomized benchmarking protocols --- CCB and CAB that can individually characterize a wide class of quantum gates including and beyond the Clifford set. The key technique of ours protocols is using the local reference gate-set for twirling, which avoid the inaccuracy of the estimation caused by gate-compiling overhead and gate-dependent noises. The method of local gauge transformation offers a tool for characterizing non-Clifford gates. The sampling and measurement complexity are independent of the qubit number of gate, which means our benchmarking protocols can be generalized to large-scale quantum systems.

Our protocols maintain the simplicity and robustness of the conventional RB method, and estimate the quantity of most interest --- process fidelity of the target gates. We believe our protocols will promote the development of universal fault-tolerant quantum computing. Furthermore, it would also be interesting to extend our randomization and estimation methods for characterizing other properties like unitarity and coherence, which we leave for future research.


\section{Acknowledgement}
We acknowledge B.~Chen for the insightful discussions. This work was supported by the National Natural Science Foundation of China Grants No.~11875173 and No.~12174216 and the National Key Research and Development Program of China Grants No.~2019QY0702 and No.~2017YFA0303903.

\appendix
\section{Preliminaries}\label{sec:Preliminary}
\subsection{Representation theory}\label{sec:representation}
The representation theory works as a general analysis of every representation for abstract groups.
Informally, the representations of a group can reflect its block-diagonal structures. Let $\textsf{G}$ be a finite group and $g \in \textsf{G}$ be a group element. The representation of $\textsf{G}$ is defined as follows.

\begin{definition}[Group representation] \label{append:definition_group}
Map $\phi$ is said to be a representation of group $\textsf{G}$ on a linear space $V$ if it is a group homomorphism from $\textsf{G}$ to $GL(V)$,
\begin{equation}
	\begin{split}
		&\phi:{\sf{G}} \rightarrow GL(V), \\
		&g \mapsto \phi(g), \ \forall g \in {\sf{G}};
	\end{split}
\end{equation}
where $GL(V)$ is the general linear group of $V$, such that $\forall g_1, g_2 \in {\sf{G}}$,
\begin{equation}
	\phi(g_1)\phi(g_2) = \phi(g_1 g_2).
\end{equation}
\end{definition}

Given representation $\phi$ on $V$, a linear subspace $W \subseteq V$ is called \textit{invariant} if $\forall w \in W$ and $\forall g \in \textsf{G}$,
\begin{equation}
	\phi(g)w \in W.
\end{equation}
The restriction of $\phi$ to the invariant subspace $W$ is known as a \textit{subrepresentation} of $\textsf{G}$ on $W$. One can further define the \textit{irreducible representation} (or irrep for short) as follows.

\begin{definition}[Irreducible representation]
	Representation $\phi$ of group $\textsf{G}$ on linear space $V$ is irreducible if it merely has trivial subrepresentations, i.e., the invariant subspaces for $V$ are only $\{0\}$ and $V$ itself.
\end{definition}

The Maschke's theorem provides an interesting property that each representation $\phi$ of a finite group $\textsf{G}$ can be decomposed to the irreducible representations, $\forall g \in \textsf{G}$,
\begin{gather}\label{eq:directsum}
	\phi(g) \simeq \bigoplus_{\sigma\in R_{\textsf{G}}} \sigma(g)^{m_\sigma},
\end{gather}
where $R_{\textsf{G}} = \{\sigma\}$ denotes the set of all the irreps of representation $\phi$ and $m_\sigma$ is the \textit{multiplicity} of the equivalent irreps of $\sigma$. In this paper, we will focus on the non-degenerate representation case, i.e., $m_\sigma = 1,\ \forall \sigma$.

\begin{definition}[Character function]
	Let $\sigma$ be a representation over group ${\sf{G}}$, the character of $\sigma$ is the function $\chi_\sigma: {\sf{G}} \rightarrow \mathbb{C}$ given by $\forall g \in {\sf{G}}$,
	\begin{equation}
		\chi_\sigma(g) = \Tr[\sigma(g)].
	\end{equation}
\end{definition}

With the character function, we introduce the \textit{generalized projection formula} used in character randomized benchmarking \cite{Helsen2019characterRB}.

\begin{lemma} [Generalized projection formula \cite{fultonRepresentation}] \label{lemma:project_formula}
	Given a finite group, ${\sf{G}}$, and its representation, $\phi$, denote $\sigma$ to be an irreducible representation contained in $\phi$ with its character function $\chi_\sigma: {\sf{G}} \rightarrow \mathbb{C}$. The projector onto the support space of $\sigma$ can be written as,
	\begin{equation}\label{eq:project_formula}
		\Pi_\sigma = \frac{d_\sigma}{|\textsf{G}|}\sum_{g \in \textsf{G}} \chi_\sigma(g) \phi(g),
	\end{equation}
	where $d_\sigma$ is the dimension of $\sigma$, $d_\sigma = \chi(e)$ with $e$ being the identity element in ${\sf{G}}$.
\end{lemma}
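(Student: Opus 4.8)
The plan is to prove Eq.~\eqref{eq:project_formula} by Schur's lemma together with the orthogonality relations for irreducible characters, evaluating the right-hand side block by block along the decomposition of Eq.~\eqref{eq:directsum}. Write $T_\sigma := \frac{d_\sigma}{|\textsf{G}|}\sum_{g\in\textsf{G}}\chi_\sigma(g)\,\phi(g)$ for the claimed projector; the goal is to show that $T_\sigma$ acts as the identity on the $\sigma$-isotypic subspace of $V$ and as zero on its complement — which is exactly what it means for $T_\sigma$ to be the projector onto the support of $\sigma$ — and that $T_\sigma^2 = T_\sigma$.

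First I would show that $T_\sigma$ is an intertwiner for $\phi$, i.e. $\phi(h)\,T_\sigma\,\phi(h)^{-1} = T_\sigma$ for all $h\in\textsf{G}$. This follows from the substitution $g\mapsto hgh^{-1}$ inside the group sum, using that $\chi_\sigma$ is a class function and hence invariant under conjugation, so that $T_\sigma$ commutes with every $\phi(g)$. Next, using Maschke's theorem to decompose $\phi\simeq\bigoplus_{\tau\in R_{\textsf{G}}}\tau$ (the non-degenerate case assumed in the paper), it suffices to compute $T_\sigma$ on each irreducible block $\tau$, where it equals $\frac{d_\sigma}{|\textsf{G}|}\sum_g\chi_\sigma(g)\,\tau(g)$. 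By the intertwining property this operator commutes with the irreducible representation $\tau$, so by Schur's lemma it is a scalar multiple $c_\tau I_{d_\tau}$ of the identity. Taking traces gives $c_\tau d_\tau = \frac{d_\sigma}{|\textsf{G}|}\sum_g\chi_\sigma(g)\chi_\tau(g) = d_\sigma\langle\chi_\tau,\chi_\sigma\rangle$, and character orthogonality yields $\langle\chi_\tau,\chi_\sigma\rangle = \delta_{\tau\sigma}$; since $d_\tau = d_\sigma$ whenever this inner product is nonzero, we get $c_\tau = \delta_{\tau\sigma}$. Hence $T_\sigma$ restricts to the identity on the $\sigma$-block and to zero on all other blocks, establishing both that $T_\sigma = \Pi_\sigma$ and that $\Pi_\sigma^2 = \Pi_\sigma$.

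The one delicate point — and the place I expect any friction — is the placement of complex conjugation: the orthogonality relation is $\frac{1}{|\textsf{G}|}\sum_g\overline{\chi_\tau(g)}\chi_\sigma(g) = \delta_{\tau\sigma}$, so the trace step above goes through cleanly exactly when the relevant characters are real (equivalently, the irreps involved are self-dual); otherwise $T_\sigma$ projects onto the isotypic component of the \emph{dual} of $\sigma$, and one should write $\overline{\chi_\sigma(g)}$ in Eq.~\eqref{eq:project_formula} or reindex $g\mapsto g^{-1}$ and invoke unitarity of $\phi$ via $\chi_\sigma(g^{-1})=\overline{\chi_\sigma(g)}$. For the groups used in this paper — the Pauli group and tensor powers of the single-qubit Clifford group acting by conjugation — the characters that arise are real (indeed $\pm1$-valued on the Pauli subgroup), so the two conventions coincide, and I would add a sentence making this explicit. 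Everything else is routine bookkeeping: tracking precisely where the non-degeneracy hypothesis $m_\sigma = 1$ is used and being careful that Schur's lemma is applied only to genuinely irreducible blocks rather than to isotypic components.
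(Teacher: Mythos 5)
Your proof is correct, and it is the standard textbook argument: average to get an intertwiner, invoke Schur's lemma blockwise along the decomposition of Eq.~\eqref{eq:directsum}, and fix the scalar on each block by taking traces and applying character orthogonality. The paper itself does not prove Lemma~\ref{lemma:project_formula} at all --- it is quoted from the representation-theory literature \cite{fultonRepresentation} --- so there is no in-paper argument to compare against; your write-up would serve as a self-contained proof where the paper offers only a citation. Your remark about the missing complex conjugation is a genuine and worthwhile observation about the lemma \emph{as stated}: the general formula for the projector onto the $\sigma$-isotypic component carries $\overline{\chi_\sigma(g)}$ (equivalently $\chi_\sigma(g^{-1})$ for unitary representations), and Eq.~\eqref{eq:project_formula} is literally correct only when the relevant characters are real. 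As you note, this holds for every group actually used in the paper --- the irreps of $\textsf{P}_n$ in the PTM representation are one-dimensional with $\pm 1$-valued characters, and the local Clifford irreps likewise have real characters --- so the statement is sound in context, but adding your one-sentence caveat would make the lemma robust as a general claim. The only minor point to tighten is your closing remark about the non-degeneracy hypothesis: multiplicity one is not actually needed for the projection formula itself (the operator $T_\sigma$ is the identity on the full $\sigma$-isotypic component and zero elsewhere regardless of $m_\sigma$); it matters later in the paper, where Proposition~\ref{prop:twirling} requires the twirled channel to be a scalar on each block.
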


Next, we will introduce twirling over a group $\textsf{G}$.

\begin{definition} [Twirling]
	For representation $\phi$ of group ${\sf{G}}$ on linear space $V$, a random twirling for a linear map $\Lambda: V \rightarrow V$ over ${\sf{G}}$ is defined as
	\begin{equation}
		\Lambda_{{\sf{G}}} = \frac{1}{\abs{{\sf{G}}}} \sum_{g\in{\sf{G}}} \phi(g)^\dagger\Lambda\phi(g).
	\end{equation}
\end{definition}

Using Schur's Lemma \cite{fultonRepresentation}, one can show the following proposition.

\begin{proposition}\label{prop:twirling}
	For any linear map $\Lambda: V \rightarrow V$, the twirling over group ${\sf{G}}$ and its representation $\phi$ can be written as
	\begin{equation}\label{eq:twirling2}
		\Lambda_{{\sf{G}}} = \sum_{\sigma \in \mathcal{R}_{{\sf{G}}}}\frac{\Tr(\Lambda\Pi_\sigma)}{\Tr(\Pi_\sigma)}\Pi_\sigma,
	\end{equation}
	where $\Pi_\sigma$ denotes the projector onto the support space of $\sigma$, $\mathcal{R}_{\sf{G}}$ denotes the set of all irreps of $\phi$.
\end{proposition}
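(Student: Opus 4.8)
The plan is to recognize $\Lambda_{\textsf{G}}$ as the orthogonal projection of $\Lambda$ onto the commutant of the representation $\phi$, and then to use Schur's Lemma together with the multiplicity-free decomposition \eqref{eq:directsum} to pin down its exact form. Throughout I would assume (without loss of generality, by the usual averaging-an-inner-product argument) that $\phi$ is unitary, so that $\phi(g)^\dagger = \phi(g)^{-1} = \phi(g^{-1})$; this is what makes $\Lambda_{\textsf{G}}$ a genuine average over $\textsf{G}$.

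First I would show that $\Lambda_{\textsf{G}}$ lies in the commutant, i.e. $\phi(h)\Lambda_{\textsf{G}} = \Lambda_{\textsf{G}}\phi(h)$ for all $h \in \textsf{G}$. This follows from a re-indexing of the group sum: $\phi(h)^\dagger \Lambda_{\textsf{G}}\phi(h) = \frac{1}{|\textsf{G}|}\sum_{g\in\textsf{G}} \phi(gh)^\dagger \Lambda \phi(gh) = \Lambda_{\textsf{G}}$, since $g\mapsto gh$ is a bijection of $\textsf{G}$, and then unitarity turns the conjugation identity into the commutation identity. Next, by Maschke's theorem and the non-degeneracy assumption $m_\sigma = 1$ stated after \eqref{eq:directsum}, $V$ decomposes as a direct sum of pairwise inequivalent irreducibles, with $\Pi_\sigma$ the projector onto the $\sigma$-isotypic component (the one computed in Lemma~\ref{lemma:project_formula}). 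Schur's Lemma then forces any operator commuting with all $\phi(g)$ to be a scalar on each irreducible block with no intertwiners between inequivalent blocks; hence the commutant is exactly the span of $\{\Pi_\sigma\}_{\sigma\in\mathcal{R}_{\textsf{G}}}$, and we may write $\Lambda_{\textsf{G}} = \sum_{\sigma\in\mathcal{R}_{\textsf{G}}} c_\sigma \Pi_\sigma$ for scalars $c_\sigma$ to be determined.

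Finally I would extract the coefficients by testing against the projectors. Multiplying $\Lambda_{\textsf{G}} = \sum_\sigma c_\sigma \Pi_\sigma$ by $\Pi_{\sigma'}$, taking the trace, and using orthogonality $\Pi_\sigma \Pi_{\sigma'} = \delta_{\sigma\sigma'}\Pi_\sigma$ gives $\Tr(\Lambda_{\textsf{G}}\Pi_{\sigma'}) = c_{\sigma'}\Tr(\Pi_{\sigma'})$. On the other hand, because $\Pi_{\sigma'}$ commutes with every $\phi(g)$, cyclicity of the trace yields $\Tr\!\big(\phi(g)^\dagger\Lambda\phi(g)\Pi_{\sigma'}\big) = \Tr\!\big(\Lambda\,\phi(g)\Pi_{\sigma'}\phi(g)^\dagger\big) = \Tr(\Lambda\Pi_{\sigma'})$ for each $g$, so averaging over $\textsf{G}$ gives $\Tr(\Lambda_{\textsf{G}}\Pi_{\sigma'}) = \Tr(\Lambda\Pi_{\sigma'})$. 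Equating the two expressions gives $c_{\sigma'} = \Tr(\Lambda\Pi_{\sigma'})/\Tr(\Pi_{\sigma'})$, which is precisely \eqref{eq:twirling2}.

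The main obstacle is the careful invocation of Schur's Lemma in the multiplicity-free setting — establishing that the commutant of $\phi$ is nothing more than the linear span of the central idempotents $\Pi_\sigma$. This is where the non-degeneracy hypothesis $m_\sigma = 1$ is genuinely used: if irreps appeared with higher multiplicity, the commutant would contain full matrix algebras acting on the multiplicity spaces, the twirl would no longer be diagonal in the $\{\Pi_\sigma\}$, and the clean formula \eqref{eq:twirling2} would fail. Everything else — the re-indexing, the unitarity reduction, and the trace manipulations — is routine once that structural fact is in place.
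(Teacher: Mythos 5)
Your proof is correct and follows exactly the route the paper indicates: it invokes Schur's Lemma on the multiplicity-free decomposition to identify the commutant with the span of the $\Pi_\sigma$, then fixes the coefficients by the trace argument. The paper gives no further detail beyond citing Schur's Lemma, so your write-up is simply the standard argument spelled out, including the correct observation that the $m_\sigma = 1$ assumption is what keeps the commutant free of nontrivial multiplicity-space blocks.
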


\subsection{Representation for quantum channel}\label{sec:channel}
Here, we introduce the quantum channel and three frequently-used channel representations which our main results rely on. Denote the Hilbert space for $n$ qubits as $\mathcal{H}$ and the set of linear operators on $\mathcal{H}$ as $\mathcal{L}(\mathcal{H})$. Quantum channels are defined as completely positive and trace-preserving (CPTP) linear maps on $\mathcal{L}(\mathcal{H})$. Given any quantum channel $\mathcal{E}:\mathcal{L}(\mathcal{H})\rightarrow\mathcal{L}(\mathcal{H})$, we can represent it in \textit{Kraus representation}, $\forall\ O\in \mathcal{L}(\mathcal{H})$,
\begin{gather}
	\mathcal{E}(O)=\sum_{l=1}^m K_l O K_l^\dagger,
\end{gather}
where $\{K_l\}$ are the Kraus operators satisfying
\begin{gather}
	\sum_{l=1}^mK_l^\dagger K_l=I.
\end{gather}
With the Kraus representation, the concatenation of the quantum channels or quantum gates is given by
\begin{gather}
	\mathcal{E}_2\circ\mathcal{E}_1(O)=\sum_{l_2=1}^{m_2}\sum_{l_1=1}^{m_1}K_{l_2}K_{l_1}O K_{l_1}^\dagger K_{l_2}^\dagger,
\end{gather}
where $\{K_1\}$ and $\{K_2\}$ are the Kraus operators for $\mathcal{E}_1$ and $\mathcal{E}_2$, respectively.

To describe a long quantum circuit, the Kraus representation is not convenient. Here, we introduce another widely-used representation --- \textit{Liouville representation}. The Liouville representation is defined on a set of trace-orthonormal basis on $\mathcal{L}(\mathcal{H})$. Often, we use the \textit{normalized Pauli group}, i.e., Pauli group with a normalization factor. The $n$-qubit Pauli group is given by
\begin{gather}
    \textsf{P}_n = \bigotimes_{i=1}^n\{I_i,X_i,Y_i,Z_i\},
\end{gather}
where $X_i,Y_i,Z_i$ are the single-qubit Pauli matrices. Then the normalized Pauli group is given by
\begin{equation}
    \left\{\sigma_{i} = \frac{1}{\sqrt{2^n}}P_{i} |P_{i}\in \textsf{P}_n\right\}.
\end{equation}
Each pair of elements $(\sigma_i, \sigma_j)$ in this group satisfies the following constraints under the Hilbert-Schmidt inner product, $\forall\, \sigma_i, \sigma_j \in \textsf{P}_n/\sqrt{2^n}$,
\begin{gather}
    \Tr(\sigma_i^\dagger \sigma_j) = \delta_{ij}.
\end{gather}
Any $n$-qubit operator can be decomposed over the $4^n$ normalized Pauli operators. We can rewrite the density operator $O$ on $\mathcal{L}(\mathcal{H})$ in a vector form,
\begin{gather}
	\lket{O}=\sum_{i\in\textsf{P}^n}\Tr(\sigma_i^\dagger O)\sigma_i,
\end{gather}
Moreover, any quantum channel can be represented as a matrix in the Liouville representation. To be specific, we can represent an arbitrary channel $\Lambda$ acting on an operator $O$ as follows,
\begin{gather}
	\lket{\Lambda(O)}=\Lambda\lket{O}.
\end{gather}
We can see the element of this matrix is given by
\begin{gather}
	\Lambda_{ij}=\lbra{\sigma_i}\Lambda\lket{\sigma_j}=\Tr(\sigma_i\Lambda(\sigma_j)).
\end{gather}
Consequently, in the Liouville representation, the concatenation of two channels can be depicted as the product of two matrices,
\begin{gather}
	\lket{\Lambda_2\circ\Lambda_1(O)}=\Lambda_2\lket{\Lambda_1(O)}=\Lambda_2\Lambda_1\lket{O}.
\end{gather}
The measurement operator can also be vectorized with the Liouville bra-notation according to the definition of the Hilbert-Schmidt inner product. For example, the measurement probability of a state $\rho$ on a positive operator-valued measure (POVM) $\{F_i\}$ is given by,
\begin{gather}
	p_i=\lbraket{F_i}{\rho}=\Tr(F_i^\dagger\rho).
\end{gather}
We call such a Pauli-Liouville representation as the Pauli Transfer Matrix (PTM) representation. 

An $n$-qubit quantum channel can also be described in the $\chi$-matrix representation,
\begin{equation}
	\Lambda(\rho) = d \sum_{i, j} \chi_{ij} \sigma_i \rho \sigma^\dag_j.
\end{equation}
The process matrix $\chi$ is uniquely determined by the orthonormal operator basis $\{\sigma_j\}$ where the first element is proportional to the identity matrix $\sigma_0 = I / \sqrt{d}$ and $d = 2^n$ is the dimension of the quantum system. Often, we take the normalized Pauli operators as the basis of the $\chi$-matrix representation. If a channel is diagonal in this representation, we call it \textit{Pauli channel}.

\subsection{Quantum channel fidelity}
The process fidelity of a channel $\Lambda$ can be defined with its $\chi$-matrix representation,
\begin{equation}
F(\Lambda) = \chi_{00}(\Lambda).
\end{equation}
Here, the quantity $\chi_{00}$ is independent with the choices of operator basis $\{\sigma_j\}$. In the following, we set the operator basis to be normalized Pauli group. Define the \textit{Pauli fidelity} of a quantum channel $\Lambda$,
\begin{equation}
	\lambda_j = d^{-1} \Tr(P_j \Lambda(P_j)),
\end{equation}
which is the diagonal term of the PTM representation of $\Lambda$. We can relate Pauli fidelities to the diagonal terms of $\chi$-matrix representation via Walsh-Hadamard transformation,
\begin{equation}\label{eq:WalshHtrans}
\lambda_j = \sum_i (-1)^{\langle i, j \rangle }\chi_{ii}.
\end{equation}
Here, $\langle i, j \rangle = 0$ if $P_i$ commutes with $P_j$ and $\langle i, j \rangle = 1$ otherwise. Then one can derive the process fidelity from Eq.~\eqref{eq:WalshHtrans},
\begin{equation}
	F(\Lambda) = \chi_{00}(\Lambda) = \frac{1}{d^2} \sum_j \lambda_j,
\end{equation}
which can be viewed as another definition of process fidelity.

There is a relation between the commonly-used average fidelity $F_{ave}$ and the process fidelity~\cite{horodecki1999general},
\begin{equation}
F_{ave} = (dF+1)/(d+1).
\end{equation}
Average fidelity is defined as
\begin{equation}
F_{ave} = \int d\psi \tr(\ketbra{\psi} \Lambda(\ketbra{\psi})),
\end{equation}
where the integral is implemented over Haar measure. These two fidelity measures are both well-defined metrics to quantify the closeness of a channel to the identity.

\subsection{Representation theory in randomized benchmarking}
Now, we can use the representation theory to analyze the randomized benchmarking (RB) procedures. Let us start from a quick review of the standard RB protocol. Considering an $n$-qubit gate set $\textsf{G}$, RB is performed via sampling random gate sequences,
\begin{equation}
	\mathcal{S}_\mathrm{rb} = \mathcal{G}_{m+1} \mathcal{G}_m \cdots \mathcal{G}_1,
\end{equation}
where for $1 \leq i \leq m+1$, $\mathcal{G}_i$ denotes the quantum operation in the PTM representation of a unitary matrix $g_i \in \textsf{G}$. Here, for $1 \leq i \leq m$, $g_i$ is randomly sampled from group $\textsf{G}$ and $g_{m + 1} = (g_m \cdots g_2g_1)^{-1}$. Then, one applies the random gate sequence $\mathcal{S}_\mathrm{rb}$ to the input state $\rho_{\psi0}$ and perform measurement $Q_0$ for a sufficient number of times to estimate the average survival probability,
\begin{equation}
	f(m) = \mathop{\mathbb{E}}_{\mathcal{G}_1 \cdots \mathcal{G}_m} \lbra{Q_0}\tilde{\mathcal{G}}_{m+1} \tilde{\mathcal{G}}_m\cdots\tilde{\mathcal{G}}_2 \tilde{\mathcal{G}}_1 \lket{\rho_{\psi 0}},
\end{equation}
where $\tilde{\mathcal{G}}_i$ denotes the noisy implementation of $\mathcal{G}_i$, $\rho_{\psi0}$ is the noisy preparation of the initial state $\ket{\psi}$, and measurement $Q_0$ also includes errors. Here, we employ the gate independent noise assumption as used in most RB protocols. That is, the noise channels attached to the gate set $\{\mathcal{G}_i\}$ are the same, $\forall i$,
\begin{equation}
	\tilde{\mathcal{G}}_i = \Lambda_L\mathcal{G}_i\Lambda_R,
\end{equation}
where $\Lambda_L$ and $\Lambda_R$ are left and right noise channels. The randomization over the gate sequence $\mathcal{S}_\mathrm{rb}$ can be seen as performing twirling operation for the noise channels between $\mathcal{G}_i$ and $\mathcal{G}_{i + 1}$. Denote $\Lambda = \Lambda_R \Lambda_L$, then we have
\begin{equation}
	f(m) = \lbra{Q}(\mathop{\mathbb{E}}_{g \in \textsf{G}} \mathcal{G}^\dag \Lambda \mathcal{G})^m \lket{\rho_\psi},
\end{equation}
where the right noise channel $\Lambda_R$ of the first gate is absorbed into the state preparation error, $\lket{\rho_\psi} = \Lambda_R \lket{\rho_{\psi 0}}$ and the left noise channel $\Lambda_L$ of the last gate is absorbed into the measurement error, $\lbra{Q} = \lbra{Q_0} \Lambda_L$. According to Proposition \ref{prop:twirling}, one can express $f_m$ in a more elegant manner,
\begin{equation}\label{eq:pm2}
	f(m) = \sum_{\sigma\in R_{\textsf{G}}} \lbra{Q}\Pi_\sigma ^m \lket{\rho_{\psi}} \lambda_\sigma^m,
\end{equation}
where $\Pi_\sigma$ denotes the projector onto the irreducible subspace associated with $\sigma$ and $\lambda_\sigma$ contains the trace information of the channel $\Lambda$ on the subspace,
\begin{equation}
	\lambda_\sigma = \frac{\Tr(\Lambda\Pi_\sigma)}{\Tr(\Pi_\sigma)}.
\end{equation}
Since the twirling operation will not change the trace value of $\Lambda$, the process fidelity of $\Lambda$ can be given by,
\begin{equation}
	F(\Lambda, I) = 4^{-n} \sum_\sigma d_\sigma \lambda_\sigma,
\end{equation}
where $d_\sigma$ is the dimension of the irrep $\sigma$ with $d_\sigma = \Tr(\Pi_\sigma)$. Here, $F(\Lambda, I)$ is also known as the entanglement fidelity, or $\chi_{00}$. We call $\{\lambda_\sigma\}$ the quality parameters since they reflect the noisy level of a channel.

In the conventional RB protocol, the $n$-qubit Clifford group $\textsf{C}_n$ is often picked as the target gate set, which we call Clifford RB. Any Clifford operation $C \in \textsf{C}_n$ satisfies $\{C P_i C^{-1} \mathrm{\ for\ all\ } i\} = \textsf{P}_n$, which is a transformation permuting Pauli operators. Note that in the PTM representation, $\textsf{C}_n$ has only one nontrivial irrep, thus we only need to solve one single quality parameter. This is rather convenient, but it is hard to extend the conventional RB scheme to other group with multiple nontrivial irreps due to the multi-variable fitting problem, as shown in Eq.~\eqref{eq:pm2}, which has poor confidence intervals for $\{\Lambda_\sigma\}$.

To solve the fitting problem, in the following discussions, we employ the technique of character randomized benchmarking, which utilizes the generalized projection formula of Lemma \ref{lemma:project_formula} in the character theory.

\section{Benchmarking protocols}\label{sec:CCBandCAB}
\subsection{Character cycle benchmarking}\label{append:CCB}
Here, we present further technical details of the character cycle benchmarking (CCB) protocol. As shown in Lemma \ref{lemma:project_formula}, we can rewrite the projection equation of Eq.~\eqref{eq:project_formula} for a quantum operation group $\{\mathcal{G}\}$ in the PTM representation,
\begin{equation}
\Pi_\sigma = \frac{d_\sigma}{|\textsf{G}|}\sum_{g \in \textsf{G}} \chi_\sigma(g) \mathcal{G},
\end{equation}
then one can add an additional gate, as a character gate, to construct the projector for extracting the quality parameter $\lambda_\sigma$ associated with irrep $\sigma$. In CCB, we estimate the process fidelity of a target gate $U \in \textsf{C}_n$ via the twirling group $\textsf{G} = \textsf{P}_n$, which has $4^n$ irreps supported by the Pauli operator $\{P_j\}$, respectively. The schematic circuit is given in Fig.~1(a) in the main text. The detailed procedures of CCB protocol is given in Box \ref{box:CCB}.

\begin{mybox}[label={box:CCB}]{{Procedures for character cycle benchmarking}}
\begin{enumerate}
\item Sample a Pauli operator, $P_j \in \textsf{P}_n$, and initialize the state, $\ket{\psi_j}$, such that $\ket{\psi_j} = P_j\ket{\psi_j}$.

\item Sample a gate sequence, $(P^{(0)}, P^{(1)}, P^{(2)}, \cdots, P^{(2m)})$, where $P^{(i)} (0 \leq i \leq 2m)$ are sampled uniformly at random from Pauli group $\textsf{P}^n$.

\item Apply the gate sequence,
\begin{equation}\label{eq:sequence0}
\mathcal{S}_{\mathrm{ccb}} = \mathcal{U}_{\mathrm{inv}} \circ \mathcal{U}^{-1} \circ \mathcal{P}^{(2m)} \circ \mathcal{U} \circ \mathcal{P}^{(2m-1)} \circ \cdots \circ \mathcal{U}^{-1} \circ \mathcal{P}^{(2)} \circ \mathcal{U} \circ \mathcal{P}^{(1)} \circ \mathcal{P}^{(0)},
\end{equation}
where the inverse gate $\mathcal{U}_{\mathrm{inv}} = \mathcal{P}^{(1)} \circ \mathcal{U}^{-1} \circ\cdots\circ P^{(2m)} \circ \mathcal{U}$ is a local gate.

\item Measure in $P_j$ basis and compute the $P_j$-weighted survival probability,
\begin{equation}
f_j(m, \mathcal{S}_{\mathrm{ccb}}) = \chi_j(P^{(0)})d_j \Tr(Q_j \mathcal{S}(\rho_{\psi_j})),
\end{equation}
where the dimension of the representation associated with $P_j$ is $d_j = 1$, the corresponding character is $\chi_j(P^{(0)}) = 1$ if $P_j$ commutes with $P^{(0)}$ and -1 otherwise, and $Q_j$ and $\rho_{\psi_j}$ are the noisy implementations of the state preparation and measurement, respectively.

\item Repeat steps 2 to 4 for a sufficient number of sequences and estimate the average value
\begin{equation}\label{eq:CCBf_jm}
f_j(m) = \mathop{\mathbb{E}}_{\mathcal{S}_{\mathrm{ccb}}} f_j(m, \mathcal{S}_{\mathrm{ccb}}).
\end{equation}

\item Repeat steps 2 to 5 for different $m$ and fit to the function
\begin{equation}\label{eq:CCBfitting_model}
f_j(m) = A_j \lambda_j^{2m},
\end{equation}
where $A_j$ and $\lambda_j$ are fitting parameters.

\item Sample $M$ Pauli operators $\{P_j\}$ in step 1, and for each $P_j$, repeat steps 2 to 6. Finally estimate the CCB fidelity
\begin{equation}\label{eq:Fccb2}
F_{\mathrm{ccb}} = \frac{1}{M}\sum_{\{P_j\}} \lambda_j.
\end{equation}
\end{enumerate}
\end{mybox}

Note that in the CCB procedures, $P^{(0)}$ is the character gate, which we will merge into the gate $P^{(1)}$ in practical implementation. Besides, $P^{(0)}$ is not included in computing the inverse gate $U_{\mathrm{inv}}$. The average $P_j$-weighted survival probability of Eq.~\eqref{eq:CCBf_jm} can be further evaluated by,
\begin{equation}\label{eq:CCBfm1}
f_j(m) = d_j \lbra{Q_j} \left( (\mathop{\mathbb{E}}_{P_t \in\textsf{P}_n} \mathcal{P}_t^{-1} \mathcal{U}^{-1}\Lambda\mathcal{U} \mathcal{P}_t) (\mathop{\mathbb{E}}_{P_r \in\textsf{P}_n} \mathcal{P}_r^{-1}\Lambda \mathcal{P}_r) \right)^m (\mathop{\mathbb{E}}_{P\in\textsf{P}_n} \chi_j(P) \mathcal{P}) \lket{\rho_{\psi_j}},
\end{equation}
According to Proposition \ref{prop:twirling}, we have
\begin{equation}\label{eq:Lambda_P}
\begin{split}
\Lambda_\textsf{P} &\equiv \mathop{\mathbb{E}}_{P_r \in\textsf{P}_n} \mathcal{P}_r^{-1}\Lambda \mathcal{P}_r = \sum_{P_r \in \textsf{P}_n} \lbra{P_r}\Lambda\lket{P_r} \Pi_{P_r}, \\
\Lambda_\textsf{P}^U &\equiv \mathop{\mathbb{E}}_{P_t \in\textsf{P}_n} \mathcal{P}_t^{-1}\mathcal{U}^{-1}\Lambda\mathcal{U} \mathcal{P}_t = \sum_{P_t \in \textsf{P}_n} \lbra{P_t}\mathcal{U}^{-1}\Lambda\mathcal{U}\lket{P_t} \Pi_{P_t} = \mathcal{U}^{-1}\Lambda_\textsf{P}\mathcal{U}
\end{split}
\end{equation}
here $\Pi_{P_{r}} = \lket{P_{r}}\lbra{P_{r}}$ and $\Pi_{P_{t}} = \lket{P_{t}}\lbra{P_{t}}$ are the projectors onto the support spaces of $P_{r}$ and $P_{t}$ in the PTM representation, respectively. Then Eq.~\eqref{eq:CCBfm1} can be further simplified to
\begin{equation}\label{eq:CCBfm2}
f_j(m) = \lbra{Q_j} \left(\Lambda_\textsf{P}^U \Lambda_\textsf{P}\right)^m \left(d_j \mathop{\mathbb{E}}_{P\in\textsf{P}_n} \chi_j(P) \mathcal{P}\right) \lket{\rho_s}.
\end{equation}
According to the generalized projection formula of Lemma \ref{lemma:project_formula}, we have
\begin{equation}\label{eq:P_j}
\Pi_j = \frac{d_j}{4^n} \sum_{P\in\textsf{P}_n}
\chi_j(P)\mathcal{P},
\end{equation}
where $\forall j$, the dimension of the subspace associated with the $P_j$ is $d_j = 1$. Substituting Eq.~\eqref{eq:P_j} and Eq.~\eqref{eq:Lambda_P} into Eq.~\eqref{eq:CCBfm2}, we have
\begin{equation}
\begin{split}
f_j(m) &= \lbra{Q_j} \left(\sum_{P_t\in\textsf{P}_n} \lbra{P_t}\mathcal{U}^{-1}\Lambda\mathcal{U}\lket{P_t}\Pi_{P_t} \sum_{P_r\in\textsf{P}_n}\lbra{P_r}\Lambda\lket{P_r}\Pi_{P_r}\right)^m \Pi_j \lket{\rho_s}\\
&= \lbra{Q_j}\Pi_j\ket{\rho_s} \left(\lbra{P_j}\mathcal{U}^{-1}\Lambda\mathcal{U}\lket{P_j}\lbra{P_j}\Lambda\lket{P_j}\right)^m,
\end{split}
\end{equation}
Fit the survival probability $f_j(m)$ to the function of Eq.~\eqref{eq:CCBfitting_model}, one can obtain the fitting parameters,
\begin{equation}\label{eq:CCBfiting_params}
\begin{split}
\lambda_j &= \sqrt{\lbra{P_j}\mathcal{U}^{-1}\Lambda\mathcal{U}\lket{P_j} \lbra{P_j}\Lambda\lket{P_j}}, \\
A_j &= \lbra{Q_j}\Pi_j\lket{\rho_s}.
\end{split}
\end{equation}
Note that quality parameter $\lambda_j$ is one of the Pauli fidelities of channel $\sqrt{\mathcal{U}^{-1}\Lambda_\textsf{P}\mathcal{U} \Lambda_\textsf{P} }$. Then we can use the CCB protocol to sample Pauli operators $\{P_j\}$ and estimate process fidelity $F(\sqrt{\mathcal{U}^{-1}\Lambda_\textsf{P}\mathcal{U} \Lambda_\textsf{P} }, I)$, which is close to $F(\Lambda, I)$, as we shall see in Section \ref{append:CCBfidelity}.

\subsection{CCB fidelity}\label{append:CCBfidelity}
Here, we shall explain the physical meaning of the estimated process fidelity in the CCB protocol. As shown in Section \ref{append:CCB}, the process fidelity we estimate in the CCB protocol is called the CCB fidelity,
\begin{equation}\label{eq:CCBfidelity}
F_{\mathrm{ccb}}(\Lambda) = F(\sqrt{(\mathcal{U}^{-1}\Lambda_\textsf{P}\mathcal{U}) \Lambda_\textsf{P}}, I),
\end{equation}
here $\Lambda_P$ is the noise channel $\Lambda$ twirled by the Pauli gate set defined in Eq.~\eqref{eq:Lambda_P}. In order to compute the deviation between $F_{\mathrm{ccb}}(\Lambda)$ and $F(\Lambda, I)$, we employ the rearrangement inequality.

\begin{lemma}[Rearrangement inequality \cite{Ineq1952}]
	Consider two sets of $n$ real numbers $\{x_1, x_2, \cdots, x_n\}$, $\{y_1, y_2, \cdots, y_n\}$, and $x_1 \leq x_2 \leq \cdots \leq x_n, y_1 \leq y_2 \leq \cdots \leq y_n$, one have
	\begin{equation}
		x_ny_1 + \cdots + x_1y_n \leq x_{\tau(1)}y_1 + \cdots + x_{\tau(n)}y_n \leq x_1y_1 + \cdots + x_ny_n,
	\end{equation}
	where $\{x_{\tau(i)}\}$ is an arbitrary permutation of $\{x_i\}$.
\end{lemma}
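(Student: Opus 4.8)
The plan is the classical exchange (``bubble-sort'') argument. It suffices to establish the right-hand inequality, that is, $\sum_{i=1}^{n} x_{\tau(i)} y_i \le \sum_{i=1}^{n} x_i y_i$ for every permutation $\tau$ of $\{1,\dots,n\}$; the left-hand inequality then follows by symmetry, by applying the right-hand bound to the nondecreasing sequence $(-x_n, -x_{n-1}, \dots, -x_1)$ against $(y_1,\dots,y_n)$ and negating, which yields $\sum_i x_{\tau(i)} y_i \ge x_n y_1 + \cdots + x_1 y_n$.

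Write $S(\tau) = \sum_{i=1}^{n} x_{\tau(i)} y_i$. The key step is an adjacent-swap estimate: if $\tau$ has a position $a$ with $x_{\tau(a)} > x_{\tau(a+1)}$ and $\tau'$ is obtained from $\tau$ by transposing the entries at positions $a$ and $a+1$, then
\[
S(\tau') - S(\tau) = \bigl(x_{\tau(a)} - x_{\tau(a+1)}\bigr)\bigl(y_{a+1} - y_a\bigr) \ge 0,
\]
since $x_{\tau(a)} - x_{\tau(a+1)} > 0$ and $y_{a+1} - y_a \ge 0$ by the assumed orderings. Each such swap also decreases the number of inversions of $\tau$ (pairs $p<q$ with $x_{\tau(p)} > x_{\tau(q)}$) by exactly one. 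Since any arrangement that is not sorted in nondecreasing $x$-order contains an adjacent out-of-order pair, and the inversion count is a nonnegative integer, finitely many such swaps transform an arbitrary $\tau$ into an arrangement $\pi$ with $x_{\pi(1)} \le \cdots \le x_{\pi(n)}$ and $S(\pi) \ge S(\tau)$. But for any such $\pi$ one has $S(\pi) = \sum_i x_i y_i$, so $\sum_i x_{\tau(i)} y_i = S(\tau) \le S(\pi) = \sum_i x_i y_i$, which is the desired bound.

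The only subtlety — and the nearest thing to an obstacle — is the case of ties among the $x_i$: then a ``sorted'' $\pi$ need not be the identity permutation, but it still satisfies $S(\pi) = \sum_i x_i y_i$ because the discrepancy affects only indices with equal $x$-values, so the monovariant (inversion-count) argument still closes the proof. A shorter route is summation by parts, but the exchange argument has the advantage of exhibiting the extremal permutations explicitly, which is exactly what the subsequent application to the Pauli fidelities in Eq.~\eqref{eq:CCBfidelity} requires.
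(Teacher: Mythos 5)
Your proof is correct: the adjacent-transposition (exchange) argument is the standard proof of the rearrangement inequality, the swap identity $S(\tau')-S(\tau)=(x_{\tau(a)}-x_{\tau(a+1)})(y_{a+1}-y_a)\ge 0$ is right, the inversion-count monovariant guarantees termination, and you correctly handle both the tie case and the reduction of the lower bound to the upper bound via negation. The paper does not prove this lemma at all — it is stated as a cited classical result from Hardy--Littlewood--P\'olya \cite{Ineq1952} — so there is no authorial proof to compare against; your argument is a complete and standard justification.
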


Any Clifford unitary $U \in \textsf{C}_n$ satisfies, $\forall i$,
\begin{equation}
	U P_i U^{-1} = P_{u(i)}, 
\end{equation}
where $P_i \in \textsf{P}_n$ and the index permutation $u(i)$ is determined by the operation $U$. Thus, the diagonal terms of the Pauli channel $\mathcal{U}^{-1}\Lambda_\textsf{P}\mathcal{U}$ are a permutation of those in channel $\Lambda_\textsf{P}$. 

Denote $\{\omega_i\}$ as the Pauli fidelities of $\Lambda_\textsf{P}$ and $\{\omega_{u(i)}\}$ as the Pauli fidelities of the channel $\mathcal{U}^{-1}\Lambda_\textsf{P}\mathcal{U}$. Here, we assume $\forall i$, $\omega_i\ge0$ throughout the paper. In practice, the values of $\omega_i$ are normally close to 1 for a high-fidelity gate implementation. Using the rearrangement inequality, one have
\begin{equation}
	\omega_{u(1)}\omega_1 + \cdots + \omega_{u(4^n)}\omega_{4^n} \leq \omega_1^2 + \cdots + \omega_{4^n}^2,
\end{equation}
then one can further derive that
\begin{equation}
\sqrt{\omega_{u(1)}\omega_1} + \cdots + \sqrt{\omega_{u(4^n)}\omega_{4^n}} \leq \omega_1 + \cdots + \omega_{4^n}.
\end{equation}
From the definition of the process fidelity and CCB fidelity, one can conclude that
\begin{equation}
	F_{\mathrm{ccb}}(\Lambda) \leq F(\Lambda, I).
\end{equation}

\subsection{CCB with local gauge freedom}\label{app:LocalGauge}

In this section ,we give detailed analysis for the CCB protocol with local gauge freedom. As shown in Fig.~\ref{fig:gaugefreedom2}(a), we insert local gates $L$ and $L^{-1}$ between the twirling gates $P_i$ and the target gates $U$, $U^{-1}$. We define the local gate as $L=\bigotimes_{i=1}^n L_i$, where $L_i$ can be an arbitrary single-qubit gate. Note that the local gates can be absorbed into $P_i, U, U^{-1}$, as shown in Fig.~\ref{fig:gaugefreedom2}(b), and thus do not need to be implemented individually. Besides, the initial character gate $LP^{(0)}$ and the twirling gate $LP^{(1)}L^{-1}$ can be treated as single gates in experiments as well.

\begin{figure}[htbp!]
	\centering
	\includegraphics[width=1.0\textwidth]{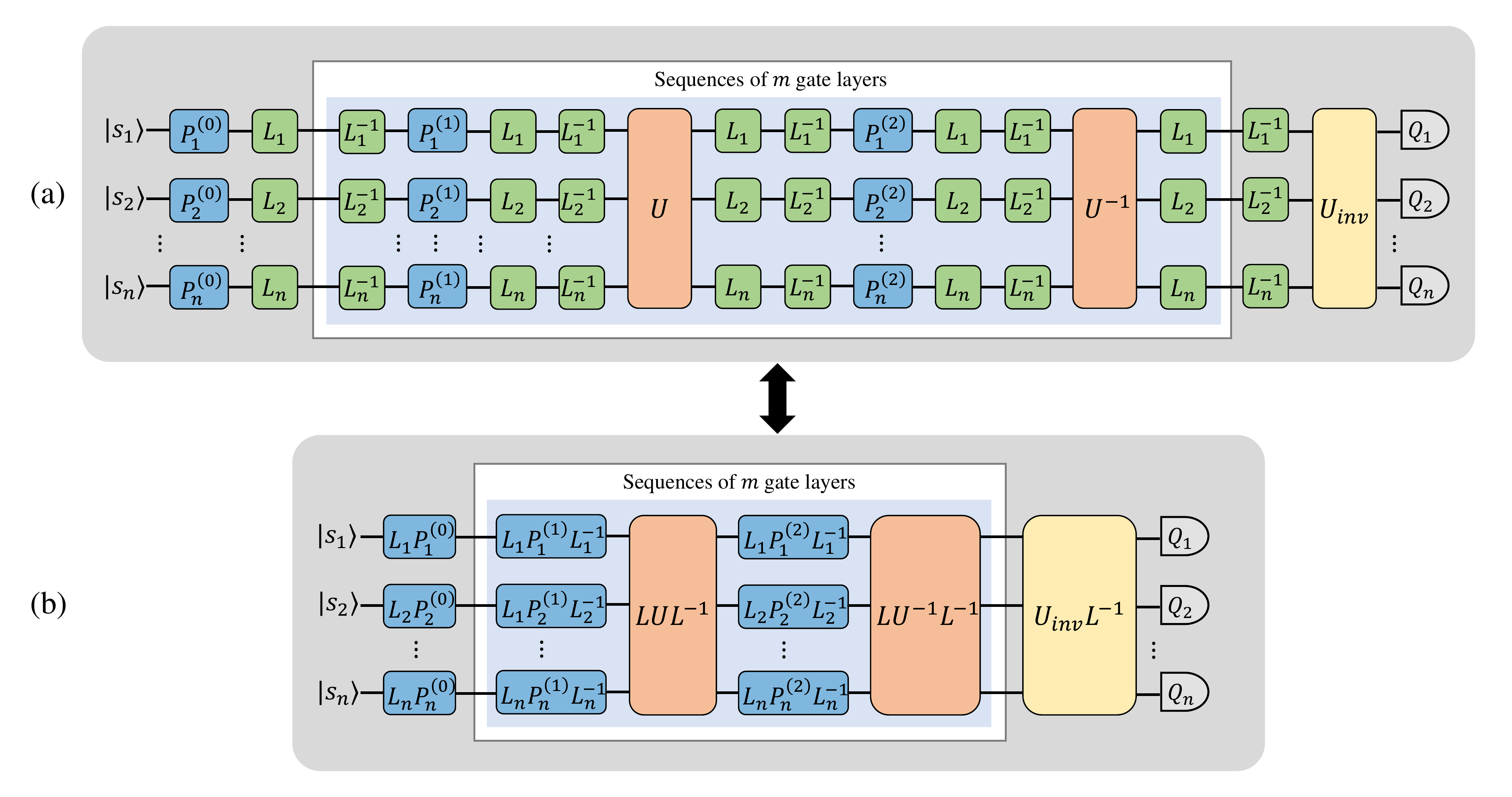}
	\caption{Illustrations of the CCB circuit with local gauge transformation. Circuit (a) is equivalent to the original CCB circuit in the main text if all of gates are ideal. The orange boxes represent the target gate $\mathcal{U}$ and its inverse gate $\mathcal{U}^{-1}$. The blue boxes represent the random Pauli gates. The green boxes represent the inserted local gates $L$ and $L^{-1}$, where $L=\bigotimes_{i=1}^n L_i$. The yellow box denote the inverse gate for the $m$ inner gate layers in the light blue box. In practice, we implement gates in circuit (b) while absorbing local gates $L$ and $L^{-1}$ into twirling gates and target gates. Here, the target gate after gauge transformation becomes $LUL^{-1}$.}
	\label{fig:gaugefreedom2}
\end{figure}

For simplicity, we assume that the local twirling gates are noiseless and the noise of $LUL^{-1}$ and $LU^{-1}L^{-1}$ are the same. Now, we can analyse the relationship between the sequence length of gate layers and the survival probability. Given the noise channel of the target gate $\Lambda$, the averaged $P_j$-weighted survival probability is given by,
\begin{equation}\label{eq:CCBlgf}
\begin{aligned}
f_j(m) &= d_j \lbra{Q_j} \left( \mathop{\mathbb{E}}_{P_t \in\textsf{P}_n} \mathcal{P}_t^{-1} \mathcal{L}^{-1} \mathcal{L} \mathcal{U}^{-1} \mathcal{L}^{-1} \Lambda \mathcal{L}\mathcal{U}\mathcal{L}^{-1} \mathcal{L} \mathcal{P}_t \right)^m \mathcal{L}^{-1} \mathcal{L} (\mathop{\mathbb{E}}_{P_r \in\textsf{P}_n} \mathcal{P}_r^{-1} \mathcal{L}^{-1}\Lambda\mathcal{L} \mathcal{P}_r) \mathcal{L}^{-1} \mathcal{L}  (\mathop{\mathbb{E}}_{P\in\textsf{P}_n} \chi_j(P) \mathcal{P}) \lket{\rho_s}\\
&= d_j \lbra{Q_j} \left( \mathop{\mathbb{E}}_{P_t \in\textsf{P}_n} \mathcal{P}_t^{-1}\mathcal{U}^{-1}\Lambda_L\mathcal{U}\mathcal{P}_t \right)^m (\mathop{\mathbb{E}}_{P_r \in\textsf{P}_n} \mathcal{P}_r^{-1}\Lambda_L \mathcal{P}_r) (\mathop{\mathbb{E}}_{P\in\textsf{P}_n} \chi_j(P) \mathcal{P}) \lket{\rho_s},
\end{aligned}
\end{equation}
where $\Lambda_L = \mathcal{L}^{-1} \Lambda \mathcal{L}$. Equation.~\eqref{eq:CCBlgf} is equivalent to Eq.~\eqref{eq:CCBfm1} except for substituting $\Lambda$ with $\Lambda_L$. That means running the CCB protocol with the circuit in Fig.~\ref{fig:gaugefreedom2} would provide an estimation of the process fidelity $F(\Lambda_L, I)$, which is equivalent to $F(\Lambda, I)$ as shown below,
\begin{equation}
F(\Lambda_L, I) = d^{-2}\Tr(\Lambda_L) = d^{-2}\Tr(\Lambda) = F(\Lambda, I).
\end{equation}
This accounts for the validity of the CCB protocol with local gauge freedom.

In the Pauli-Liouville representation, the off-diagonal terms of channel $\Lambda$ vanish after the Pauli twirling. When applying $L\textsf{P}_nL^{-1}$ for twirling, the off-diagonal terms of $\Lambda$ in the $L$-transformed Pauli-Liouville representation, defined on $L\textsf{P}_nL^{-1} / \sqrt{d}$, would vanish. Here, the local gauge transformation merely changes the representation of $\Lambda$, while maintains the exponential decay form of the survival probability. From another point of view, the CCB protocol with local gauge freedom is characterizing the diagonal terms of $\chi$-matrix of the noise channel under basis $L{\sf{P}}_nL^{-1}$ instead of $\sf{P}_n$. As the average fidelity is irrelevant to the representation basis, all the analysis in previous subsections still applies.

\subsection{Character-average benchmarking}\label{append:CAB}
Here, we shall give more details of the CAB protocol. The CAB protocol can be seen as an improvement based on the CCB protocol, which adds an additional local Clifford gate $C \in \textsf{C}_1^{\otimes n}$ to the beginning and $C^{-1}$ to the end of the inner gate sequence, respectively, as shown in Fig.~1(b) in the main text. The detailed procedures of CAB protocol is given in Box 1 in the main text.

Let us first analyze the irreps of $\textsf{C}_1^{\otimes n}$ in the PTM representation. According to Eq.~\eqref{eq:directsum}, the PTM representation of 1-qubit Clifford group $\textsf{C}_1$ is the direct sum of the trivial representation, $I$, associated with the identity element, and a nontrivial irrep, $\Upsilon$, supported by the subspace defined on the Pauli matrices $\{X, Y, Z\}$. For $n$-qubit local Clifford group $\textsf{C}_1^{\otimes n}$, there exists $2^n$ irreps
\begin{equation}
\mathcal{R}^n = \{I, \Upsilon\}^{\otimes n}.
\end{equation}

According to Proposition \ref{prop:twirling}, the twirling over $\textsf{C}_1^{\otimes n}$ can be written as
\begin{equation}\label{eq:LambdaCn}
\Lambda_\textsf{C} = \sum_{\sigma_k \in \mathcal{R}^n} \frac{\Tr(\Lambda \Pi_{\sigma_k})}{\Tr(\Pi_{\sigma_k})} \Pi_{\sigma_k},
\end{equation}
where $\Pi_{\sigma_k}$ denotes the projector onto the support space of $\sigma_k$ in the PTM representation, $\Tr(\Pi_{\sigma_k}) = 3^{\pi(\sigma_k)}$ is the dimension of $\sigma_k$, $\pi(\sigma_k)$ is the counting of $\Upsilon$ in $\sigma_k$. In the following, we will sometimes abuse the irrep notation and treat $\sigma_k$ as the Pauli operator set that defines the support space of the irrep $\sigma_k$ in the PTM representation.

It is obvious to see that channel $\Lambda_\textsf{C}$ is a partial depolarizing channel, which has the same diagonal values in the subspace associated with the irrep $\sigma_k$ for each $k$. We call the diagonal parameter as the \textit{local Clifford eigenvalue}.

\begin{definition}[Local Clifford eigenvalue]
The twirling channel $\Lambda_{\sf{C}}$ over $n$-qubit local Clifford group ${\sf{C}}_1^{\otimes}$ has $2^n$ quality parameters $\{\gamma_k\}$, which are defined as the local Clifford eigenvalues
\begin{equation}
\gamma_k = \frac{\Tr(\Lambda \Pi_{\sigma_k})}{\Tr(\Pi_{\sigma_k})} = \frac{1}{|\sigma_k|} \sum_{P_j \in \sigma_k} \lbra{P_j}\Lambda\lket{P_j},
\end{equation}
where $\gamma_k$ can be seen as an average value of the Pauli fidelities $\{\lbra{P_j}\Lambda\lket{P_j}|P_j \in \sigma_k\}$.
\end{definition}

We will give a concrete example to show how $\textsf{C}_1^{\otimes n}$ twirls an arbitrary channel. For a 2-qubit local Clifford group $\textsf{C}_1 \otimes \textsf{C}_1$, denote its irreps as
\begin{equation}
\begin{split}
\sigma_1 &= \{I \otimes I\} = \{II\}, \\
\sigma_2 &= \{I \otimes \Upsilon\} = \{IX, IY, IZ\}, \\
\sigma_3 &= \{\Upsilon \otimes I\} = \{XI, YI, ZI\}, \\
\sigma_4 &= \{\Upsilon \otimes \Upsilon\} = \{XX, XY, XZ, YX, YY, YZ, ZX, ZY, ZZ\},
\end{split}
\end{equation}
and the dimensions of these irreps are
\begin{equation}
\begin{split}
\Tr(\Pi_{\sigma_1}) &= |\sigma_1| = 3^0, \\
\Tr(\Pi_{\sigma_2}) &= |\sigma_2| = 3^1, \\
\Tr(\Pi_{\sigma_3}) &= |\sigma_3| = 3^1, \\
\Tr(\Pi_{\sigma_4}) &= |\sigma_4| = 3^2. \\
\end{split}
\end{equation}
Then the twirling of $\textsf{C}_1^{\otimes 2}$ over a channel, $\Lambda$, is given by Eq.~\eqref{eq:LambdaCn},
\begin{equation}
\Lambda_\textsf{C} = \Tr(\Lambda\Pi_{\sigma_1})\Pi_{\sigma_1} + \frac{\Tr(\Lambda \Pi_{\sigma_2})}{3}\Pi_{\sigma_2} + \frac{\Tr(\Lambda \Pi_{\sigma_3})}{3}\Pi_{\sigma_3} + \frac{\Tr(\Lambda \Pi_{\sigma_4})}{9}\Pi_{\sigma_4},
\end{equation}
which has 4 local Clifford eigenvalues $\{\Tr(\Lambda\Pi_{\sigma_1}), \Tr(\Lambda \Pi_{\sigma_2}) / 3, \Tr(\Lambda \Pi_{\sigma_3}) / 3, \Tr(\Lambda \Pi_{\sigma_4}) / 9\}$.

Now return to the $n$-qubit case. The randomization over an inner gate layer in CAB is the same as CCB and generate a composite Pauli channel,
\begin{equation}
\Lambda_\mathrm{in} = \mathcal{U}^{-1}\Lambda_\textsf{P}\mathcal{U} \Lambda_\textsf{P},
\end{equation}
where $\Lambda_\textsf{P}$ is a Pauli twirling channel defined in Eq.~\eqref{eq:Lambda_P}. The initial and last random local Clifford gates $\mathcal{C}, \mathcal{C}^{-1}$ in CAB together perform a local Clifford twirling over the inner Pauli channel,
\begin{equation}\label{eq:Lambda_C}
\begin{split}
\Lambda_\textsf{C}(m) &= \mathop{\mathbb{E}}_{C \in \textsf{C}_1^{\otimes n}} \mathcal{C}^{-1} \Lambda_\mathrm{in}^m \mathcal{C} \\
&= \sum_{\sigma_k \in \mathcal{R}^n} \frac{\Tr(\Lambda_\mathrm{in}^m \Pi_{\sigma_k})}{\Tr(\Pi_{\sigma_k})} \Pi_{\sigma_k}.
\end{split}
\end{equation}
Note that $\Lambda_\textsf{C}(m)$ is a partial depolarizing channel, as we mentioned above. Thus one can extract  local Clifford eigenvalues via performing corresponding measurement observables $P \in \sigma_k$. In the CAB protocol, by measuring in $Z^{\otimes n}$ basis, we can infer the measurement results of the $2^{n}$ observables $\{I, Z\}^{\otimes n}$, which span all the irrep spaces in $\mathcal{R}^n$.
We assume that the measurement $Z^{\otimes n}$ is performed with negligible errors that will not influence the fidelity estimations too much. If one wants completely remove the SPAM errors, an additional character gate from the gate set $\{I, Z\}^{\otimes n}$ can be added to the begin of the CAB gate sequence, which is similar to the CCB protocol.

The survival probability of Eq.~\eqref{eq:CABfm} in the main text can be derived as, for Pauli operator $Q_k \in \{I, Z\}^{\otimes n}$,
\begin{equation}\label{eq:CABfm1}
f_k(m) = \lbra{Q_k}\Lambda_\textsf{C}(m) \lket{\rho_s},
\end{equation}
which contains fidelity information for irrep $\sigma_k$ such that $Q_k \in \sigma_k$. Substituting Eqs.~\eqref{eq:Lambda_P} and \eqref{eq:Lambda_C} to Eq.~\eqref{eq:CABfm1}, we have
\begin{equation} \label{eq:fkm}
\begin{split}
f_k(m) &= \lbra{Q_k} \Pi_{Q_k}\Pi_{\sigma_k} \frac{\Tr(\Lambda_\mathrm{in}^m \Pi_{\sigma_k})}{\Tr(\Pi_{\sigma_k})} \lket{\rho_s} \\
&= \lbraket{Q_k}{\rho_s} \sum_{P_j \in \sigma_k} \frac{\left(\lbra{P_j}\mathcal{U}^{-1}\Lambda\mathcal{U}\lket{P_j} \lbra{P_j}\Lambda\lket{P_j}\right)^m}{\Tr(\Pi_{\sigma_k})} \\
&= \lbraket{Q_k}{\rho_s} \sum_{P_j \in \sigma_k} \frac{\lambda_j^{2m}}{\Tr(\Pi_{\sigma_k})},
\end{split}
\end{equation}
where $\lambda_j$ is the Pauli fidelity of the channel $\sqrt{\mathcal{U}^{-1}\Lambda_\textsf{P}\mathcal{U} \Lambda_\textsf{P} }$ defined in Eq.~\eqref{eq:CCBfiting_params}. Fit the survival probability $f_k(m)$ to the function $f_k(m) = A_k \mu_k^{2m}$, one can solve the quality parameters $\{\mu_k\}$ and estimate the CAB fidelity as
\begin{equation}\label{eq:CABfidelitysum}
F_\mathrm{cab} = 4^{-n}\sum_{\sigma_k \in \mathcal{R}^n} d_{\sigma_k} \mu_k,
\end{equation}
where $d_{\sigma_k} = \Tr(\Pi_{\sigma_k})$ is the dimension of irrep $\sigma_k$. 

Note that our fitting method averages multi-exponential decays $\{\lambda_j^{2m}\}$ into one exponential decay $\mu_k^{2m}$, which leads to $\mu_k \geq \mathrm{ave}_{P_j \in \sigma_k}\lambda_j$, as proved in Section \ref{append:CABfidelity}. We will further prove that the CAB fidelity is the upper bound of the CCB fidelity in Section \ref{append:CABfidelity}.

\subsection{Fitting analysis for CAB}\label{append:CABfidelity}
Here, we shall analyze the fitting results in the CAB protocol and explain the CAB fidelity $F_\mathrm{cab}$ in details. As shown in Eq.~\eqref{eq:fkm}, the survival probability in CAB is given by
\begin{equation}
	f_k(m) = \frac{\lbraket{Q_k}{\rho_s}}{|\sigma_k|} \sum_{P_j \in \sigma_k} \lambda_j^{2m},
\end{equation}
where $|\sigma_k| = \Tr(\Pi_{\sigma_k})$. Fit $f_k(m)$ to the function,
\begin{equation}\label{eq:CABfitting_func1}
	f_k(m) = A_k \mu_k^{2m},
\end{equation}
with $A_k >0, \mu_k >0$. Take the natural logarithm for both sides in Eq.~\eqref{eq:CABfitting_func1},
\begin{equation}\label{eq:CABfitting_func2}
	y = \beta_0  + \beta_1 m,
\end{equation}
where
\begin{equation}
	\begin{split}
		&y = \ln f_k(m), \\
		&\beta_0 = \ln A_k, \\
		&\beta_1 = 2\ln\mu_k.
	\end{split}
\end{equation}
Next we will employ the \textit{least-squares estimation} for the linear regression of Eq.~\eqref{eq:CABfitting_func2}. Set $\{m_1, m_2, \cdots, m_q\}$ as inputs, assume $m_1 < m_2 < \cdots < m_q$ without loss of generality. The regression matrix $\mathbf{M}$ and the observed values $\mathbf{Y}$ are given by
\begin{equation}
	\mathbf{M} = \begin{pmatrix} 1 & m_1\\ 1 & m_2\\ \vdots & \vdots \\ 1 & m_q \end{pmatrix}, \quad
	\mathbf{Y} = \begin{pmatrix} y_1 \\ y_2 \\ \vdots \\ y_q\end{pmatrix}.
\end{equation}
Using the least-squares estimation, one can solve the optimum parameters for the model,
\begin{equation}
\hat{\bm{\beta}} = \begin{pmatrix} \hat{\beta_0} \\ \hat{\beta_1} \end{pmatrix} = (\mathbf{M}^T\mathbf{M})^{-1} \mathbf{M}^T \mathbf{Y}.
\end{equation}
The fitting parameter of interest in CAB is $\mu_k$,
\begin{equation}\label{eq:mukbeta}
	\begin{split}
		\mu_k &= \exp(\frac{\hat{\beta}_1}{2}), \\
	\end{split}
\end{equation}
where
\begin{equation}
	\begin{split}
		\hat{\beta_1} &= \frac{q\sum_{i=1}^q m_i y_i - \sum_{i=1}^q m_i \sum_{i=1}^q y_i}{q\sum_{i=1}^q m_i^2 - (\sum_{i=1}^q m_i)^2}.
	\end{split}
\end{equation}
Then, we can have the following lemma.

\begin{lemma}\label{lemma:fitting}
	Denote $\{\mu_k\}$ as the fitting parameters we solve in the CAB protocol, given in Eq.~\eqref{eq:mukbeta}. For each $\mu_k$, we have
	\begin{equation}
		\mu_k \geq \frac{1}{|\sigma_k|}\sum_{P_j \in \sigma_k}\lambda_j,
	\end{equation}
	and the CAB fidelity is the upper bound of the CCB fidelity,
	\begin{equation}
		F_\mathrm{cab}(\Lambda) \geq F_\mathrm{ccb}(\Lambda).
	\end{equation}
\end{lemma}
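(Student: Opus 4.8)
The plan is to reduce $\mu_k \ge |\sigma_k|^{-1}\sum_{P_j\in\sigma_k}\lambda_j$ to a convexity property of the single-exponential least-squares fit, and then obtain $F_\mathrm{cab}\ge F_\mathrm{ccb}$ by summing over the irreps. First I would strip the irrelevant constant from the data: by Eq.~\eqref{eq:fkm} we have $f_k(m)=\lbraket{Q_k}{\rho_s}\,h_k(m)$ with $h_k(m):=|\sigma_k|^{-1}\sum_{P_j\in\sigma_k}\lambda_j^{2m}$, so $y_i:=\ln f_k(m_i)=\mathrm{const}+\ln h_k(m_i)$, and the fitted slope $\hat\beta_1$ — hence $\mu_k=\exp(\hat\beta_1/2)$ — does not see the constant. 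Writing $\bar\lambda_k:=|\sigma_k|^{-1}\sum_{P_j\in\sigma_k}\lambda_j$ and $\phi(m):=\ln h_k(m)-2m\ln\bar\lambda_k$, the slope of the least-squares line through $\{(m_i,y_i)\}$ equals $2\ln\bar\lambda_k$ plus the slope of the least-squares line through $\{(m_i,\phi(m_i))\}$ (ordinary least squares is linear in the ordinate, and fitting a line to colinear data returns that line). Thus the desired $\mu_k\ge\bar\lambda_k$ is precisely the claim that this residual slope is nonnegative.

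Next I would record three elementary facts about $\phi$. (a) $\phi$ is convex on $[0,\infty)$: each $m\mapsto\lambda_j^{2m}=e^{2m\ln\lambda_j}$ is log-linear, hence log-convex; a sum of log-convex functions is log-convex, so $\ln h_k$ is convex, and subtracting the linear term $2m\ln\bar\lambda_k$ leaves it convex. (b) $\phi(0)=\ln h_k(0)=\ln 1=0$. (c) $\phi(m)\ge0$ for all $m\ge\tfrac12$: when $2m\ge1$ the map $x\mapsto x^{2m}$ is convex on $[0,1]$, so Jensen's inequality gives $h_k(m)=\mathbb{E}_{P_j\in\sigma_k}[\lambda_j^{2m}]\ge(\mathbb{E}_{P_j\in\sigma_k}\lambda_j)^{2m}=\bar\lambda_k^{2m}$, i.e.\ $\ln h_k(m)\ge2m\ln\bar\lambda_k$. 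Here I use that the sequence lengths in the protocol satisfy $m_i\ge1$, and that $A_k>0$, $h_k(m)>0$ so all logarithms are defined.

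Then I would invoke the identity expressing the OLS slope as a convex combination of secant slopes: for data $\{(m_i,z_i)\}$ with $m_1<\cdots<m_q$,
\[
\hat\beta_1=\frac{\sum_{i<j}(m_j-m_i)(z_j-z_i)}{\sum_{i<j}(m_j-m_i)^2}=\sum_{i<j}\frac{(m_j-m_i)^2}{\sum_{k<l}(m_l-m_k)^2}\,\frac{z_j-z_i}{m_j-m_i},
\]
which follows from the elementary sum rearrangements $\sum_{i<j}(m_j-m_i)^2=q\sum_i(m_i-\bar m)^2$ and $\sum_{i<j}(m_j-m_i)(z_j-z_i)=q\sum_i(m_i-\bar m)(z_i-\bar z)$. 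Applying this with $z_i=\phi(m_i)$: for any $i<j$, convexity of $\phi$ together with $\phi(0)=0$ gives $\tfrac{\phi(m_j)-\phi(m_i)}{m_j-m_i}\ge\tfrac{\phi(m_i)-\phi(0)}{m_i-0}=\tfrac{\phi(m_i)}{m_i}\ge0$ by (c), so the convex combination is nonnegative. Hence the residual slope is $\ge0$, i.e.\ $\mu_k\ge\bar\lambda_k$. Finally, since $d_{\sigma_k}=\Tr(\Pi_{\sigma_k})=|\sigma_k|$ and $\{\sigma_k\}_k$ partitions the $4^n$ Pauli operators, Eq.~\eqref{eq:CABfidelitysum} yields $F_\mathrm{cab}=4^{-n}\sum_k|\sigma_k|\mu_k\ge4^{-n}\sum_k\sum_{P_j\in\sigma_k}\lambda_j=4^{-n}\sum_j\lambda_j=F_\mathrm{ccb}(\Lambda)$. (Alternatively, one can keep the $\hat\beta_1=\sum_i(m_i-\bar m)\phi(m_i)/\sum_i(m_i-\bar m)^2$ form and argue by a rearrangement-type estimate, using that $\phi(m_i)/m_i$ is nondecreasing in $m_i$.)

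I expect the main obstacle to be conceptual rather than computational: one has to argue that the single-exponential fit necessarily \emph{over}estimates the average decay rate even though $f_k(m)$ is a genuine multi-exponential with unknown spectrum — this is exactly where log-convexity of $h_k$ (fact (a)) and the secant-slope representation of the OLS slope combine, and it is worth emphasising that no hypothesis on the $\lambda_j$ beyond nonnegativity is used. The only other point needing care is the hypothesis $m_i\ge1$, which is precisely what makes fact (c), and therefore the nonnegativity of the secant slopes from the origin, hold.
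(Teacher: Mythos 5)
Your proof is correct, and it reaches the lemma by the same overall strategy as the paper --- bound the least-squares slope from below by secant slopes of $\ln f_k(m)$, bound those secant slopes from below by $2\ln\bigl(|\sigma_k|^{-1}\sum_{P_j\in\sigma_k}\lambda_j\bigr)$ via a Jensen-type inequality for $x\mapsto x^{2m}$, and then sum over irreps using $\sum_k|\sigma_k|=4^n$ --- but the two technical ingredients differ. Where the paper asserts, somewhat informally, that the fitted slope is at least the minimum \emph{adjacent} secant slope (``there exist two adjacent observations whose slope in between is smaller than the fitting slope''), you invoke the exact identity expressing the OLS slope as a convex combination of all pairwise secant slopes; this is cleaner and closes a small rigor gap in the paper's justification of Eq.~\eqref{eq:CABslop}. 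And where the paper controls each adjacent secant slope by the Chebyshev sum inequality (the supermultiplicativity $h_k(a+b)\geq h_k(a)\,h_k(b)$ of $h_k(m)=|\sigma_k|^{-1}\sum_{P_j\in\sigma_k}\lambda_j^{2m}$) followed by convexity of the power function, you subtract the linear trend and use log-convexity of $h_k$ together with the three-chord lemma anchored at $\phi(0)=0$. Both routes ultimately rest on the same two hypotheses --- nonnegativity of the $\lambda_j$ (assumed throughout the paper) and $m_i\geq 1$ so that $x\mapsto x^{2m}$ is convex --- but your version makes the dependence on $m_i\geq 1$ explicit, which the paper leaves implicit. The final summation step is identical in both arguments.
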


\begin{proof}
For a simple linear regression using the least-squares estimation, there exist some observations above the fitting curve while the others are below the curve. Then one can conclude that there exists two adjacent observation whose slope in between is smaller than the fitting slope of the curve. Thus for the simple linear regression in CAB, the fitting slope holds
	\begin{equation}\label{eq:CABslop}
		\hat{\beta_1} \geq \min_i \left(\frac{y_{i + 1} - y_i}{m_{i + 1} - m_i}\right).
	\end{equation}
According to the Chebyshev sum inequality \cite{Ineq1952}, $\forall i$,
	\begin{equation}\label{eq:CABineq1}
		\begin{split}
			\frac{1}{|\sigma_k|}\sum_{P_j \in \sigma_k} \lambda_j^{2m_{i + 1}} &= \frac{1}{|\sigma_k|}\sum_{P_j \in \sigma_k} \lambda_j^{2m_i}  \lambda_j^{2(m_{i+1} - m_i)} \\
			&\geq \left(\frac{1}{|\sigma_k|}\sum_{P_j \in \sigma_k} \lambda_j^{2m_i}\right)  \left(\frac{1}{|\sigma_k|}\sum_{P_j \in \sigma_k} \lambda_j^{2(m_{i+1} - m_i)}\right).
		\end{split}
	\end{equation}
	According to Eq.~\eqref{eq:CABineq1}, we can further derive that
	\begin{equation}\label{eq:CABineq2}
		\begin{split}
			y_{i+1} - y_i &= \ln(\frac{\sum_{P_j\in\sigma_k} \lambda_j^{2m_{i+1}}}{\sum_{P_j\in\sigma_k} \lambda_j^{2m_i}}) \\
			&\geq \ln(\frac{1}{|\sigma_k|} \sum_{P_j\in\sigma_k} \lambda_j^{2(m_{i+1} - m_i)}) \\
			&\geq 2(m_{i+1} - m_i)\ln(\frac{1}{|\sigma_k|}\sum_{P_j\in\sigma_k}\lambda_j),
		\end{split}
	\end{equation}
where the last inequality comes from the convexity of function $x^{2(m_{i+1} - m_i)}$.
	
	Substituting Eq.~\eqref{eq:CABineq2} into Eq.~\eqref{eq:CABslop}, we have
	\begin{equation}
		\hat{\beta}_1 \geq 2\ln(\frac{1}{|\sigma_k|}\sum_{P_j\in\sigma_k}\lambda_j).
	\end{equation}
	Since $\hat{\beta}_1 = 2\ln\mu_k$, we can conclude that
	\begin{equation}
		\mu_k \geq \frac{1}{|\sigma_k|}\sum_{P_j \in \sigma_k}\lambda_j.
	\end{equation}
	Besides, the dimensions of irreps of the local Clifford gate $\textsf{C}_1^{\otimes n}$ satisfies
	\begin{equation}
		\sum_{\sigma_k \in \mathcal{R}^n} |\sigma_k| = \sum_{\sigma_k \in \mathcal{R}^n} \Tr(\Pi_{\sigma_k}) = 4^n.
	\end{equation}
Thus, 
	\begin{equation}
		\begin{split}
			F_\mathrm{cab} &= 4^{-n}\sum_{\sigma_k \in \mathcal{R}^n} |\sigma_k|\mu_k \\
			&\geq 4^{-n}\sum_{\sigma_k \in \mathcal{R}^n} \sum_{P_j \in \sigma_k}\lambda_j\\
			&= F_\mathrm{ccb}(\Lambda),
		\end{split}
	\end{equation}
	which completes the proof.
\end{proof}

\section{Statistical analysis}\label{sec:statis}
Here, we analyze the statistical fluctuation of the CCB protocol with finite sampling. Recall that the CCB fidelity is given by Eq.~\eqref{eq:CCBfidelity},
\begin{equation}
F_\mathrm{ccb}(\Lambda) = \frac{1}{4^n}\sum_j\lambda_j,
\end{equation}
where $\lambda_j$ is the Pauli fidelity of channel $\sqrt{\mathcal{U}^{-1}\Lambda_\textsf{P}\mathcal{U}\Lambda_\textsf{P}}$ related to the Pauli operator $P_j$. It is impractical to solve all the Pauli fidelities via the character RB method when qubit number $n$ grows large. One can sample a finite number of Pauli irreps to estimate the process fidelity of the channel. The reliability of these estimates is expressed by the confidence levels.

In the CCB protocol, the Pauli fidelity is obtained by fitting the survival probability and the gate sequence length, which is very hard for the statistical analysis. For simplicity, we take two points in the fitting diagram to analyze the fluctuation of the slope. In what follows, we use the notation that $\hat{x}$ is an estimator of a quantity, $\bar{x}$, where the bar denotes that either an expected value or a sample average has been taken over realizations of random variable $x$.

We begin by describing the CCB protocol in a statistical way.
\begin{enumerate}
\item 
Choose Pauli operator $P_j \in \textsf{P}_n$.

\item 
Choose positive integer $m_1$

\item 
Choose random gate sequence $s_1$ from gate set $\mathbb{S}_{m_1}$ and obtain an estimate of the $P_j$-weighted probability, $\hat{f}_j(m_1, s_1)$;

\item Repeat step 3 $K_1$ times to estimate
\begin{equation}
\hat{f}_j(m_1) = \frac{1}{K_1}\sum_{s_1} \hat{f}_j(m_1, s_1),
\end{equation}
where the $K_1$ gate sequences form a gate sequence set, $\mathbb{S}_1 \subset \mathbb{S}_{m_1}$.

\item 
Choose another bigger positive integer, $m_2 > m_1$;

\item 
Choose random gate sequence $s_2$ from gate sequence set $\mathbb{S}_{m_2}$ and obtain an estimate of the $P_j$-weighted probability, $\hat{f}_j(m_2, s_2)$;

\item Repeat step 6 $K_2$ times to estimate
\begin{equation}
\hat{f}_j(m_2) = \frac{1}{K_2}\sum_{s_2} \hat{f}_j(m_2, s_2),
\end{equation}
where the $K_2$ gate sequences form a gate sequence set, $\mathbb{S}_2 \subset \mathbb{S}_{m_2}$.

\item Estimate the Pauli fidelity
\begin{equation}\label{eq:ratio_estimator}
\hat{\lambda}_j = \left(\frac{\hat{f}_j(m_2)}{\hat{f}_j(m_1)}\right)^{\frac{1}{m_2 - m_1}}.
\end{equation}

\item 
Sample $M$ Pauli operators $\{P_j\}$ in step 1, and for each $P_j$, repeat steps 2-8. Finally estimate the CCB fidelity
\begin{equation}\label{eq:F_estimator}
\hat{F}_\mathrm{ccb} = \frac{1}{M}\sum_{\{P_j\}} \hat{\lambda}_j.
\end{equation}

\end{enumerate}

The main statistical errors are divided into two parts in the above protocol. The first comes from the sampling randomness of the Pauli fidelity estimation $\hat{\lambda}_j$ for each $P_j$ in Eq.~\eqref{eq:ratio_estimator}. The second comes from the sampling randomness of the Pauli operators $\{P_j\}$ in Eq.~\eqref{eq:F_estimator}. We shall calculate the confidence levels for these two sampling randomness respectively.

We first calculate the bias of the Pauli fidelity estimation
\begin{equation}\label{eq:f_bias}
\Delta \lambda_j := \abs{\mathop{\mathbb{E}}\limits_{\mathbb{S}_1, \mathbb{S}_2}[\hat{\lambda}_j]-\bar{\lambda}_j},
\end{equation}
where $\bar{\lambda}_j$ denotes the theoretical value of the Pauli fidelity,
\begin{equation}
\bar{\lambda}_j = \left(\frac{\bar{f}_j(m_2)}{\bar{f}_j(m_1)}\right)^{\frac{1}{m_2 - m_1}}.
\end{equation}
Here, the expectations of the probability estimators are taken over the gate sequences,
\begin{equation}
\begin{split}
\bar{f}_j(m_1) = \mathop{\mathbb{E}}\limits_{s_1\in\mathbb{S}_{m_1}} [\hat{f}_j(m_1, s_1)], \\
\bar{f}_j(m_2) = \mathop{\mathbb{E}}\limits_{s_2\in\mathbb{S}_{m_2}} [\hat{f}_j(m_2, s_2)].
\end{split}
\end{equation}

In order to calculate the expectation value of the ratio estimator in Eq.~\eqref{eq:f_bias}, we take the expectations over the gate sequence sets $\mathbb{S}_1, \mathbb{S}_2$ on both sides of Eq.~\eqref{eq:ratio_estimator},
\begin{equation}
\mathop{\mathbb{E}}\limits_{\mathbb{S}_1, \mathbb{S}_2}[\hat{\lambda}_j] = \mathop{\mathbb{E}}\limits_{\mathbb{S}_2} [\hat{f}_j(m_2)^{\frac{1}{m_2 - m_1}}]   \mathop{\mathbb{E}}\limits_{\mathbb{S}_1} [\hat{f}_j(m_1)^{-\frac{1}{m_2 - m_1}}].
\end{equation}
The expectations for $\hat{f}_j(m_1)$ and $\hat{f}_j(m_2)$ can be separated since the random variables $s_1$ and $s_2$ are independent.
Denote
\begin{equation}
\begin{split}
& \hat{b} := \hat{f}_j(m_2), \quad \bar{b} := \bar{f}_j(m_2), \\
& \hat{a} := \hat{f}_j(m_1), \quad \bar{a} := \bar{f}_j(m_1), \\
& t := \frac{1}{m_2 - m_1},
\end{split}
\end{equation}
we have
\begin{equation}
\mathop{\mathbb{E}}\limits_{\mathbb{S}_1, \mathbb{S}_2}[\hat{\lambda}_j] = \mathop{\mathbb{E}}\limits_{\mathbb{S}_1} [\hat{a}^{-t}] \mathop{\mathbb{E}}\limits_{\mathbb{S}_2} [\hat{b}^t].
\end{equation}
Supposing
\begin{equation}\label{eq:ab_assumptions}
\begin{split}
\delta_{a} &:= \frac{\hat{a} - \bar{a}}{\bar{a}} \ll 1, \\
\delta_{b} &:= \frac{\hat{b} - \bar{b}}{\bar{b}} \ll 1,
\end{split}
\end{equation}
and using the second-order approximation of the Taylor expansion at $\delta_{a} = 0, \delta_{b} = 0$ for $\hat{a}^{-t}, \hat{b}^t$, respectively, we have
\begin{equation}\label{eq:Taylor_ab}
\begin{split}
\hat{a}^{-t} &= \bar{a}^{-t} (1 + \delta_{a})^{-t} \\
&= \bar{a}^{-t} \left[1 - t\delta_{a} + t(t + 1)\delta_{a}^2 + \mathcal{O}(\delta_{a}^3)\right], \\
\hat{b}^t &= \bar{b}^t (1 + \delta_{b})^t \\
&= \bar{b}^t \left[1 + t\delta_{b} + t(t - 1)\delta_{b}^2 + \mathcal{O}(\delta_{b}^3)\right].
\end{split}
\end{equation}
Take expectations over $s_1, s_2$ for Eq.~\eqref{eq:Taylor_ab},
\begin{equation}
\begin{split}
& \mathop{\mathbb{E}}\limits_{\mathbb{S}_1}[\hat{a}^{-t}] = \bar{a}^{-t} \left[1 + t(t + 1)\frac{\mathrm{Var}[\hat{a}]}{\bar{a}^2} + \mathcal{O}(\delta_{a}^3)\right], \\
& \mathop{\mathbb{E}}\limits_{\mathbb{S}_2}[\hat{b}^t] = \bar{b}^t \left[1 + t(t - 1)\frac{\mathrm{Var}[\hat{b}]}{\bar{b}^2} + \mathcal{O}(\delta_{b}^3)\right].
\end{split}
\end{equation}
Then the expectation value of the ratio estimator is given by,
\begin{equation}\label{eq:f_expect}
\begin{split}
\mathop{\mathbb{E}}\limits_{\mathbb{S}_1, \mathbb{S}_2}[\hat{\lambda}_j] &= \left(\frac{\bar{b}}{\bar{a}}\right)^t \left[1 + t(t + 1)\frac{\mathrm{Var}[\hat{a}]}{\bar{a}^2}\right] \left[1 + t(t - 1)\frac{\mathrm{Var}[\hat{b}]}{\bar{b}^2}\right] + \mathcal{O}(\delta_{a}^3, \delta_{b}^3) \\
&= \bar{\lambda}_j \left[1 + t(t + 1)\frac{\mathrm{Var}[\hat{a}]}{\bar{a}^2} + t(t - 1)\frac{\mathrm{Var}[\hat{b}]}{\bar{b}^2}\right] + \mathcal{O}(\delta_{a}^3, \delta_{b}^3).
\end{split}
\end{equation}
Substituting Eq.~\eqref{eq:f_expect} into Eq.~\eqref{eq:f_bias}, we can derive the bias of the Pauli fidelity estimation,
\begin{equation}\label{eq:lambda_bias}
\Delta \lambda_j = \abs{\bar{\lambda}_j \left[t(t + 1)\frac{\mathrm{Var}[\hat{a}]}{\bar{a}^2} + t(t - 1)\frac{\mathrm{Var}[\hat{b}]}{\bar{b}^2}\right]} + \mathcal{O}(\delta_{a}^3, \delta_{b}^3).
\end{equation}


Recall that the assumptions of Eq.~\eqref{eq:ab_assumptions} are established with specific failure probabilities. In order to calculate the confidence intervals for the aforementioned assumptions, we apply Bernstein’s Inequality,
\begin{lemma}[Bernstein’s inequality \cite{BernsteinIneq}]
Consider a set of $n$ independent random variables $\{X_1, \cdots, X_n\}$ with $\forall i, X_i \leq b$. Let $X = \frac{1}{n}\sum_i X_i$, denote $\mathbb{V}^2 = n^{-1} \sum_{i = 1}^n\mathrm{Var}(X_i)$, then $\forall\epsilon > 0$,
\begin{equation}
\Pr(|X - \mathbb{E}[X]| > \epsilon) \leq 2\exp(-\frac{n\epsilon^2 / 2}{\mathbb{V}^2 + b\epsilon / 3}).
\end{equation}
\end{lemma}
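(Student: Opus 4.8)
The plan is to prove this via the Cramér--Chernoff (exponential moment) method, the standard route for sub-exponential concentration. First I would reduce to a one-sided statement. Centering $Y_i := X_i - \mathbb{E}[X_i]$ gives independent, mean-zero variables that inherit the boundedness of the $X_i$; applying the argument below to both $\{Y_i\}$ and $\{-Y_i\}$ and taking a union bound produces the prefactor $2$ and the two-sided tail. It therefore suffices to control $\Pr(\sum_i Y_i > n\epsilon)$. For any $\lambda > 0$, Markov's inequality applied to the exponentiated sum together with independence gives
\begin{equation}
\Pr\left(\sum_i Y_i > n\epsilon\right) \leq e^{-\lambda n\epsilon}\,\mathbb{E}\!\left[e^{\lambda\sum_i Y_i}\right] = e^{-\lambda n\epsilon}\prod_i \mathbb{E}\!\left[e^{\lambda Y_i}\right].
\end{equation}

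The crux is bounding each moment generating factor. Expanding $\mathbb{E}[e^{\lambda Y_i}]$ as a power series, using $\mathbb{E}[Y_i]=0$ to annihilate the linear term, bounding the higher moments by $\mathbb{E}[Y_i^k] \leq \mathrm{Var}(X_i)\,b^{k-2}$ (since $\mathrm{Var}(Y_i)=\mathrm{Var}(X_i)$ and $|Y_i|\le b$), and invoking the elementary estimate $k! \geq 2\cdot 3^{k-2}$ to resum the tail geometrically, I would obtain the Bernstein moment bound
\begin{equation}
\mathbb{E}\!\left[e^{\lambda Y_i}\right] \leq \exp\!\left(\frac{\mathrm{Var}(X_i)\,\lambda^2/2}{1 - \lambda b/3}\right), \qquad 0 < \lambda < 3/b .
\end{equation}
Taking the product over $i$ and writing $\sum_i \mathrm{Var}(X_i) = n\mathbb{V}^2$ then yields $\Pr(\sum_i Y_i > n\epsilon) \leq \exp\!\big(-\lambda n\epsilon + \tfrac{n\mathbb{V}^2\lambda^2/2}{1-\lambda b/3}\big)$.

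Finally I would optimize the free parameter. The choice $\lambda = \epsilon/(\mathbb{V}^2 + b\epsilon/3)$, which one checks lies in $(0,3/b)$, makes $1-\lambda b/3 = \mathbb{V}^2/(\mathbb{V}^2+b\epsilon/3)$ and collapses the variance contribution to $n\lambda\epsilon/2$, so the exponent becomes
\begin{equation}
-\lambda n\epsilon + \frac{n\lambda\epsilon}{2} = -\frac{n\lambda\epsilon}{2} = -\frac{n\epsilon^2/2}{\mathbb{V}^2 + b\epsilon/3},
\end{equation}
which is exactly the claimed rate; combining the two one-sided tails restores the factor $2$.

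The main obstacle is the moment-generating-function estimate of the third step: the resummation of $\sum_{k\ge2}(\lambda b)^k/k!$ into the closed form $\tfrac{\lambda^2/2}{1-\lambda b/3}$ must respect the constraint $\lambda b < 3$, and it is precisely the constant $1/3$ arising there that propagates into the $b\epsilon/3$ denominator of the final bound, so the optimization and this resummation are tightly linked. Care is also needed with the boundedness hypothesis: the moment comparison uses a two-sided control $|Y_i|\le b$, so I would either read $X_i \le b$ in that centered, two-sided sense, or alternatively derive the sharper Bennett form first and relax it, which keeps both the upper and lower tails under the same constant.
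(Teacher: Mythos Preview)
The paper does not supply its own proof of this lemma: Bernstein's inequality is stated as a classical result with a citation and is then invoked as a black box to derive the subsequent proposition. There is therefore no in-paper argument to compare against.

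Your sketch is the standard Cram\'er--Chernoff derivation and is essentially correct. One point you already flag deserves emphasis: the lemma as stated assumes only the one-sided bound $X_i \le b$, whereas your moment estimate $\mathbb{E}[Y_i^k] \le \mathrm{Var}(X_i)\, b^{k-2}$ genuinely needs a two-sided control such as $|Y_i| \le b$ (or the usual Bernstein condition $\mathbb{E}|Y_i|^k \le \tfrac{k!}{2}\,\mathrm{Var}(X_i)\,b^{k-2}$). With only $X_i \le b$ the lower tail is not controlled and the inequality as written can fail. This is a defect of the statement as quoted in the paper, not of your argument; in the paper's application the random variables are survival probabilities bounded in $[0,1]$, so the two-sided hypothesis holds and the issue is moot there. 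If you want your write-up to stand on its own, either strengthen the hypothesis to $|X_i - \mathbb{E}[X_i]| \le b$ or, as you suggest, derive Bennett's inequality first under the one-sided assumption and then relax.
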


Assume $\forall m, \hat{f}_j(m) \leq 1$, we derive the following proposition.
\begin{proposition}
Let $\hat{f}_j(m)$ be the estimator of $P_j$-weighted probability $\bar{f}_j(m)$ for fixed $K$, $m$, and $j$. Then $\forall\epsilon > 0$,
\begin{equation}
\Pr(|\hat{f}_j(m) - \bar{f}_j(m)| > \epsilon) \leq 2\exp(-\frac{K \epsilon^2 / 2}{\mathrm{Var}[\hat{f}_j(m, s)] + \epsilon / 3}).
\end{equation}
\end{proposition}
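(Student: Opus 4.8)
The plan is to recognize this proposition as a direct corollary of Bernstein's inequality once the sampling structure of step~3--4 of the protocol is made explicit. First I would fix $j$, $m$, and $K$, and set $X_s := \hat{f}_j(m,s)$ for the $K$ gate sequences $s$ drawn to form the sequence set $\mathbb{S}_m$. Because the sequences are sampled uniformly and independently, the $\{X_s\}_{s}$ are i.i.d.\ random variables; by the definition $\bar{f}_j(m)=\mathbb{E}_s[\hat{f}_j(m,s)]$ their common mean is $\bar{f}_j(m)$, their common variance is $\mathrm{Var}[\hat{f}_j(m,s)]$, and their empirical mean is exactly $K^{-1}\sum_s X_s = \hat{f}_j(m)$. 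This matches the setup of the Bernstein lemma with $n=K$ and $X=\hat{f}_j(m)$.

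Next I would verify the boundedness hypothesis. From the definition of the $P_j$-weighted survival probability, $\hat{f}_j(m,s) = \chi_j(P^{(0)})\, d_j\, \Tr(Q_j\,\mathcal{S}_{\mathrm{ccb}}(\rho_{\psi_j}))$ with $d_j = 1$ and $\chi_j(P^{(0)}) \in \{+1,-1\}$, and each single-round estimate of the Pauli expectation value $\Tr(Q_j\,\mathcal{S}_{\mathrm{ccb}}(\rho_{\psi_j}))$ is an empirical average of $\pm 1$ measurement outcomes, hence lies in $[-1,1]$. Therefore $X_s \le 1$ almost surely (indeed $|X_s|\le 1$), so one may take $b=1$ in the lemma. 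With $\mathbb{V}^2 = K^{-1}\sum_{s}\mathrm{Var}(X_s) = \mathrm{Var}[\hat{f}_j(m,s)]$ by the identical-distribution property, the lemma yields
\begin{equation}
\Pr\!\left(|\hat{f}_j(m) - \bar{f}_j(m)| > \epsilon\right) \le 2\exp\!\left(-\frac{K\epsilon^2/2}{\mathrm{Var}[\hat{f}_j(m,s)] + \epsilon/3}\right),
\end{equation}
which is precisely the claimed bound.

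There is no substantive obstacle here; the only care needed is in the modeling step, and that is what I would emphasize. One should define $\hat{f}_j(m,s)$ as the output of a single complete round of the protocol (one sampled sequence with a fixed shot budget), so that across the $K$ rounds the variables are genuinely i.i.d.\ and uniformly bounded in $[-1,1]$; the finite-shot measurement noise inside one round is then automatically folded into $X_s$ and into $\mathrm{Var}[\hat{f}_j(m,s)]$, and no nested concentration argument is required. If one instead insisted on separating the shot noise, a union bound combining a per-sequence concentration step with the Bernstein bound above would also work, but the single-round framing keeps the statement clean and is the route I would take.
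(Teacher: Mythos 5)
Your proposal is correct and follows essentially the same route as the paper: the paper obtains this proposition as an immediate corollary of the stated Bernstein lemma, taking $n=K$, $b=1$ (under the assumption $\hat f_j(m,s)\le 1$), and $\mathbb{V}^2=\mathrm{Var}[\hat f_j(m,s)]$ for the i.i.d.\ sequence estimates. Your additional care in spelling out the i.i.d.\ structure and the boundedness of the single-round estimates is consistent with, and slightly more explicit than, the paper's treatment.
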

Then for the estimators $\hat{a}$ and $\hat{b}$, we can derive the corresponding confidence intervals
\begin{equation}\label{eq:ab_confidence}
\begin{split}
& 
\Pr(|\hat{a} - \bar{a}| > \epsilon_1) \leq 2\exp(-\frac{K_1 \epsilon_1^2 / 2}{\mathrm{Var}[\hat{f}_j(m_1, s_1)] + \epsilon_1 / 3}) := \Upsilon_a(K_1, \epsilon_1), \\
& 
\Pr(|\hat{b} - \bar{b}| > \epsilon_2) \leq 2\exp(-\frac{K_2 \epsilon_2^2}{\mathrm{Var}[\hat{f}_j(m_2, s_2)] + \epsilon_2 / 3}) := \Upsilon_b(K_2, \epsilon_2).
\end{split}
\end{equation}
Substituting Eq.~\eqref{eq:ab_confidence} into Eq.~\eqref{eq:lambda_bias}, we have following lemma

\begin{lemma}\label{thm:ratio_estimator}
For any given Pauli operator $P_j \in \textsf{P}_n$ and some fixed $m_1, m_2, K_1, K2$, the bias of the Pauli fidelity estimation is upper bounded by
\begin{equation}\label{eq:biaseb}
\begin{split}
\Delta\lambda_j 
& \leq \frac{1}{(m_2 - m_1)^2} \abs{
\frac{1 + m_2 - m_1}{K_1\bar{f}_j^2(m_1)} \mathrm{Var}[\hat{f}_j(m_1, s_1)] + \frac{1 - m_2 + m_1}{K_2\bar{f}_j^2(m_2)} \mathrm{Var}[\hat{f}_j(m_2, s_2)]} + \mathcal{O}(\epsilon^3_1, \epsilon^3_2) \\
& := \epsilon_b(m_1, m_2; K_1, K_2; \epsilon_1, \epsilon_2)
\end{split}
\end{equation}
with a failure probability bounded by $\Upsilon_j(K_1, K_2; \epsilon_1, \epsilon_2) = \Upsilon_a(K_1, \epsilon_1) + \Upsilon_b(K_2, \epsilon_2)$.
\end{lemma}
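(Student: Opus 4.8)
The plan is to combine the second-order Taylor estimate already derived in Eq.~\eqref{eq:lambda_bias} with the concentration bounds of Eq.~\eqref{eq:ab_confidence}. First I would observe that $\hat{a}=\hat f_j(m_1)$ and $\hat{b}=\hat f_j(m_2)$ are empirical means of $K_1$ and $K_2$ i.i.d.\ single-sequence estimators, so $\mathrm{Var}[\hat a]=K_1^{-1}\mathrm{Var}[\hat f_j(m_1,s_1)]$ and $\mathrm{Var}[\hat b]=K_2^{-1}\mathrm{Var}[\hat f_j(m_2,s_2)]$. Substituting these into Eq.~\eqref{eq:lambda_bias}, together with the explicit value $t=1/(m_2-m_1)$ so that $t(t+1)=(1+m_2-m_1)/(m_2-m_1)^2$ and $t(t-1)=(1-m_2+m_1)/(m_2-m_1)^2$, and using $0\le\bar\lambda_j\le 1$ (the Pauli fidelities of the channel $\sqrt{\mathcal{U}^{-1}\Lambda_\textsf{P}\mathcal{U}\,\Lambda_\textsf{P}}$ lie in $[0,1]$), immediately yields the leading term of the claimed bound $\epsilon_b(m_1,m_2;K_1,K_2;\epsilon_1,\epsilon_2)$ up to the remainder; note the sign change in $1-m_2+m_1$ for $m_2-m_1>1$ is harmless since the absolute value is taken around the entire bracket.

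Second, I would control the Taylor remainder and attach the failure probability. The expansion in Eq.~\eqref{eq:Taylor_ab} is valid only when $\delta_a=(\hat a-\bar a)/\bar a$ and $\delta_b=(\hat b-\bar b)/\bar b$ are small; by Eq.~\eqref{eq:ab_confidence} (Bernstein's inequality applied to $\hat a$ and $\hat b$), the events $|\hat a-\bar a|\le\epsilon_1$ and $|\hat b-\bar b|\le\epsilon_2$ fail with probabilities at most $\Upsilon_a(K_1,\epsilon_1)$ and $\Upsilon_b(K_2,\epsilon_2)$, respectively. On the intersection of these two events one has $|\delta_a|\le\epsilon_1/\bar a$ and $|\delta_b|\le\epsilon_2/\bar b$, so the $\mathcal{O}(\delta_a^3,\delta_b^3)$ term in Eq.~\eqref{eq:lambda_bias} becomes $\mathcal{O}(\epsilon_1^3,\epsilon_2^3)$; a union bound then shows that the full inequality holds except with probability at most $\Upsilon_j(K_1,K_2;\epsilon_1,\epsilon_2)=\Upsilon_a(K_1,\epsilon_1)+\Upsilon_b(K_2,\epsilon_2)$, which is exactly the statement.

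The routine part is the algebra of the first step; the step that needs the most care is the bookkeeping between the deterministic-looking Taylor bound and its probabilistic validity --- making precise that the remainder is small only on the high-probability event, and that the $\bar a,\bar b$ appearing in the denominators are bounded away from zero (guaranteed by $\bar f_j(m)=A_j\lambda_j^{2m}$ with $A_j>0$ and $\lambda_j>0$ for a sensible gate implementation, and one may further restrict to $m_1,m_2$ in the regime where $\bar f_j(m)$ is not yet swamped by sampling noise). Once these points are handled, the combination of Eqs.~\eqref{eq:lambda_bias} and \eqref{eq:ab_confidence} delivers both the bound $\epsilon_b$ and the failure probability $\Upsilon_j$ directly.
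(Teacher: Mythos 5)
Your proposal is correct and follows essentially the same route as the paper: substitute $\mathrm{Var}[\hat a]=K_1^{-1}\mathrm{Var}[\hat f_j(m_1,s_1)]$, $\mathrm{Var}[\hat b]=K_2^{-1}\mathrm{Var}[\hat f_j(m_2,s_2)]$ and $t=1/(m_2-m_1)$ into the second-order Taylor bias of Eq.~\eqref{eq:lambda_bias}, drop $\bar\lambda_j\le 1$, and attach the union-bounded Bernstein failure probability from Eq.~\eqref{eq:ab_confidence}. Your explicit remark that the Taylor remainder is controlled only on the high-probability event, and that $\bar a,\bar b$ must be bounded away from zero, is a careful articulation of bookkeeping the paper leaves implicit, but it does not change the argument.
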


The notation $\epsilon_b(m_1, m_2; K_1, K2; \epsilon_1, \epsilon_2)$ is abbreviated as $\epsilon_b$ in the following analysis. Next we will compute the confidence interval for the CCB fidelity $\hat{F}_\mathrm{ccb}$ defined in Eq.~\eqref{eq:F_estimator}. Denote
\begin{equation}
\begin{split}
\bar{F}'_\mathrm{ccb} &= \frac{1}{M}\sum_{\{P_j\}} \bar{\lambda}_j, \\
\bar{F}_\mathrm{ccb} &= \frac{1}{4^n} \sum_j \bar{\lambda}_j.
\end{split}
\end{equation}

Assume the Pauli fidelity $0 \leq \bar{\lambda}_j \leq 1$ for all $P_j$, one can apply the Hoeffding's inequality directly, given by
\begin{lemma}[Hoeffding's inequality \cite{HoeffdingIneq}]
Consider a set of $n$ independent random variables $\{X_1, \cdots, X_n\}$ and $\forall i, a_i \leq X_i \leq b_i$. Let $X = \frac{1}{n}\sum_i X_i$, then $\forall\epsilon > 0$,
\begin{equation}
\Pr(|X - \mathbb{E}[X]| > \epsilon) \leq 2\exp(-\frac{2n^2\epsilon^2}{\sum_{i = 1}^n (b_i - a_i)^2}).
\end{equation}
\end{lemma}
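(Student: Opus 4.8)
The statement is the classical two-sided Hoeffding bound, and the plan is to reproduce the standard Chernoff--Hoeffding argument: reduce to a one-sided tail, apply an exponential Markov inequality, factorize the moment generating function using independence, control each single-variable factor with Hoeffding's lemma, and optimize the free exponential parameter.

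First I would bound the upper tail. Set $S := X - \mathbb{E}[X] = \frac{1}{n}\sum_{i=1}^n (X_i - \mathbb{E}[X_i])$; all moment generating functions below are finite because the $X_i$ are bounded. For any $s > 0$, Markov's inequality applied to the nonnegative variable $e^{sS}$ gives
\begin{equation}
\Pr(S > \epsilon) = \Pr\!\left(e^{sS} > e^{s\epsilon}\right) \leq e^{-s\epsilon}\,\mathbb{E}\!\left[e^{sS}\right],
\end{equation}
and by independence $\mathbb{E}[e^{sS}] = \prod_{i=1}^n \mathbb{E}[e^{(s/n)(X_i - \mathbb{E}[X_i])}]$, so it remains to bound each factor.

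The technical heart is Hoeffding's lemma: if $Y$ has $\mathbb{E}[Y] = 0$ and $a \leq Y \leq b$ almost surely, then $\mathbb{E}[e^{tY}] \leq e^{t^2(b-a)^2/8}$ for all $t \in \mathbb{R}$. This is the step I expect to be the main obstacle, being the only nonroutine one. I would prove it via the cumulant generating function $\psi(t) := \log\mathbb{E}[e^{tY}]$: one checks $\psi(0) = 0$, $\psi'(0) = \mathbb{E}[Y] = 0$, and that $\psi''(t)$ equals the variance of $Y$ under the exponentially tilted law $d\mathbb{P}_t \propto e^{tY}\,d\mathbb{P}$, which is supported in $[a,b]$ and hence has variance at most $(b-a)^2/4$; Taylor's theorem with remainder then yields $\psi(t) \leq t^2(b-a)^2/8$. (Alternatively one can bound $e^{tY} \leq \tfrac{b-Y}{b-a}e^{ta} + \tfrac{Y-a}{b-a}e^{tb}$ pointwise by convexity, take expectations, and estimate the resulting explicit function of $t$.) Applying this with $Y = X_i - \mathbb{E}[X_i]$, which lies in an interval of width $b_i - a_i$, and $t = s/n$ gives $\mathbb{E}[e^{(s/n)(X_i-\mathbb{E}[X_i])}] \leq \exp\!\big(s^2(b_i-a_i)^2/(8n^2)\big)$.

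Assembling the pieces, $\Pr(S > \epsilon) \leq \exp\!\big(-s\epsilon + \tfrac{s^2}{8n^2}\sum_{i=1}^n (b_i-a_i)^2\big)$; minimizing the quadratic over $s > 0$ at $s^\star = 4n^2\epsilon / \sum_i (b_i-a_i)^2$ yields $\Pr(S > \epsilon) \leq \exp\!\big(-2n^2\epsilon^2 / \sum_{i=1}^n(b_i-a_i)^2\big)$. Finally, replacing each $X_i$ by $-X_i$ leaves every width $b_i - a_i$ unchanged and turns the upper tail into the lower tail, so the same bound holds for $\Pr(S < -\epsilon)$; a union bound over the two events supplies the factor $2$ and completes the proof. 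Independence is used only in the moment generating function factorization and boundedness only through Hoeffding's lemma, so no additional hypotheses are required.
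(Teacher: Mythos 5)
Your proof is correct: it is the canonical Chernoff--Hoeffding argument (exponential Markov bound, factorization by independence, Hoeffding's lemma for each bounded centered factor, optimization of the exponent, and a union bound for the two-sided statement), and the optimization at $s^\star = 4n^2\epsilon/\sum_i(b_i-a_i)^2$ indeed yields the stated constant $2n^2\epsilon^2/\sum_i(b_i-a_i)^2$. The paper does not prove this lemma at all --- it simply cites it as a standard result --- so there is no alternative derivation to compare against; your argument is exactly the one found in the cited reference.
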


Then we can derive that $\forall\epsilon_M > 0$,
\begin{equation}\label{eq:F_confidence1}
\Pr(|\bar{F}_\mathrm{ccb} - \bar{F}'_\mathrm{ccb}| > \epsilon_M) \leq 2 \exp(-2M \epsilon_M^2).
\end{equation}

According to Lemma \ref{thm:ratio_estimator}, we have
\begin{equation}\label{eq:F_confidence2}
\Pr(|\hat{F}_\mathrm{ccb} - \bar{F}'_\mathrm{ccb}| > \epsilon_b) \leq M \Upsilon_j(K_1, K_2; \epsilon_1, \epsilon_2),
\end{equation}
where $\hat{F}_\mathrm{ccb}$ is the estimator of $\bar{F}_\mathrm{ccb}$, as defined in Eq.~\eqref{eq:F_estimator}. Combine Eq.~\eqref{eq:F_confidence1} and \eqref{eq:F_confidence2} and apply the union bound, we can compute the confidence interval for the process fidelity estimator
\begin{equation}\label{eq:F_confidence_intercal}
\Pr(|\hat{F}_\mathrm{ccb} - \bar{F}_\mathrm{ccb}| > \epsilon_M + \epsilon_b) \leq 2\exp(-M \epsilon_M^2 / 2) + M  \Upsilon_j(K_1, K_2; \epsilon_1, \epsilon_2).
\end{equation}
Let us assume that
\begin{equation}
\begin{split}
& 0 \leq \mathrm{Var}[\hat{f}_j(m_1, s_1)] \leq 1, \\ 
& 0 \leq \mathrm{Var}[\hat{f}_j(m_2, s_2)] \leq 1, 
\end{split}
\end{equation}
then Eq.~\eqref{eq:F_confidence_intercal} can be simplified to
\begin{equation}
\Pr(|\hat{F}_\mathrm{ccb} - \bar{F}_\mathrm{ccb}| > \epsilon_M + \epsilon_b) \leq  2\exp(-M \epsilon_M^2 / 2) + 2M \exp(-\frac{K_1 \epsilon_1^2 / 2}{1 + \epsilon_1 / 3}) + 2M \exp(-\frac{K_2 \epsilon_2^2 / 2}{1 + \epsilon_2 / 3}).
\end{equation}

We further assume that
\begin{equation}
\begin{split}
\frac{1}{2} &< \bar{f}_j(m_1) < 1, \\
\end{split}
\end{equation}
then we can derive the following theorem.

\begin{theorem}\label{thm:M_number}
Consider a CCB implementation with sampling numbers $K_1, K_2$ at sequence lengths $m_1, m_2$ with estimation errors $\epsilon_1, \epsilon_2$, respectively, and the expected survival probability $\bar{f}_j(m_1)$ that satisfies $\forall j, 1/2 < \bar{f}_j(m_1) < 1$. The estimated CCB fidelity, $\hat{F}_\mathrm{ccb}$, is given by the average over $M$ Pauli fidelities $\{\lambda_j\}$. The confidence probability for the CCB fidelity falling into the estimated interval $[\hat{F}_\mathrm{ccb} - \epsilon_M - \epsilon_b, \hat{F}_\mathrm{ccb} + \epsilon_M + \epsilon_b]$ is greater than $1 - \delta$, 
\begin{equation}
\begin{split}
&\Pr(|\hat{F}_\mathrm{ccb} - \bar{F}_\mathrm{ccb}| \leq \epsilon_M + \epsilon_b)\geq 1 - \delta, \\
\end{split}
\end{equation}
with
\begin{equation}
	\begin{split}
&\epsilon_b \leq \frac{4}{K_1} \frac{1}{m_2 - m_1} (\frac{1}{m_2 - m_1} + 1) + \mathcal{O}(\epsilon^3_1, \epsilon^3_2),
	\end{split}
\end{equation}
and 
\begin{equation}
2\exp(-2M \epsilon_M^2) + 2M \exp(-\frac{K_1 \epsilon_1^2 / 2}{1 + \epsilon_1 / 3}) + 2M \exp(-\frac{K_2 \epsilon_2^2 / 2}{1 + \epsilon_2 / 3}) = \delta.
\end{equation}
\end{theorem}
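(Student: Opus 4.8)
The plan is to assemble the estimates already derived in this section into one high-probability statement and then specialize the constants using the three hypotheses $m_2>m_1$, $\mathrm{Var}[\hat f_j(m_i,s_i)]\le 1$, and $1/2<\bar f_j(m_1)<1$. First I would split the total deviation as $\hat F_\mathrm{ccb}-\bar F_\mathrm{ccb}=(\hat F_\mathrm{ccb}-\bar F'_\mathrm{ccb})+(\bar F'_\mathrm{ccb}-\bar F_\mathrm{ccb})$, where $\bar F'_\mathrm{ccb}=\frac1M\sum_{\{P_j\}}\bar\lambda_j$ is the mean of the \emph{true} Pauli fidelities over the $M$ sampled Paulis. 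The second term is the Pauli-sampling error; since each $\bar\lambda_j\in[0,1]$, Hoeffding's inequality controls it as in Eq.~\eqref{eq:F_confidence1}, $\Pr(|\bar F_\mathrm{ccb}-\bar F'_\mathrm{ccb}|>\epsilon_M)\le 2\exp(-2M\epsilon_M^2)$. The first term is the accumulated bias of the $M$ ratio estimators $\hat\lambda_j$ of Eq.~\eqref{eq:ratio_estimator}; by Lemma~\ref{thm:ratio_estimator} each $\hat\lambda_j$ is within $\epsilon_b$ of $\bar\lambda_j$ except with probability $\Upsilon_j(K_1,K_2;\epsilon_1,\epsilon_2)$, so a union bound over the $M$ Paulis gives Eq.~\eqref{eq:F_confidence2}, $\Pr(|\hat F_\mathrm{ccb}-\bar F'_\mathrm{ccb}|>\epsilon_b)\le M\,\Upsilon_j$.

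Next I would combine the two bounds by a further union bound to obtain Eq.~\eqref{eq:F_confidence_intercal}, then bound the two variances $\mathrm{Var}[\hat f_j(m_i,s_i)]$ by $1$ inside $\Upsilon_a,\Upsilon_b$ of Eq.~\eqref{eq:ab_confidence}. This turns the failure probability into a sum of three exponentials in $M$, $K_1$, $K_2$ of exactly the form written in the theorem; equating that sum to $\delta$ then gives $\Pr(|\hat F_\mathrm{ccb}-\bar F_\mathrm{ccb}|\le\epsilon_M+\epsilon_b)\ge 1-\delta$, which is the probabilistic claim.

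It then remains to simplify $\epsilon_b$ itself. Starting from Eq.~\eqref{eq:biaseb}, I would insert $\mathrm{Var}[\hat f_j(m_1,s_1)]\le 1$ and $1/\bar f_j^2(m_1)<4$ (the latter from $\bar f_j(m_1)>1/2$). Because $m_2>m_1$ forces $m_2-m_1\ge1$, the coefficient $1-(m_2-m_1)$ is nonpositive, so the bracket in Eq.~\eqref{eq:biaseb} is a difference of two nonnegative terms; its absolute value is at most the larger one, and the $m_1$-term obeys $\frac{1+(m_2-m_1)}{(m_2-m_1)^2}\cdot\frac{\mathrm{Var}[\hat f_j(m_1,s_1)]}{K_1\bar f_j^2(m_1)}\le\frac{4}{K_1}\cdot\frac{1}{m_2-m_1}\Bigl(\frac{1}{m_2-m_1}+1\Bigr)$, while the $m_2$-term is of the same order (bounded the same way once $\bar f_j(m_2)$ is likewise bounded below in the relevant high-fidelity regime) and is thereby absorbed. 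Taking the maximum over the sampled $P_j$ promotes this to the stated bound on $\epsilon_b=\max_j\Delta\lambda_j$, up to the $\mathcal{O}(\epsilon_1^3,\epsilon_2^3)$ tail already present in Eq.~\eqref{eq:biaseb}.

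The hardest part will be the control of the higher-order remainder rather than the union-bound bookkeeping: Eq.~\eqref{eq:biaseb} is obtained by truncating the Taylor expansion Eq.~\eqref{eq:Taylor_ab} at second order under the heuristic $\delta_a,\delta_b\ll1$ of Eq.~\eqref{eq:ab_assumptions}, and one must verify that on the concentration event of Eq.~\eqref{eq:ab_confidence} the neglected $\mathcal{O}(\delta_a^3,\delta_b^3)$ contributions are genuinely $\mathcal{O}(\epsilon_1^3,\epsilon_2^3)$ and do not escape the claimed interval; combined with the use of independence of $s_1$ and $s_2$ to factor $\mathbb{E}[\hat a^{-t}\hat b^{t}]=\mathbb{E}[\hat a^{-t}]\,\mathbb{E}[\hat b^{t}]$, and with the slightly loose treatment of the $m_2$-term above, this is where the estimate needs the most care.
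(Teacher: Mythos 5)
Your proposal follows essentially the same route as the paper's own argument: the same decomposition into a Pauli-sampling error (controlled by Hoeffding) and a per-Pauli ratio-estimator bias (controlled by the Taylor expansion plus Bernstein, i.e.\ Lemma~\ref{thm:ratio_estimator}), the same union bounds yielding the three-exponential failure probability, and the same simplification of $\epsilon_b$ via $\mathrm{Var}[\hat f_j(m_i,s_i)]\le 1$ and $\bar f_j(m_1)>1/2$. The two soft spots you flag --- the uncontrolled $\mathcal{O}(\delta_a^3,\delta_b^3)$ remainder and the absorption of the $m_2$-term without a corresponding lower bound on $\bar f_j(m_2)$ --- are present in the paper's derivation as well, so they are not gaps relative to it.
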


An simplified informal version of the above theorem is shown in the main text as Theorem \ref{thm:CCB_sample}.

\section{Simulation}\label{sec:simulation}
Here, we shall present the noise model for the simulation in the main text and give more details on the 2-qubit controlled-$(TX)$ gate and the 5-qubit error correcting circuit. In addition, we compare a CAB process and an interleaved character randomized benchmarking (ICRB) process~\cite{Helsen2019characterRB,Xue2019CRB} for benchmarking the 2-qubit CZ gate. We provide its simulation details and the results in the end of this part.

\subsection{Error model}\label{append:errormodel}
The noise channel, $\Lambda_t$, we consider here for the target gate is composed of Pauli channel $\Lambda_0$, amplitude damping channel $\Lambda_1$, and qubit-qubit correlation channel $\Lambda_2$, $\Lambda_t = \Lambda_0 \circ \Lambda_1 \circ \Lambda_2$. The noise channel, $\Lambda_\mathrm{ref}$, we consider for the twirling gate set $\textsf{P}_n$ is a gate-independent Pauli channel, which is negligible compared with $\Lambda_t$.

\begin{enumerate}
\item Stochastic Pauli channel $\Lambda_0$.

Pauli channel $\Lambda_0$ can be written as
\begin{equation}
\Lambda_0(\rho) = \sum_i p_i P_i \rho P_i^{-1},
\end{equation}
where $p_i$ is the Pauli error rate related to the Pauli operator $P_i$. As for $\Lambda_0$, it is equivalent to say that operator $P_i \in \textsf{P}_n$ applies on the density matrix $\rho$ with probability $p_i$. We can further rewrite $\Lambda_0$ in the PTM representation,
\begin{equation}
\Lambda_0 = \sum_i \lambda_i \lket{P_i}\lbra{P_i},
\end{equation}
where $\lambda_i$ is the Pauli fidelity. Pauli channel $\Lambda_0$ in the simulation contains dephasing errors and cross-talk errors. In reality, the fidelity of error channel $\Lambda_t$ is mainly determined by $\Lambda_0$.

\item Amplitude damping channel $\Lambda_1$.

Each qubit in the simulation is subject to an amplitude damping channel,
\begin{equation}
\Lambda_1 = \bigotimes_{i=1}^n \Lambda_i^d,
\end{equation}
where $\Lambda_i^d$ is the  single-qubit damping channel for qubit $i$,
\begin{equation}
\begin{split}
\Lambda_i^d &= K_i^{(0)} \rho K_i^{(0)\dag} + K_i^{(1)} \rho K_i^{(1)\dag}, \\
K_i^{(0)} &= \begin{pmatrix} 1 & 0\\ 0 & \sqrt{1-\alpha_i} \end{pmatrix}, \\
K_i^{(1)} &= \begin{pmatrix} 0 & \sqrt{\alpha_i}\\ 0 & 0 \end{pmatrix},
\end{split}
\end{equation}
with damping parameter $\alpha_i$.

\item Qubit-qubit correlation channel $\Lambda_2$.

The qubit-qubit correlation channel $\Lambda_2$ is a coherent error channel in the simulation,
\begin{equation}
\Lambda_2 = \bigotimes_{i<j} e^{i \beta_{ij} \mathrm{SWAP}_{ij}},
\end{equation}
where $\beta_{ij}$ is the correlation parameter describing the interacting strength between qubits $i$ and $j$.
\end{enumerate}

\subsection{Simulations for the controlled-$(TX)$ gate}\label{append:2qubit}
The controlled-$(TX)$ gate can be decomposed as
\begin{equation}
\mathrm{CTX} = (I \otimes \sqrt{T}) \mathrm{CNOT} (I \otimes \sqrt{T}^{-1}).
\end{equation}
Then, we can take $I \otimes \sqrt{T}$ as the local gauge transformation and the twirling gate set turns to
\begin{equation}
\textsf{P}_\mathrm{ctx} = (I \otimes \sqrt{T}) \textsf{P}_2 (I \otimes \sqrt{T}^{-1}).
\end{equation}
Consider noise channel $\Lambda_t = \Lambda_0 \circ \Lambda_1 \circ \Lambda_2$ in Section \ref{append:errormodel} for the noisy controlled-$(TX)$ gate. We randomly sample the Pauli fidelities of $\Lambda_0$ from a normal distribution $\mathcal{N}(\mu, \sigma)$, denoted as a $\mathcal{N}(\mu, \sigma)$-Pauli channel, where $\mu$ and $\sigma$ are the mean value and standard deviation. The parameters for $\Lambda_1$ is set to $\alpha_1 = \alpha_2 = 0.005$. The parameter for $\Lambda_2$ is set to $\beta_{12} = 0.01$. In the following discussions, we label the noise channel for the controlled-$(TX)$ with $\Lambda_t(\mu, \sigma)$, since $\Lambda_1$ and $\Lambda_2$ remain the same in all the simulations.

We simulate the CAB and CCB protocols for the noisy  controlled-$(TX)$ gate with 8 different noise channels $\{\Lambda_t(\mu_i, \sigma_i)\}$, the noisy implementations of $\mathrm{CTX}$ and $\mathrm{CTX}^{-1}$ are given by
\begin{equation}
\begin{split}
&\tilde{\mathrm{CTX}} = \mathrm{CTX} \circ \Lambda_t(\mu, \sigma), \\
&\tilde{\mathrm{CTX}^{-1}} = \mathrm{CTX}^{-1} \circ \Lambda_t(\mu, \sigma).
\end{split}
\end{equation}
The error parameters are taken as $\{(\mu_i, \sigma_i)\} =$ \{(0.995, 0.001), (0.990, 0.002), (0.980, 0.003), (0.970, 0.004), (0.960, 0.005), (0.950, 0.006), (0.940, 0.007), (0.930, 0.008)\}. Take the $\mathcal{N}(0.998, 0.001)$-Pauli channel as for the noise channel of the twirling gate set $\Lambda_\mathrm{ref}$ and then denote the noisy implementation of the twirling gate set as,
\begin{equation}
\tilde{\textsf{P}}_\mathrm{ctx} = \Lambda_\mathrm{ref} \circ \textsf{P}_\mathrm{ctx}.
\end{equation}
Take the $\mathcal{N}(0.998, 0.001)$-Pauli channel as for SPAM error channel $\Lambda_\mathrm{spam}$ and then  denote the noisy implementations of the initial state $\ket{\psi}$ and measurement $Q$ as
\begin{equation}
\begin{split}
\rho_\psi &= \Lambda_\mathrm{spam}(\ketbra{\psi}), \\
\tilde{Q} &= Q \circ \Lambda_\mathrm{spam}.
\end{split}
\end{equation}

The simulation procedures for CAB run as follows.
\begin{enumerate}
\item 
For each noise channel $\Lambda_t(\mu_i, \sigma_i)$, select a set of sequence length $\{m\} = \{1, 2, \cdots, m_\mathrm{max}\}$, where $m_\mathrm{max}$ satisfies $\mu_i^{m_\mathrm{max}} \approx \mu_i / 3$.

\item For each sequence length $m$, sample $K = 50$ random gate sequences $\{(C, P_\mathrm{ctx}^{(1)}, \cdots, P_\mathrm{ctx}^{(2m)})\}$, where $C$ and $P_\mathrm{ctx}^{(i)} (1 \leq i \leq 2m)$ are sampled uniformly at random from $\textsf{C}_1^{\otimes 2}$ and $\textsf{P}_\mathrm{ctx}$, respectively. For each gate sequence, the noisy implementation in PTM is given by
\begin{equation}
\tilde{\mathcal{S}}_\mathrm{cab} = \Lambda_\mathrm{ref}\mathcal{C}^{-1}  \Lambda_\mathrm{ref}  \mathcal{U}_{\mathrm{inv}} \mathrm{CTX}^{-1}  \Lambda_t \Lambda_\mathrm{ref}  \mathcal{P}_\mathrm{ctx}^{(2m)} \cdots \mathcal{P}_\mathrm{ctx}^{(2)}\mathrm{CTX}  \Lambda_t \Lambda_\mathrm{ref}  \mathcal{P}_\mathrm{ctx}^{(1)}\mathcal{C},
\end{equation}
where the inverse gate is given by $\mathcal{U}_{\mathrm{inv}} = (\mathrm{CTX}^{-1} P_\mathrm{ctx}^{(2m)} \cdots \mathrm{CTX}^{-1} P_\mathrm{ctx}^{(2)} \mathrm{CTX} P_\mathrm{ctx}^{(1)})^{-1}$.

\item Compute the survival probability over the $K = 50$ gate sequences for each measurement observable $Q_k \in \{II, IZ, ZI, ZZ\}$
\begin{equation}
f_k(m) = \frac{1}{K}\sum_{\mathcal{S}_\mathrm{cab}} \lbra{\tilde{Q}_k} \tilde{\mathcal{S}}_\mathrm{cab} \lket{\rho_\psi},
\end{equation}
where $\ket{\psi} = \ket{0}^{\otimes 2}$.

\item For each $Q_k$, fit $f_k(m)$ to the function
\begin{equation}
f_k(m) = A_k\mu_k^{2m}.
\end{equation}

\item Estimate the CAB fidelity as
\begin{equation}
F_\mathrm{cab} = \frac{1}{16} (\mu_{II} + 3\mu_{IZ} + 3\mu_{ZI} + 9\mu_{ZZ}).
\end{equation}
\end{enumerate}

The simulation procedures for CCB run as follows.
\begin{enumerate}
\item For each noise channel $\Lambda_t(\mu_i, \sigma_i)$, select a set of sequence length $\{m\} = \{1, 2, \cdots, m_\mathrm{max}\}$, where $m_\mathrm{max}$ satisfies $\mu_i^{m_\mathrm{max}} \approx \mu_i / 3$.

\item Sample $M = 10$ operators $\{P_j\}$ uniformly at random from $\textsf{P}_\mathrm{ctx}$.

\item For each $P_j$ at each sequence length $m$, sample $K = 50$ random gate sequences $\{(P_\mathrm{ctx}^{(0)}, P_\mathrm{ctx}^{(1)}, \cdots, P_\mathrm{ctx}^{(2m)})\}$, where $P_\mathrm{ctx}^{(i)}$ are sampled uniformly at random from $\textsf{P}_\mathrm{ctx}$. For each gate sequence, the noisy implementation in PTM is given by
\begin{equation}
\tilde{\mathcal{S}}_\mathrm{ccb} = \Lambda_\mathrm{ref}  \mathcal{U}_{\mathrm{inv}} \mathrm{CTX}^{-1}  \Lambda_t \Lambda_\mathrm{ref}  \mathcal{P}_\mathrm{ctx}^{(2m)} \cdots \mathcal{P}_\mathrm{ctx}^{(2)}\mathrm{CTX}  \Lambda_t \Lambda_\mathrm{ref}  \mathcal{P}_\mathrm{ctx}^{(1)}\mathcal{P}_\mathrm{ctx}^{(0)},
\end{equation}
where the inverse gate is given by $\mathcal{U}_{\mathrm{inv}} = (\mathrm{CTX}^{-1} P_\mathrm{ctx}^{(2m)} \cdots \mathrm{CTX}^{-1} P_\mathrm{ctx}^{(2)} \mathrm{CTX} P_\mathrm{ctx}^{(1)})^{-1}$.

\item 
Compute the survival probability over the $K = 50$ gate sequences for each measurement $P_j$,
\begin{equation}
f_j(m) = \frac{1}{K}\sum_{\mathcal{S}_\mathrm{ccb}} \chi_j(P^{(0)}) \lbra{P_j} \Lambda_\mathrm{spam}  \tilde{\mathcal{S}}_\mathrm{ccb} \lket{\rho_\psi},
\end{equation}
where $\chi_j(P^{(0)}) = 1$ if $P^{(0)}$ commutes with $P_j$ and -1 otherwise, and $\ket{\psi}$ is the $+1$ eigenstate of $P_j$.

\item Fit $f_j(m)$ to the function
\begin{equation}
f_j(m) = A_j\lambda_j^{2m}.
\end{equation}

\item Estimate the CCB fidelity as
\begin{equation}
F_\mathrm{ccb} = \frac{1}{M}\sum_{\{P_j\}} \lambda_j.
\end{equation}
\end{enumerate}

\subsection{Simulations for the 5-qubit error correcting circuit}\label{append:5qubit}
Here, we take the 5-qubit stabilizer encoding circuit shown in Fig.~\ref{fig:5QubitCircuit} as the target gate $U$, which only contains Clifford gates.

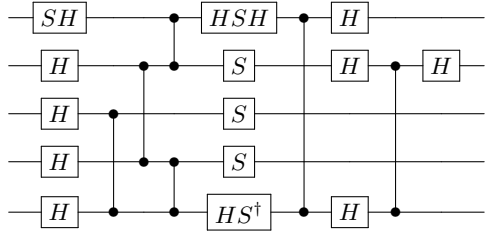
\begin{figure}[!htbp]
	\[
	\Qcircuit @C=1em @R=.7em {
		\lstick{} & \gate{SH} & \qw & \qw & \ctrl{1} & \gate{HSH} & \ctrl{4} & \gate{H} & \qw & \qw & \qw \\
		\lstick{} & \gate{H} & \qw & \ctrl{2} & \control \qw & \gate{S} & \qw & \gate{H} & \ctrl{3} & \gate{H} & \qw \\
		\lstick{} & \gate{H} & \ctrl{2} & \qw & \qw & \gate{S}  & \qw & \qw & \qw & \qw & \qw \\
		\lstick{} & \gate{H} & \qw & \control \qw & \ctrl{1} & \gate{S} & \qw & \qw & \qw & \qw & \qw \\
		\lstick{} & \gate{H} & \control \qw & \qw & \control \qw & \gate{HS^{\dagger}} & \control \qw & \gate{H} & \control \qw & \qw & \qw
	}
	\]
\caption{5-qubit stabilizer encoding circuit.} \label{fig:5QubitCircuit}
\end{figure}

The noise channel we consider here has the form $\Lambda_t = \Lambda_0 \circ \Lambda_1 \circ \Lambda_2$, as presented in Section \ref{append:errormodel}. We set the $\mathcal{N}(0.98, 0)$-Pauli channel as $\Lambda_0$, which can be seen as a depolarizing channel, $\Lambda_\mathrm{dep}(\rho) = p\rho + (1-p)I/d$ where $p = 0.98$. The error parameters $\{\alpha_i\}$ and $\{\beta_{ij}\}$ are sampled uniformly at random from the intervals [0, 0.02] and [0, 0.01], respectively. Then the noisy implementation of the target gate is given by
\begin{equation}
\tilde{\mathcal{U}} = \mathcal{U} \circ \Lambda_t.
\end{equation}

Since the 5-qubit encoding circuit is a Clifford gate, we take the 5-qubit Pauli group $\textsf{P}_5$ as the twirling gate set. For simplicity, we ignore the errors from the SPAM and twirling gates, $\Lambda_\mathrm{spam} = I$ and $\Lambda_\mathrm{ref} = I$.

The simulation procedures for CAB run as follows.
\begin{enumerate}
\item 
For 5-qubit noise channel $\Lambda_t$, select a set of sequence lengths, $\{m\} = \{1, 2, \cdots, 20\}$.

\item For each sequence length $m$, sample $K$ random gate sequences $\{(C, P^{(1)}, \cdots, P^{(2m)})\}$, where $C$ and $P^{(i)}$ are sampled uniformly at random from $\textsf{C}_1^{\otimes 5}$ and $\textsf{P}_5$, respectively. For each gate sequence, the noisy implementation in PTM is given by
\begin{equation}
\tilde{\mathcal{S}}_\mathrm{cab} = \mathcal{C}^{-1} \mathcal{U}_{\mathrm{inv}} \mathcal{U}^{-1} \Lambda_t \mathcal{P}^{(2m)} \cdots \mathcal{P}^{(2)}\mathcal{U} \Lambda_t \mathcal{P}^{(1)}\mathcal{C},
\end{equation}
where the inverse gate is given by $\mathcal{U}_{\mathrm{inv}} = (\mathcal{U}^{-1} P^{(2m)} \cdots \mathcal{U}^{-1} P^{(2)} \mathcal{U} P^{(1)})^{-1}$.

\item 
Compute the survival probability over the $K$ gate sequences for each measurement observable $Q_k \in \{I, Z\}^{\otimes 5}$
\begin{equation}\label{eq:survprob}
f_k(m) = \frac{1}{K}\sum_{\mathcal{S}_\mathrm{cab}} \lbra{\tilde{Q}_k} \tilde{\mathcal{S}}_\mathrm{cab} \lket{\rho_\psi},
\end{equation}
where $\ket{psi} = \ket{0}^{\otimes 5}$.

\item For each $Q_k$, fit $f_k(m)$ to the function
\begin{equation}
f_k(m) = A_k\mu_k^{2m}.
\end{equation}

\item 
Estimate the CAB fidelity as
\begin{equation}
F_\mathrm{cab} = \frac{1}{4^5}\sum_k d_k\mu_k,
\end{equation}
where $d_k$ is the dimension of the $k$th irrep, defined in Eq.~\eqref{eq:CABfidelitysum}.
\end{enumerate}

Note that the target gate is a Clifford gate. In the simulation, we obtain the inverse gate, $U_{inv} = \circ_{i=1}^{20} U^{-1}P_{2i} UP_{2i-1}$, by considering the permutation relation of a Clifford gate acting on Pauli operators. Furthermore, the survival probability for observable $Q_k$ can be obtained from the $Z$ basis measurement result. To be specific, $Q_k$ can be expressed as the linear combination of $\ketbra{z}$, $Q_k = \sum_z t_{kz} \ketbra{z}$, where $z\in \{0,1\}^5$. Then the survival probability in Eq.~\eqref{eq:survprob} is given by
\begin{equation}
\begin{split}
f_k(m) &= \frac{1}{K}\sum_{\mathcal{S}_\mathrm{cab}} \lbra{\tilde{Q}_k} \tilde{\mathcal{S}}_\mathrm{cab} \lket{\rho_\psi}\\
&= \frac{1}{K}\sum_{S_\mathrm{cab}} \Tr[Q_k \tilde{S}_\mathrm{cab} (\ketbra{0})]\\
&= \frac{1}{K}\sum_{S_\mathrm{cab}} \sum_z t_{kz} \Tr[\ketbra{z} \tilde{S}_\mathrm{cab} (\ketbra{0})],
\end{split}
\end{equation}
where $S_\mathrm{cab}$ represents the channel of $\mathcal{S}_\mathrm{cab}$.

For comparison, we also simulate the XEB protocol to benchmark the fidelity of the 5-qubit error correcting circuit with the same noise channel. The simulation procedures for XEB run as follows.
\begin{enumerate}
\item 
For 5-qubit noise channel $\Lambda_t$, select a set of sequence lengths, $\{m\} = \{1, 2, \cdots, 20\}$.
	
\item 
For each sequence length $m$, sample $K$ random gate sequences $\{(C^{(1)}, \cdots, C^{(2m)})\}$, where $C^{(i)}$ are sampled uniformly at random from $\textsf{C}_1^{\otimes 5}$. For each gate sequence, the noisy implementation in PTM is given by
\begin{equation}
\tilde{\mathcal{S}}_\mathrm{xeb} = \mathcal{U}  \Lambda_t  \mathcal{C}^{(2m)} \cdots \Lambda_t  \mathcal{C}^{(2)}\mathcal{U}  \Lambda_t  \mathcal{C}^{(1)}.
\end{equation}

\item Compute the $Z$ basis measurement result for each gate sequence $\mathcal{S}_\mathrm{xeb}$, for $z\in \{0,1\}^5$,
\begin{equation}
f_\mathrm{xeb}(m, z) = \lbra{z}\tilde{\mathcal{S}}_\mathrm{xeb} \lket{\rho_\psi},
\end{equation}
where $\ket{\psi} = \ket{0}^{\otimes 5}$.
	
\item Compute the ideal $Z$ basis measurement result if the gate is ideal for each gate sequence, for $z\in \{0,1\}^5$,
\begin{equation}
f'_\mathrm{xeb}(m, z) =  \lbra{z} \mathcal{U} \mathcal{C}^{(2m)} \cdots \mathcal{C}^{(2)}\mathcal{U} \mathcal{C}^{(1)} \lket{\rho_\psi}.
\end{equation}

\item Compute the average XEB fidelity over the $K$ gate sequences
\begin{equation}
f_\mathrm{xeb}(m) = \frac{1}{K}\sum_{\mathcal{S}_\mathrm{xeb}} \frac{2^n\sum_z f_\mathrm{xeb}(m, z)f'_\mathrm{xeb}(m, z)-1}{2^n \sum_z f_\mathrm{xeb}(m, z)^2-1}.
\end{equation}
	
\item Fit $f_\mathrm{xeb}(m)$ to the function
\begin{equation}
f_\mathrm{xeb}(m) = A_\mathrm{xeb}p^{2m} + B_\mathrm{xeb}.
\end{equation}
The estimation of process fidelity with XEB is given by $p + (1-p)/d^2$.  
\end{enumerate}

\subsection{Comparison between CAB and ICRB for benchmarking a CZ gate}
In this part, we present the comparison of CAB and ICRB for benchmarking a CZ gate. The noise model is the same as the one in Appendix~\ref{append:errormodel}, including a Pauli channel, amplitude damping channel, and a qubit-qubit correlation channel. Specifically, we set the $\mathcal{N}(0.99,0)$-Pauli channel as $\Lambda_0$. For the amplitude damping channel $\Lambda_1$ and the qubit-qubit correlation channel $\Lambda_2$, the error parameters $\{\alpha_i\}$ and $\{\beta_{ij}\}$ are sampled uniformly at random from the intervals [0, 0.01] and [0, 0.01], respectively. Then the noisy implementation of CZ gate is set as
\begin{equation}
    \widetilde{\mathrm{CZ}} = \mathrm{CZ}\circ \Lambda_2 \circ \Lambda_1 \circ \Lambda_0.
\end{equation}
The fidelity of the generated noise channel $\Lambda_2 \circ \Lambda_1 \circ \Lambda_0$ is 0.9864 in our simulation, matching the fidelity of a real CZ gate in experiments. We consider an error probability of 0.02 for state preparation. That is, for each qubit, the actual prepared state is $0.98\ketbra{0}+0.02\ketbra{1}$ for $\ket{0}$ and is $0.02\ketbra{0}+0.98\ketbra{1}$ for $\ket{1}$. For simplicity, we set the measurement to be perfect computational basis measurement.

The simulation of the CAB process is similar to previous cases. The only difference is that we consider the effect of finite number of measurements in this case. That means, after computing the computational basis measurement probability for one sequence, we use this probability to generate a measurement frequency associated with a fixed number of single-shot measurements. Then the measurement frequency is used to compute the survival probability as well as fidelity. For simulating ICRB~\cite{Helsen2019characterRB,Xue2019CRB}, we consider a same measurement setting and take the same number of single-shot measurements for one sample sequence.

In this simulation, the circuit depth of CAB is set as $\{1, 2, 5, 10, 20, 50\}$, corresponding to an overall CZ gate number $\{2, 4, 10, 20, 40, 100\}$. In order to ensure that the numbers of implemented CZ gates in the two protocols are the same, the circuit depth of ICRB is set as $\{2, 4, 10, 20, 40, 100\}$. The sample complexity for each circuit depth is taken from $\{5, 10, 25, 50, 100, 200\}$ and the number of single shots for one sample sequence is taken from $\{100, 200\}$. Then the amount of single-shot measurements for CAB is
\begin{equation}
\text{the number of circuit depths}\times \text{sample complexity for one depth}\times \text{single shots for one sequence}.    
\end{equation}
Due to the additional sample complexity brought by the character gate, the amount of single-shot measurements for ICRB is
\begin{equation}
2^n\times \text{the number of circuit depths}\times \text{sample complexity for one depth}\times \text{single shots for one sequence}.    
\end{equation}
It can be seen that the factor $2^n$ associated with the character gate is an exponential overhead. We separately simulate 50 experiments for each setting and compute the mean and standard deviation of the fidelity of these 50 experiments. The simulation results of the two protocols are shown in Fig.~\ref{fig:CABvsICRB}. The results show that CAB and ICRB have a similar accuracy in benchmarking CZ and CAB is a little better than ICRB in the sense of standard deviation. That means, to estimate the process fidelity of CZ within a given precision, the amount of single-shot measurements for CAB is less than that for ICRB, showing an advantage of CAB in benchmarking experiments.

\begin{figure}[htbp!]
\centering

\subfigure[]{
	\centering
	\includegraphics[width=8cm]{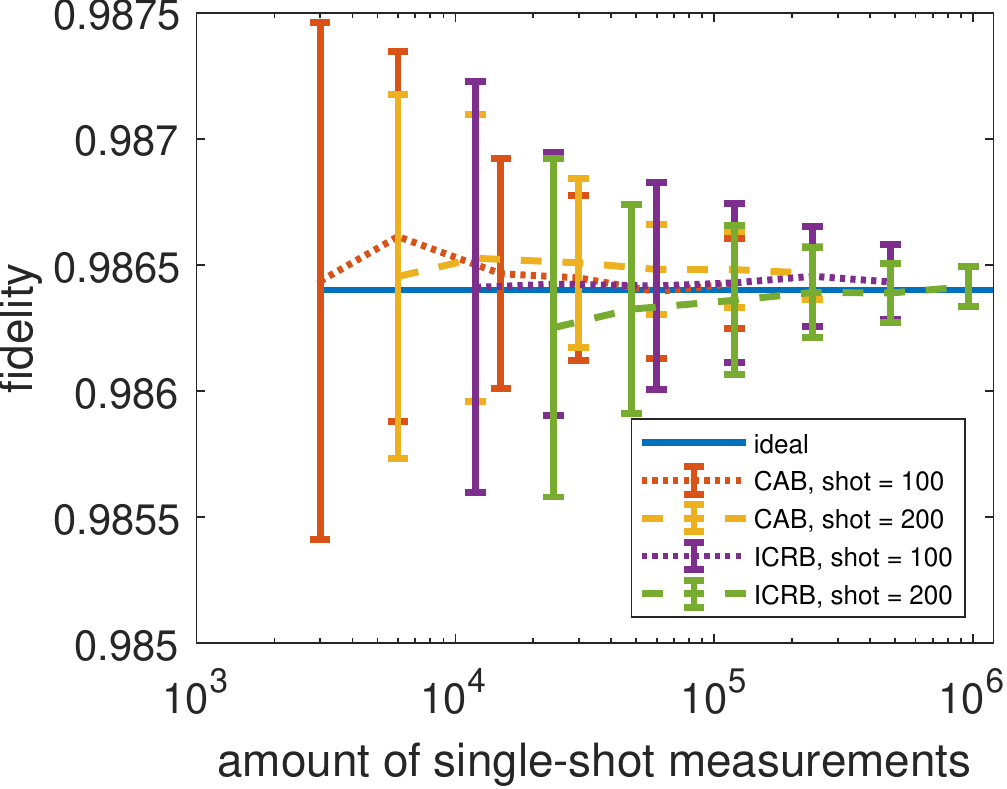}
	\label{fig:CABvsICRBa}
}
\subfigure[]{
	\centering
	\includegraphics[width=8cm]{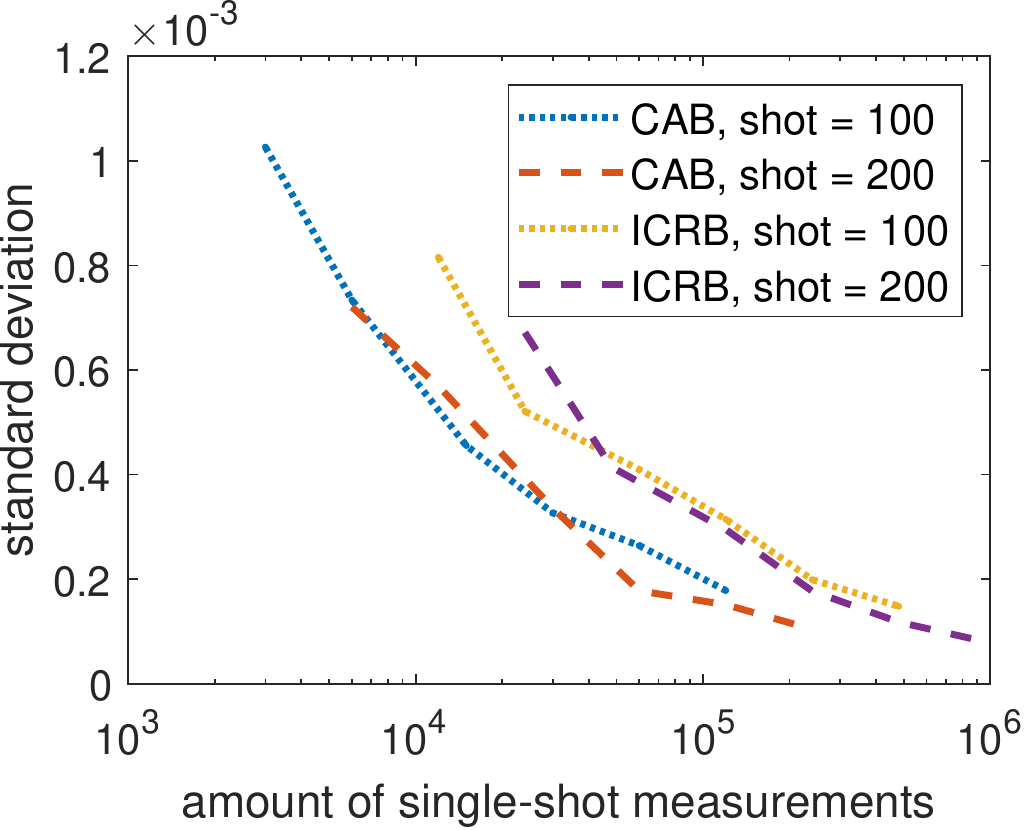}
	\label{fig:CABvsICRBb}
}
\caption{Simulation results of CAB and ICRB for benchmarking a CZ gate. (a) The error bar of the fidelity obtained from 50 experiments via CAB and ICRB protocols. The x-axis represents the amount of single-shot measurements and the y-axis represents the fidelity. The term ``shot" represents the number of single shots associated with one sample sequence. One can see that CAB and ICRB can both obtain an accurate estimation of the process fidelity of a CZ gate. (b) The standard deviation of the fidelity obtained from 50 experiments via CAB and ICRB protocols. The x-axis represents the amount of single-shot measurements and the y-axis represents the standard deviation of the fidelity. The figure shows that CAB has an advantage than ICRB in the sense of standard deviation.}
\label{fig:CABvsICRB}
\end{figure}

\bibliography{bibScalableRB.bib}

\end{document}